\def\red{}
\newcommand{\cF}{{\mathcal F}}
\newcommand{\E}{{\mathbb E}}
\newcommand{\N}{{\mathbb N}}
\newcommand{\R}{{\mathbb R}}
\newcommand{\bfK}{{\mathbf K}}
\newtheorem{prop}{Proposition}[section]
\newtheorem{lemma}{Lemma}[section]
\newtheorem{corollary}{Corollary}[section]
\newtheorem{remark}{Remark}[section]
\title{Risk valuation of quanto derivatives on temperature and electricity}
\begin{document}

\author[1]{Aurélien Alfonsi}
\author[1,2]{Nerea Vadillo}
\affil[1]{CERMICS, Ecole des Ponts, Marne-la-Vall\'ee, France. MathRisk, Inria, Paris,
  France.  }
\affil[2]{AXA Climate, Paris, France. }
\affil[ ]{email: \texttt{aurelien.alfonsi@enpc.fr}, \texttt{nerea.vadillo@axaclimate.com} }
\date{\today}

\maketitle

\begin{abstract}
    This paper develops a coupled model for day-ahead electricity prices and average daily temperature which allows to model quanto weather and energy derivatives. These products have gained on popularity as they enable to hedge against both volumetric and price risks. Electricity day-ahead prices and average daily temperatures are modelled through non homogeneous Ornstein-Uhlenbeck processes driven by a Brownian motion and a Normal Inverse Gaussian Lévy process, which allows to include dependence between them. A Conditional Least Square method is developed to estimate the different parameters of the model and used on real data. Then, explicit and semi-explicit formulas are obtained for derivatives including quanto options and compared with Monte Carlo simulations. Last, we develop explicit formulas to hedge statically single and double sided quanto options by a portfolio of electricity options and temperature options (CDD or HDD). 
\end{abstract}
{\bf Keywords:} Energy quanto options, Weather derivatives, Joint Temperature-Electricity model, Risk hedging

\section*{Introduction}

The increasing impact of climate change on businesses has led to a growing demand for risk transfer instruments to hedge against its consequences. The energy sector is particularly affected by such weather variability. On the one hand, weather variability affects energy production. The availability of wind and solar radiation impacts the production of renewable electricity~\cite{benth2017stochastic}.  Similarly, experienced and predicted temperatures influence demand, as cold snaps increase heating demand in winter and heat waves increase cooling demand in summer~\cite{benth2009information}. This exposure to weather variability is often referred to as volumetric risk~\cite{muller2000weather}. On the other hand, weather forecasts can have a direct impact on energy prices as actors anticipate demand increases and act in advance. This less frequently discussed risk is referred to as price risk~\cite{brockett2005weather}.

Weather derivatives emerged in the 1990s as a response to this need for risk transfer. These financial instruments are based on an underlying weather index and trigger a claim depending on the value of the index at maturity, similar to other financial market derivatives. These instruments experienced significant success in the early 2000s, reaching \$45 billion in notional volume traded in the market in 2006 according to the World Risk Management Association~\cite{WRMA_2006}. Mainly dominated by temperature-based derivatives, up to 95\% of the market, the weather market remained illiquid with small volumes traded in the standardized open market and most of the volume traded OTC~\cite{weagley2019financial}. It also led to extensive research into the modeling of weather derivatives and best pricing methodologies~\cite{jewson2005weather}~\cite{benth2012modeling}~\cite{benth2007putting}~\cite{alaton2002modelling}~\cite{cao1999pricing}~\cite{campbell2005weather}~\cite{brody2002dynamical}.

By 2008, the weather market experienced a significant slowdown, with trading volumes declining to \$11.8 billion in 2011~\cite{WRMA_2011}. This corresponded to a general slowdown of financial markets, but also, according to Pérez-González and Yun~\cite{perez2013risk}, to the birth of new hybrid derivatives that could combine both volumetric and price risk. These new products, also called quantos, were indexed to two underlying parameters, one proxying the volumetric risk, typically a weather parameter, and one proxying the price risk, typically the spot price of electricity, gas or oil. These double-indexed products already existed in the market for other financial assets (foreign exchange, bonds, commodities)~\cite{baxter1996financial}~\cite{hull2003options}. They are technically challenging because they require a convincing model of the joint distribution of the underlyings. Our analysis will focus on finding a model to price temperature and spot electricity price quantos.

Unfortunately, the literature exploring weather quantos is thin. Benth \red{et al.}~\cite{benth2015pricing} use a Heath-Jarrow-Morton approach to price hybrid derivatives combining New York Mercantile Exchange-traded natural gas futures and Chicago Mercantile Exchange-traded heating degree day futures for New York. Matsumoto and Yamada study the optimal design of mixed weather derivatives on wind indices and electricity prices~\cite{yamada2023construction}. Benth and Ibrahim~\cite{benth2017stochastic} develop continuous-time models combining spot prices and logarithmic photovoltaic power production. For quantos combining energy prices and temperature, we should mention Caporin \red{et al.}~\cite{caporin2012model}, who develop a two-dimensional daily ARFIMA-FIGARCH model for energy prices and temperature. They consider both an actuarial and a financial approach and perform simulation-based pricing that leads to important price differences~\cite{caporin2012model}. Cucu \red{et al.}~\cite{cucu2016managing} develop a combined natural gas spot price and temperature model. They address calibration and pricing challenges for temperature-gas swaps. Finally, Benth \red{et al.}~\cite{benth2023pricing} consider bivariate Markov-modulated additive processes with independent non-stationary increments to model quantos combining temperature and energy and electricity and gas prices. Given a known analytical joint characteristic function for the logarithmic futures prices, they derive quanto pricing formulas for the Fast Fourier Transform (FFT) technique.

We begin our analysis by exploring various marginal models for spot energy price and daily temperature. In particular, we dive deep into a large literature on energy and commodity modeling~\cite{deschatre2021survey}~\cite{weron2008forecasting}. First, we examine mean-reverting diffusion models. Pioneering models by Gibson and Schwartz~\cite{gibson1990stochastic}, Schwartz~\cite{schwartz1997stochastic}, and Lucia and Schwartz~\cite{lucia2002electricity} propose two- or three-factors Gaussian diffusion dynamics to model commodity assets. However, the presence of non-Gaussian behaviors, including spikes, jumps, and heavy tails, has led to a refinement of these initial models. One proposal is to extend mean-reverting diffusion processes to Levy noises. Thus, compound Poisson processes have been studied by Geman and Roncoroni~\cite{geman2006understanding}, Cartea and Figueroa~\cite{cartea2005pricing}, and Meyer-Brandis and Tankov~\cite{meyer2008multi}. A second widespread proposal is to move to multi-factor models with Brownian~\cite{lucia2002electricity}~\cite{benth2023pricing} or Levy increments~\cite{benth2007non}~\cite{bjork2002term}. Finally, Benth and Benth explore the relevance of mean-reverting diffusion processes with Normal Inverse Gaussian (NIG) increments~\cite{benth2004normal}. We compare these models and consider different estimation and process characterization challenges for day-ahead auction market clearing prices~\cite{weron2014electricity} for the French and Northern Italian electricity markets. Finally, we propose to model the daily day-ahead log spot prices with mean-reverting processes and NIG increments. 

For daily temperature models, our analysis is less extensive as the reader can refer to Alfonsi and Vadillo~\cite{alfonsi2022stochastic} for a more detailed presentation of daily temperature modeling applied to temperature derivatives pricing. We mainly suggest using a simple mean-reverting Gaussian model as in Benth and Benth~\cite{benth2007putting} to model the daily average temperature for Charles de Gaulle and Milano Linate weather station data. 

Third, we address the challenge of the joint temperature and log spot energy price distribution by proposing a coupled model on the dynamics. In particular, we introduce the Brownian noise of the temperature dynamics into the energy process. This allows the integration of weather information available at the time of price formation, as suggested by Benth and Meyer-Brandis~\cite{benth2009information}, while maintaining flexibility and tractability in both processes. We estimate the marginals and dependence parameters of the joint model using Condition Least Square estimation applied to the characteristic function. $\chi^2$ tests comparing the simulated and observed joint distributions confirm the goodness of fit of the combined model for both French and Northern Italian datasets.

Next, we introduce the pricing of quanto derivatives. Contrary to Benth \red{et al.}~\cite{benth2023pricing}, we do not consider quanto and temperature derivatives market as arbitrage-free complete markets. As noted above, most exchanges are OTC and CME standardized weather derivatives lack daily trading volume~\cite{weagley2019financial}. Temperature and energy quantos do not exist in any open market. Therefore, risk-neutral pricing is arguable and we stick to analyse the risk under the historical probability. Given our combined model, we derive explicit and semi-explicit formulas for the average payoff of futures, swaps and single sided options, here called $\mathcal{E}$-options, and double-sided options on temperature indices (HDD and CDD) and spot electricity price. These formulas are compared with payoff distributions derived from Monte Carlo simulations. Finally, we discuss the static hedging of $\mathcal{E}$-HDD and double sided quanto options in an self-financing portfolio framework, where the option is hedged by HDD and energy spot derivatives. We show that by using our model we can hedge most of the risk of quanto options and reduce their variances.

Hence the contributions of this paper are multiple. First, it develops a convincing joint model for spot energy prices and daily average temperatures. Second, it proposes a method to estimate all the parameters of the model and assess the goodness of fit. Third, it develops pricing formulas under historical probability for futures, swaps, single and double-sided quanto options. Finally, it shows the hedging capability of single and double-sided quanto options.

The paper is organized as follows. Section 1 presents the models for the univariate and combined dynamics of the logarithmic day-ahead spot price and the average daily temperature. Section 2 explores different dynamics for the log day-ahead energy spot price and justifies the modeling choice. Section 3 discusses the estimation challenges. Section 4 introduces the combined model and confirms its goodness of fit. Section 5 addresses the risk valuation of quantos that depend on both energy and temperature and develops a framework for static hedging of $\mathcal{E}$-HDDs and quanto options.

\section{Model and data description}

In this section we introduce our combined models to describe the dynamics of daily day-ahead energy log spot price $(X_t)_{t\geq0}$ and the temperature $(T_t)_{t\geq0}$. In the following, we will also note $S_t=e^{X_t}, \ t\geq0,$ the daily day-ahead energy spot price. We will consider time-continuous  models with the time unit of one day ($\Delta=1$), which follows literature practices as noted by Deschatre~\cite{deschatre2021survey}. Thus, $T_{i\Delta}$ will model the average daily temperature of the $i$-th day, usually defined in financial contracts as the average between the hourly minimum and maximum temperature. 

We will consider the below Model~\eqref{eq:comb} as the combined model for daily day-ahead energy log spot price and average daily temperature.

\begin{equation} \label{eq:comb}
\tag{ETM}
   \begin{dcases}
    d(X_t-\mu_X(t))  &= -\kappa_X (X_t-\mu_X(t)) dt + \lambda \sigma_T dW^T_t + dL^X_t\\
    d(T_t-\mu_T(t))  &= -\kappa_T (T_t-\mu_T(t)) dt+ \sigma_T dW^T_t
 \end{dcases} 
\end{equation}
When $\lambda=0$, the dynamics of $X$ and $T$ are independent. The elements characterizing the dynamics of the log-price~$X$ are: 
\begin{itemize}
    \item The deterministic function $\mu_X:\R_+\to \R$ represents the trend and seasonality component. We assume that
    \begin{equation}
    \mu_X(t) = \beta^X_0 t + \alpha^X_{1} \sin(\xi t) + \beta^X_{1} \cos(\xi t) +  \alpha^{X,DoW}_{DoW(t)}
    \label{eq:mu}
    \end{equation}
    where  $\xi = \frac{2 \pi}{365}$ and $DoW(t)=\lfloor \frac{t}{\Delta} \rfloor$ mod $p$ where $p \in \N^*$. In practice, $p=7$ and $\alpha^{X,DoW}_{DoW(\cdot)}$ corresponds to the constant depending of the day in the week.
    \item The parameter $\kappa_X>0$ corresponds to the mean-reverting (or autoregressive) behaviour\red{, or more precisely to the speed of mean-reversion}.
    \item $L^X$ is a Normal Inverse Gaussian distribution of parameters $(\alpha^X, \beta^X, \delta^X, m^X)$ which properties are described in Appendix~\ref{appendix:NIG}. We will assume that this process is centered ($\E[L^X_t]=0$), which means
$$ m^X+\delta^X\frac{\beta^X}{\gamma^X}=0.$$
\end{itemize}
Similarly, the elements characterizing the dynamics of the temperature~$T$ are: 
\begin{itemize}
    \item The function $\mu_T$ represents the trend and seasonality component. We assume that 
    \begin{equation}
    \mu_T(t) = \alpha^T_0 + \beta^T_0 t + \alpha^T_{1} \sin(\xi t) + \beta^T_{1} \cos(\xi t) \mbox{, where } \xi = \frac{2 \pi}{365}.
    \end{equation}
    \item The parameter $\kappa_T$ corresponds to \red{the speed of mean-reversion toward the trend component.}
    \item $W^T$ is a Brownian motion independent of $L^X$, and $\sigma_T>0$ to the standard deviation of the noise.
\end{itemize}
Last, the parameter $\lambda \in \R$ allows for some dependence between both processes. 
For the temperature, the Ornstein-Uhlenbeck (OU) model with Brownian noise corresponds to a well established model developed by Benth et al.~\cite{benth2007putting} and largely spread on literature. We refer to Alfonsi and Vadillo~\cite{alfonsi2022stochastic} for a recent discussion on temperature models. Section~\ref{Sec_modele1D} presents different models for the electricity spot price and justifies the choice of dynamics of $X$ in~\eqref{eq:comb} when $\lambda=0$. Then, Section~\ref{Sec_modele2D} explores the pertinence of Model~\eqref{eq:comb} and shows that it reproduces well the features of our data.

As it will be often useful in the calculations, we write here the integrated version of Model~\eqref{eq:comb}
\begin{equation} \label{eq:comb_int}
   \begin{dcases}
           X_t - \mu_X(t) &= e^{-\kappa_X (t-s)}(X_s-\mu_X(s))+ \lambda \sigma_T \int_s^t e^{-\kappa_X(t-u)}  dW^T_u + \int_s^t e^{-\kappa_X(t-u)}dL^X_u \\
           T_t - \mu_T(t) &= e^{-\kappa_T (t-s)}(T_s-\mu_T(s))+  \sigma_T \int_s^t e^{-\kappa_T(t-u)} dW^T_u  ,
    \end{dcases}
\end{equation} 
and introduce the notation $\tilde{X}_t=X_t - \mu_X(t)$ and $\tilde{T}_t=T_t - \mu_T(t)$ that will be used through the paper. \red{The covariance of the residuals is given by
\begin{align}\label{cov_residuals}
&Cov\left(X_{t+\Delta}-\mu_X(t+\Delta)-e^{-\kappa_X \Delta}(X_t-\mu_X(t)),T_{t+\Delta}-\mu_T(t+\Delta)-e^{-\kappa_T \Delta}(T_t-\mu_T(t)) \right)\\
&=\lambda \sigma^2_T   \frac{1-e^{-(\kappa_{X}+ \kappa_{T})\Delta}}{\kappa_{X} + \kappa_{T}}. \notag
\end{align}}
\red{
\begin{remark} We present an alternative to Model~\eqref{eq:comb}. Let us consider for the spot electricity price a mean-reverting NIG-Levy process
$$d(X_t-\mu_X(t))  = -\kappa_X (X_t-\mu_X(t)) dt +  dL^X_t,$$
and then use the approximation of Asmussen and Rosinski~\cite[Theorem 2.1 and Example 2.5]{AsRo}:
$$L^X_t \underset{\varepsilon \to 0^+}{\approx} m^\varepsilon t+ \sqrt{\frac{2\delta^X \varepsilon}{\pi}}W^X_t + N^{X,\varepsilon}_t, $$
where $W^X$ is a Brownian motion and $N^{X,\varepsilon}_t= \sum_{s\le t} \mathbf{1}_{|\Delta L^X_s|>\varepsilon}\Delta L^X_s$ is an independent compound Poisson process with jump measure
$\nu^{X,\varepsilon}(dx)=\mathbf{1}_{|x|>\varepsilon}\frac{\delta^X \alpha^X }{\pi |x|}e^{\beta^X x}K_1(\alpha^X|x|)dx.$
Since $L^X_t$ is assumed to be centered, $\widetilde{N}^{X,\varepsilon}_t=N^{X,\varepsilon}_t + m^\varepsilon t$ is a compensated Poisson process. An alternative to Model~\eqref{eq:comb} would then be the following joint model: 
\begin{equation} \label{eq:comb2}
\tag{ETM2}
   \begin{dcases}
    d(X_t-\mu_X(t))  &= -\kappa_X (X_t-\mu_X(t)) dt +   \sqrt{\frac{2\delta^X \varepsilon}{\pi}}dW^X_t + d\widetilde{N}^{X,\varepsilon}_t\\
    d(T_t-\mu_T(t))  &= -\kappa_T (T_t-\mu_T(t)) dt+ \sigma_T dW^T_t
 \end{dcases} 
\end{equation}
with $dW^T_tdW^X_t=\rho dt$. 
By a similar calculation, the covariance between residuals is equal to $\rho \sqrt{\frac{2\delta \varepsilon}{\pi}}\sigma_T   \frac{1-e^{-(\kappa_{X}+ \kappa_{T})\Delta}}{\kappa_{X} + \kappa_{T}}$ and matches exactly~\eqref{cov_residuals} if, and only if
$$\rho^2 \varepsilon= \frac{\pi {\sigma_T}^2 \lambda^2}{2 \delta^X }.  $$
To have $\varepsilon$ as small as possible (i.e. to stay as close as possible to a NIG distribution), one should take $\rho=\text{sign}(\lambda)\in \{-1,1\}$ and $\varepsilon= \frac{\pi {\sigma_T}^2 \lambda^2}{2 \delta^X }$. For the relatively small values of $\lambda$ as we observed on data, models~\eqref{eq:comb} and~\eqref{eq:comb2}
are rather close, but~\eqref{eq:comb} produces larger tail events as shown by Figure~\ref{fig:qq_CPoisson_NIG}. A potential advantage of~\eqref{eq:comb2} is to allow for a separate estimation for the electricity spot price and the temperature. However, the  parameter estimation that we propose for Model~\eqref{eq:comb} is rather simple to implement as shown in Section~\ref{sec:calibration}. Besides, the use of the NIG distribution in Model~\eqref{eq:comb} is more convenient than the Poisson compound distribution to get tractable formulas in Section~\ref{Sec_risk_quanto} and for the simulation.    
\end{remark}}
\begin{figure}
    \centering
    \includegraphics[width=0.5\linewidth]{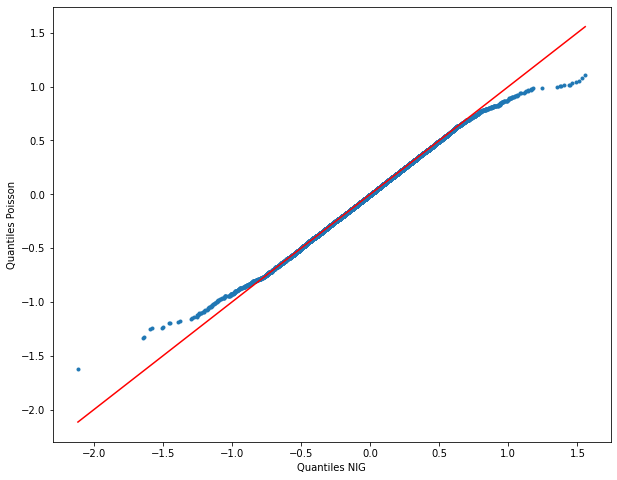}
    \caption{Quantile-quantile plot of empirical distributions with 100,000 samples of $L^X_1$ (abscissa) and $\sqrt{\frac{2\delta^X \varepsilon}{\pi}}W^X_1 + N^{X,\varepsilon}_1$ (ordinate), with the parameters of Table~\ref{tab:parameters_simu} and $\varepsilon= \frac{\pi {\sigma_T}^2 \lambda^2}{2 \delta^X }$. }
    \label{fig:qq_CPoisson_NIG}
\end{figure}

\subsection*{Data description}

The above model is tested in real world data. In particular, we study day-ahead log spot energy prices in France and North Italy from 5th January 2015 to 31st December 2018. This data is extracted from the ENTSO-E Transparency Platform and Gestore Mercati Energetici (GME) and are available hourly until 31st December 2022. We decided to average hourly data into daily data to avoid additional intra-day noise and follow literature practices~\cite{deschatre2021survey}. This granularity choice will equally enable us to match the granularity of other dynamics like the temperature data's. Additionally, we exclude 2019 to 2022 years as energy price time series show considerably erratic paths due two major macroeconomic shocks: the COVID-19 pandemic and the Ukrainian war.

For temperature data, we choose to extract average daily temperature time series for Paris-Charles de Gaulle airport and Milano-Linate airport weather stations. These weather stations are referenced in WMO with the following identification numbers 7157 and 16080. Daily average temperature is defined as the average between the maximum and minimum hourly temperatures. Data is extracted from a private data provider platform. This latter is in charge of the removal of outliers. The data is therefore considered as cleaned in the following of this study.

\section{Overview of different energy models}\label{Sec_modele1D}

The literature on energy modeling is large. We would particularly recommend the surveys of Weron \cite{weron2008forecasting} and Deschatre et al.\cite{deschatre2021survey}. Although there can be exceptions, experts usually focus on either day ahead daily spot or forwards prices. The granularity kept is hence the day and is seen as the average of hourly spot prices.  While forward price market modeling has been explored successfully through HJM-modeling paradigm~\cite{benth2008stochastic}, we will focus on spot or log spot price modeling. In the study of this section, we do not consider structural models nor neural networks models, but we focus rather on stochastic models. Indeed, our objective is to combine energy dynamic modeling with temperature modeling to handle the risk of hybrid options and have a clear understanding of the model parameters. In the following we will consider log spot price to ensure positivity of the energy dynamics.

\subsection{Mean-reverting diffusion models}

The first models describing electricity dynamics are mean-reverting diffusion models. They were first developed by Gibson and Schwartz~\cite{gibson1990stochastic}, Schwartz~\cite{schwartz1997stochastic}  and Lucia and Schwartz~\cite{lucia2002electricity}. They do not focus only on electricity but apply these models to  wider range of energy commodities (crude oil, on-peak electricity spot prices). They are built around the concept of \textit{convenience yield} and model daily commodity through a Ornstein-Uhlenbeck (OU) process with Brownian noise~\cite{vasicek1977equilibrium} as follows:
\begin{equation}
 \begin{cases}
X_t &= \mu(t) + \tilde{X}_t \\
\tilde{X}_t &= -\kappa \tilde{X}_t dt + \sigma d W_t
\end{cases}  \label{eq:diff}
\end{equation}
where $\mu(\cdot)$ corresponds to a deterministic function including trend and seasonality and $W$ to a Brownian noise.\\

There is not much discussion on the form of the deterministic function $\mu(\cdot)$. While several papers reduce this function to a simple constant~\cite{schwartz1997stochastic}~\cite{gibson1990stochastic}~\cite{deng2000stochastic}~\cite{knittel2005empirical}, other suggest different order Fourier expansions~\cite{benth2004normal}~\cite{cartea2005pricing} or piece-wise stepped functions~\cite{lucia2002electricity}. These latter enable to include annual seasonality and capture differences in winter and summer prices, a phenomenon agreed upon literature~\cite{knittel2005empirical}~\cite{escribano2011modelling}. Pawlowsky and Nowak~\cite{pawlowski2015modelling} justifies the presence of a trend component on the deterministic component. We suggest to keep this constant, trend and seasonality deterministic components and test their significance such that we define:
\begin{equation}
    \mu(t) = \beta_0 t + \alpha_{1} \sin(\xi t) + \beta_{1} \cos(\xi t) \mbox{, where } \xi = \frac{2 \pi}{365}
    \label{eq:mu-1}
\end{equation}
This first deterministic equation is implemented and tested on our data. For this we consider the results of the following regression function.
\begin{equation}
\sum_{i=0}^{N-1} \left( X_{i+1} - \mu(i+1) - e^{-\kappa }(X_i-\mu(i)) \right)^2,
\label{eq:ref-1}
\end{equation} 
where $\mu(\cdot)$ is defined as in Equation~\eqref{eq:mu-1}.

Following minimisation of Equation~\eqref{eq:ref-1}, we check significance of the coefficients and residual plots. While all coefficients show  5\% significance, residual plots are less satisfactory.
Indeed, as shown in Figure~\ref{fig:pacf}, residuals show important weekly dependencies. This phenomenon has already been observed by several papers. Following this observation, some suggest to distinguish week and week-end effects~\cite{lucia2002electricity}~\cite{meyer2008multi} while others show statistical significance of daily dummy integration~\cite{bosco2007deregulated}. We also considered alternatives such as additional weekly seasonality terms. The form of $\mu(\cdot)$ minimising the AIC criteria turned to be the defined in Equation~\eqref{eq:mu}. 
\begin{figure}[ht]
    \begin{minipage}[b]{0.45\linewidth}
        \centering
        \includegraphics[width=\textwidth]{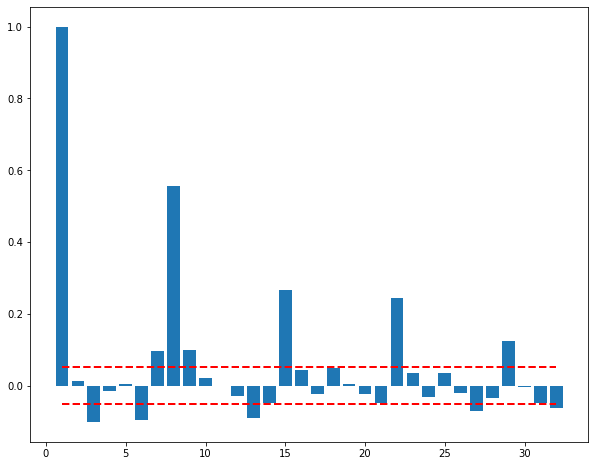}
    \end{minipage}
    \hspace{0.3cm}
    \begin{minipage}[b]{0.45\linewidth}
        \centering
        \includegraphics[width=\textwidth]{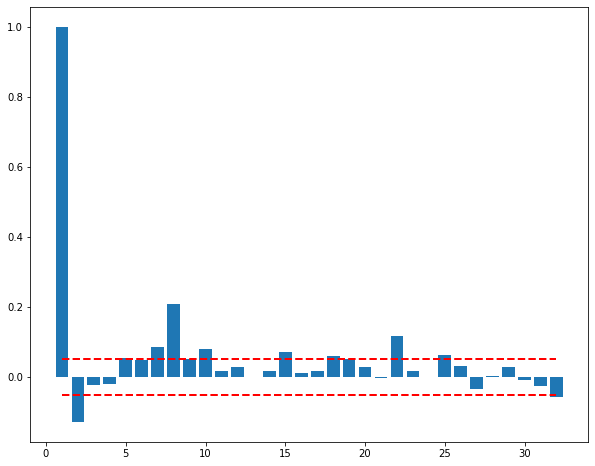}
    \end{minipage}
    \caption{Partial autocorrelation plots of residuals of the Regression~\eqref{eq:ref-1} where $\mu(\cdot)$ defined as in Equation~\eqref{eq:mu-1} (left) and as in Equation~\eqref{eq:mu} (right). The dashed red line corresponds to the 95\% confidence interval from which we can consider the partial autocorrelation coefficient is significantly different from 0.} \label{fig:pacf}
\end{figure} \\
Now let turn to the noise $W$, in our case we consider the residuals of Regression~\eqref{eq:ref-1} where $\mu(.)$ as defined in~\eqref{eq:mu} to assess the characteristics of the distribution of $W$. Initially, the first models suggested Brownian dynamics for such noise following the model of Vasicek~\cite{vasicek1977equilibrium}. However, this proposal has been considerably challenged. Indeed, as can be seen in Figure~\ref{fig:qqplot}, the qqplot of the residuals show significant deviation from normal theoretical quantiles. The residuals seem indeed to present heavier tails than normal residuals. 

\begin{figure}[htb!]  
  \begin{minipage}[b]{0.5\linewidth}
    \centering
    \includegraphics[width=0.8\textwidth]{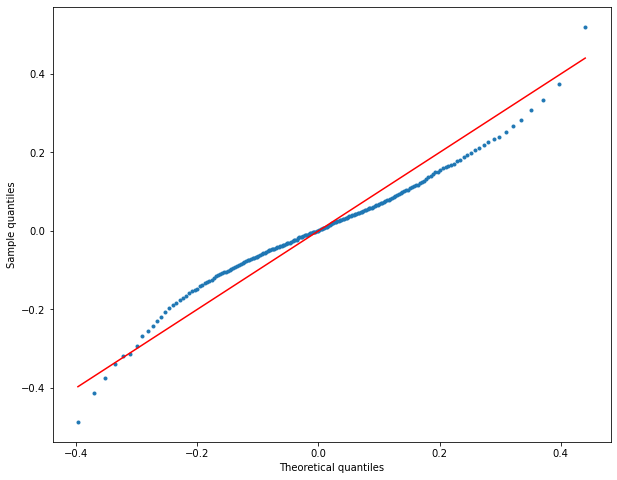}
  \end{minipage}
  \begin{minipage}[b]{0.5\linewidth}
    \centering
    \includegraphics[width=0.8\textwidth]{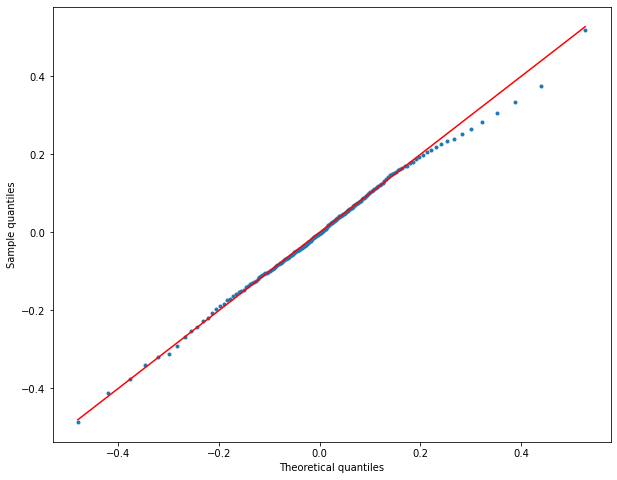}
  \end{minipage} 
  \begin{minipage}[b]{0.5\linewidth}
    \centering
    \includegraphics[width=0.8\textwidth]{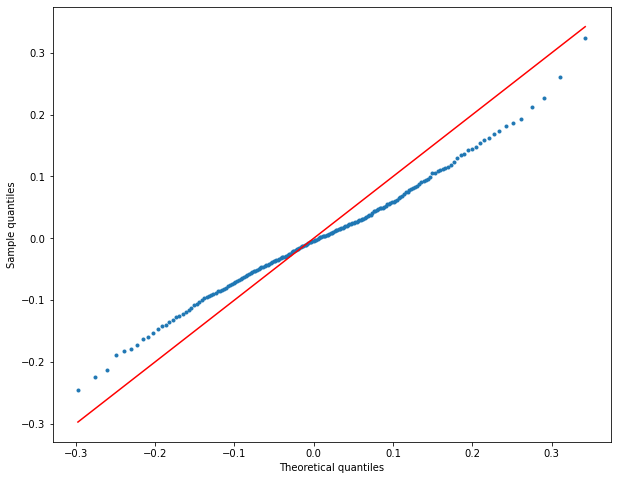}
  \end{minipage}
  \begin{minipage}[b]{0.5\linewidth}
    \centering
    \includegraphics[width=0.8\textwidth]{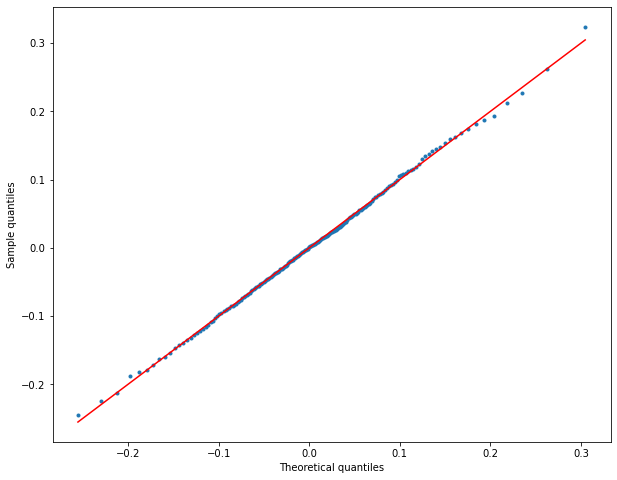} 
  \end{minipage} 
  \caption{Quantile quantile plots for residuals of Regression~\eqref{eq:ref-1} compared with a theoretical quantiles of a normal distribution (left) and of a normal inverse gaussian distribution (right) for French energy (first row) and North Italian Energy (second row).} \label{fig:qqplot}
\end{figure}
Different answers have been giving to this limitations. Some authors suggest to introduce price spikes thanks to jump-diffusion processes~\cite{deng2000stochastic}~\cite{cartea2005pricing}~\cite{geman2006understanding} while others explore multi-factor jump-diffusion models~\cite{meyer2008multi} or alternative distributions for the residuals~\cite{benth2004normal}. Next sections will concentrate on these proposals and on their estimation power.

\subsection{Mean-reverting jump-diffusion models (MRJD)}
Several papers have studied the possibility to consider non-Gaussian increments. A particularly popular one is to combine Brownian motion with a compound Poisson process \cite{deng2000stochastic} that would capture the price spikes usually observed in energy prices, extending~\eqref{eq:diff} as follows
\begin{equation}
 \begin{aligned}
X_t &= \mu(t) + \tilde{X}_t \\
\tilde{X}_t &= -\kappa \tilde{X}_t dt + \sigma dW_t + dJ_t,
\end{aligned}   
\end{equation}
where $J_t$ is a Poisson process of intensity $\lambda$ such that $J_t = \sum_{i=1}^{N_t} \xi_i$ where $\xi_i$ are i.i.d. jump magnitudes that can follow distributions such as log-Normal~\cite{cartea2005pricing}, exponential~\cite{deng2000stochastic} or mixture of exponential distributions~\cite{pawlowski2015modelling}. One of the main drivers for distribution selection is the ability to obtain explicit formulas for forward prices which mainly depends on the jump magnitude assumptions and the time-dependence of the other parameters. Indeed more flexible models, such us Geman and Roncoroni~\cite{geman2006understanding}'s, offer more flexibility on the properties of the Poisson process but do not enable explicit formulas of the forward prices.

However, only few papers analyse the challenge of estimating the parameters of such dynamics. Indeed, in order to estimate the parameters of both the continuous and the spiked noise, we need first to be able to distinguish them. There exist different methods of jump filtering. A first intuitive one is to settle a threshold, for example 3 standard deviations, such that data points within this threshold are considered to belong to the continuous part while the data points above correspond to the jumps. Cartea and Figueroa \cite{cartea2005pricing} and Pawłowski and Nowak~\cite{pawlowski2015modelling} use an iterative method of filtering based on such threshold.  However, the choice of the threshold seems arbitrary and integrates a standard deviation that itself combines continuous and spiked noises. Figure~\ref{fig:filter_jumps} shows the residuals of Regression~\eqref{eq:ref-1} filtered through the above method. The residuals categorized as continuous suit well the normal quantiles however the jumps are pretty sparse and, hence, difficult to fit. We explored this method but we hardly could estimate the jump parameters convincingly and found the filtering criteria rather arbitrary.
\begin{figure}[ht]
    \begin{minipage}[b]{0.3\linewidth}
        \centering
        \includegraphics[width=\textwidth]{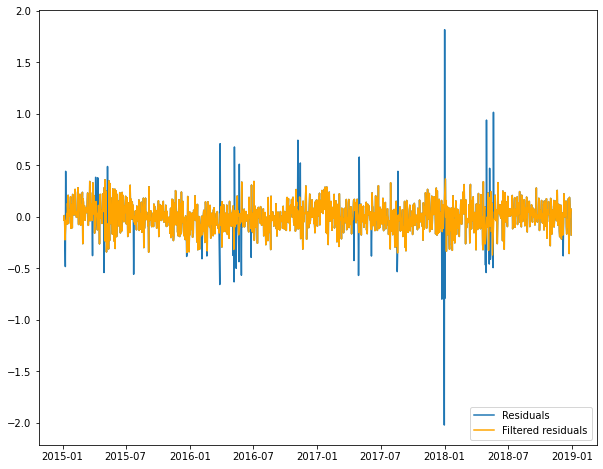}
    \end{minipage}
    \hspace{0.3cm}
    \begin{minipage}[b]{0.3\linewidth}
        \centering
        \includegraphics[width=\textwidth]{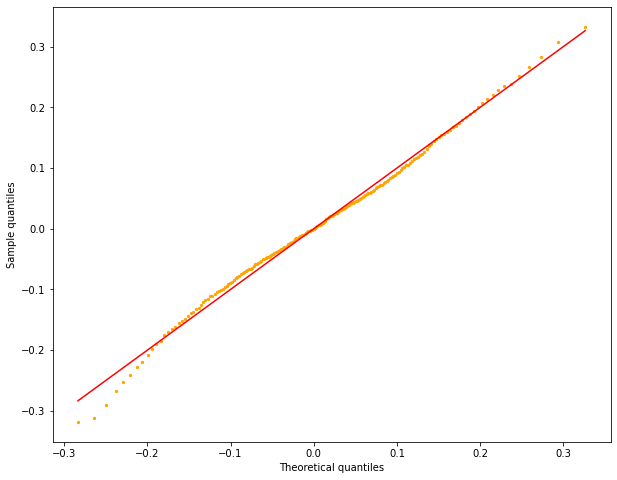}
    \end{minipage}
        \begin{minipage}[b]{0.3\linewidth}
        \centering
        \includegraphics[width=\textwidth]{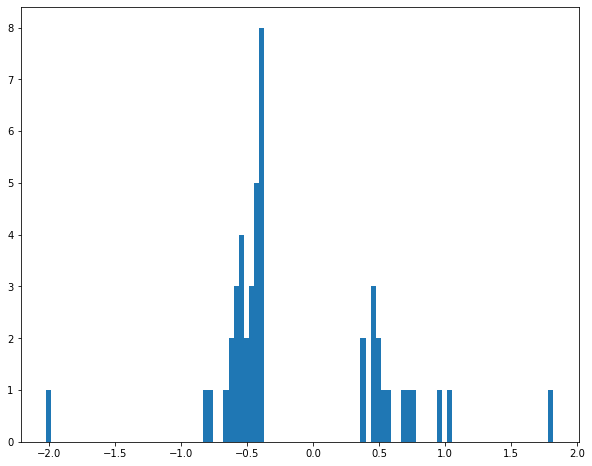}
    \end{minipage}
    \caption{Jump filtering method as described by Cartea and Figueroa~\cite{cartea2005pricing} (left). Filtered continuous residuals qqplot against a normal distribution (center). Histogram of jumps (right).}\label{fig:filter_jumps}
\end{figure}
We also considered alternative jump estimation methods. Meyer-Brandis and Tankov~\cite{meyer2008multi} challenge the threshold method in their two factor model an suggest two alternative filtering algorithm. Their implementation involves a hypothesis on the stochastic nature of the processes and the number of spikes. Alternatively, Deng~\cite{deng2000stochastic} suggests to implement the method of moments which can introduce bias. Finally, Geman and Roncoroni~\cite{geman2006understanding} apply a maximum likelihood estimation applied to an unknown process using a prior reference process. This method is highly dependent on the underlying distribution and choice of priors.

\subsection{Multi-factor mean-reverting models}

Multi-factor models with non-Gaussian increments represent another popular alternative to model erratic dynamics. Two factors and three factors models with Gaussian increments were developed by Schwartz through different collaborations~\cite{schwartz1997stochastic}~\cite{schwartz2000short}~\cite{gibson1990stochastic}~\cite{lucia2002electricity}. The idea behind is that the spot prices could be driven by a long-term and a short-term dynamics, so that the spot price would integrate long-run terms and potential circumstantial tensions on the energy market. The estimation of such models can be performed through classic Kalman filtering. However, they do not answer the issue of bad fitting of the residuals with normal distribution as can be observed on the right figure of Figure~\ref{fig:fit_rs}.

Taking inspiration of Schwartz's models, several papers have explored the possibility to combine multi-factor models with Levy processes~\cite{bjork2002term}~\cite{meyer2008multi}. The adaptation of~\eqref{eq:diff} to a multi-factor model of $n$ factors takes the following form:
\begin{equation}
 \begin{cases}
X_t &= \mu(t) + \sum \tilde{X}_t^n \\
\tilde{X}_t^n &= -\kappa_n \tilde{X}_t^n dt + d L_t^n
\end{cases} \label{eq:diff_rs}
\end{equation}
where $L_t^n$ corresponds to a Levy process.\\
While Björk and Landén~\cite{bjork2002term} deduce analytical expression for forward contracts with compound Poisson processes, they do not address the estimation challenges. On their side, Meyer-Brandis and Tankov~\cite{meyer2008multi} study two factor models with non-defined Levy processes and suggest a calibration based on the autocorrelation function. Figure~\ref{fig:fit_rs} shows the autocorrelation function and the exponential fitting enabling to compute the autoregressive parameters. It can be observed that the autocorrelation function does not present a clear exponential shape. Furthermore the fitting is extremely sensitive to the time horizon considered for the autocorrelation function.

\begin{figure}[!htb] 
    \begin{minipage}[b]{0.48\linewidth}
        \centering
        \includegraphics[width=\textwidth]{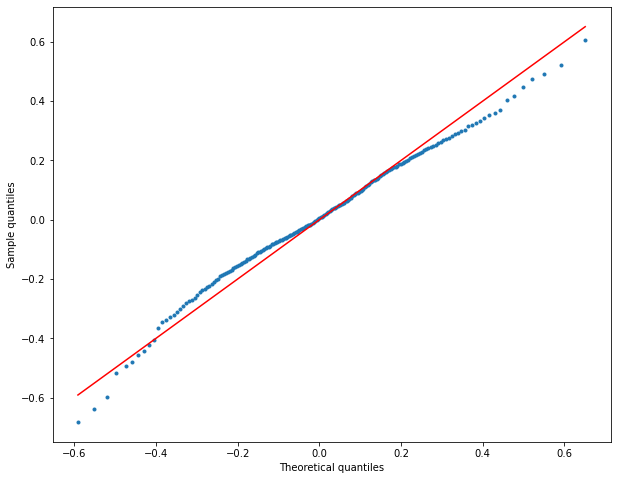}
    \end{minipage}
    \begin{minipage}[b]{0.46\linewidth}
        \centering
        \includegraphics[width=\textwidth]{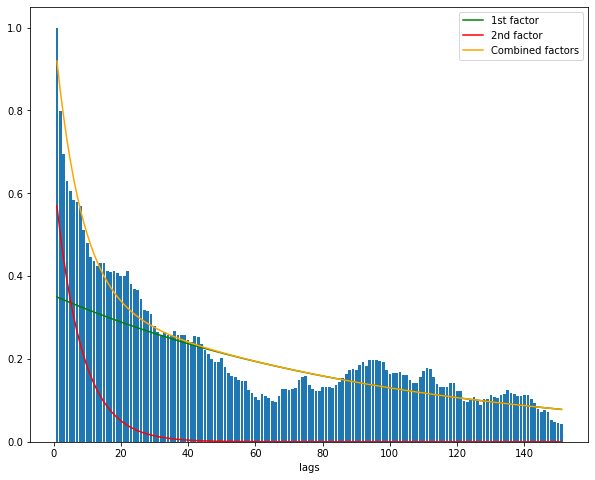}
    \end{minipage}
    \begin{minipage}[b]{0.48\linewidth}
        \centering
        \includegraphics[width=\textwidth]{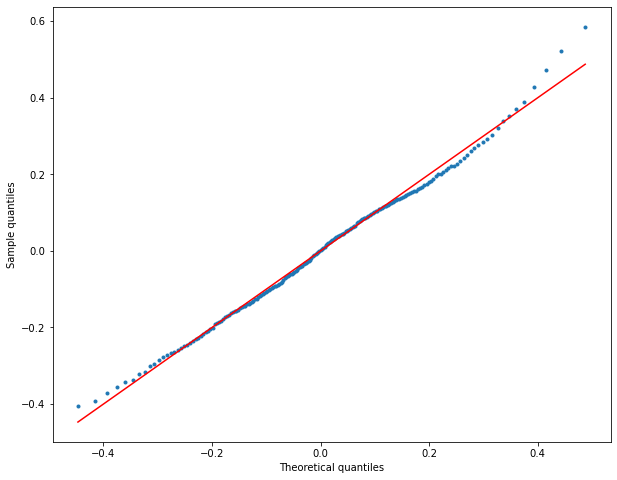}
    \end{minipage}
    \hspace{0.75cm}
    \begin{minipage}[b]{0.46\linewidth}
        \centering
        \includegraphics[width=\textwidth]{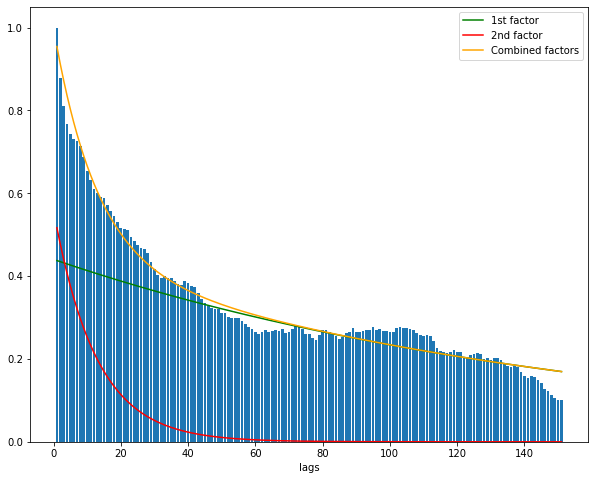}
    \end{minipage}
    \caption{On the left, quantile quantile plots for residuals after removal of the two autocorrelation dynamics compared with a theoretical quantiles of a normal distribution. On the right, two factor model fitted through the autocorrelation function as described in ~\cite{meyer2008multi}. First row corresponds to French energy data while second row to North Italy.} \label{fig:fit_rs}
\end{figure}

Finally, the inclusion of more than one autoregressive factors blocks from obtaining an integrant form of the dynamics. Given the latter and the lack of robust fitting method for multi-factor models with Levy processes, we prefer to focus on one factor models.

\subsection{Mean-reverting diffusion models with NIG noise}
As suggested by Benth and Benth~\cite{benth2004normal}, we explore non-Gaussian Ornstein-Uhlenbeck process. The Normal Inverse Gaussian distribution was first introduced by Barndorff-Nielsen~\cite{barndorff1997normal} and presents the distribution and properties in Appendix~\ref{appendix:NIG}. 
The first motivation to keep this model is a rather good fitting of the residuals as can been seen in Figure~\ref{fig:qqplot}. While residuals corresponding to $\int_s^t e^{-\kappa(t-u)} dL_u$ are not exactly distributed in NIG, the estimation of tinniness $\kappa$ suggested a very close distribution that we find convincing.
The second motivation to keep this model consisted in the relative easy estimation of the parameters. Section~\ref{sec:calibration} will focus on these challenges.\\
Finally, this model can be easily generalized to multivariate distributions which will enable us to foresee combinations of energy and weather parameters dynamics in order to compute hybrid options.\\
As rightly noted by Benth and Benth~\cite{benth2004normal}, OU-NIG dynamics correspond to a first step towards stochastic volatility models that are common models for commodity derivative modelling \cite{deng2000stochastic} \cite{trolle2009unspanned} \cite{benth2011stochastic}. While these models are usually developed for commodity forwards under the Heath–Jarrow–Morton (HJM) framework, the methodology can be replicated to spot price modeling \cite{lari2001mean}. Nevertheless, stochastic volatility models imply observing daily volatility on the modeled variable. While this is possible for energy spot prices, our objective is to combine this dynamic with daily average temperature dynamics for which the volatility is unobserved~\cite{alfonsi2022stochastic}. The following section will therefore focus on a two dimensional Ornstein-Uhlenbeck dynamic with NIG noise.

\section{Separate parameter estimation of the two marginal processes}\label{sec:calibration}

This section focuses on the estimation challenges of our two marginals: the daily day-ahead energy log spot price $(X_t)_{t\geq0}$ and the average daily temperature $(T_t)_{t\geq0}$. We assume that $\lambda=0$ in Section~\ref{sec:calibration}, which gives the independence of these processes and allow to estimate their parameters separately. The joint estimation when $\lambda \not =0$ will be discussed in the next section.  

\subsection{Estimation of $\kappa_X$ and $\mu_X(\cdot)$}\label{subsec:kappa_mu}
Following Klimko {and} Nelson \cite{klimko1978conditional}, the objective of this section is to estimate $\kappa_X$ and $\mu_X(\cdot)$ using Conditional Least Squares Estimation (CLSE). For this, we first write the conditional expectation of  $(X_t)_{t\geq0}$. From
$$X_{t+\Delta}-\mu_X(t+\Delta)=e^{-\kappa_X \Delta }(
  X_t-\mu_X(t))+ \int_t^{t+\Delta} e^{-\kappa_X({t+\Delta}-u)} dL^X_u, $$
  we get
\begin{equation}
    \E[X_{t+\Delta}-\mu_X(t+\Delta)|\cF_t]=e^{-\kappa_X \Delta}(
  X_t-\mu_X(t)),
\end{equation}
since we consider that $L^X$ is centered, that is $m^X +\delta^X \beta^X /\gamma^X=0$. We then get the following expression for the conditional expectation:
\begin{equation} \label{eq:expectation}
    \E[X_{t+\Delta}|\cF_t]=\mu_X(t+\Delta) + e^{-\kappa_X \Delta}(X_t-\mu_X(t))
\end{equation}
where $\mu_X(t) = \beta^X_0 t + \alpha^X_{1} \sin(\xi t) + \beta^X_{1} \cos(\xi t) +  \alpha^{X,DoW}_{DoW(t)}$ where  $\xi = \frac{2 \pi}{365}$ and $DoW(t)=\lfloor \frac{t}{\Delta} \rfloor$ mod $7$.

We can now apply CLSE to the discrete for form of Equation~\eqref{eq:expectation} which boils down to minimise 
\begin{equation} \label{eq:LR_stop}
\sum_{i=0}^{N-1} \left( X_{(i+1)\Delta} - \mathbb{E}  [X_{(i + 1)\Delta} | X_{i\Delta} ] \right)^2.
\end{equation} 
\noindent This can be solved through linear regression, and Proposition~\ref{prop_estimX} gives:
\begin{equation}
\begin{cases}\label{def_estim}
  \hat{\kappa}_X &= - \ln{\hat{\eta} _2} \\
  \hat{\beta}^X_0&= \frac{\hat{\eta}_1}{1-\hat{\eta}_2} \\
  \hat{\alpha}^X_1 &= \frac{\hat{\eta}_3 (\cos(\xi \Delta )-e^{-\hat{\kappa}_X\Delta}) + \hat{\eta}_4 \sin(\xi \Delta )}{(\cos(\xi \Delta )-e^{-\hat{\kappa}_X\Delta})^2 + \sin^2(\xi \Delta )}\\
  \hat{\beta}^X_1 &= \frac{\hat{\eta}_4 (\cos(\xi \Delta )-e^{-\hat{\kappa}_X \Delta }) - \hat{\eta}_3 \sin(\xi \Delta )}{(\cos(\xi \Delta )-e^{-\hat{\kappa}_X \Delta})^2 + \sin^2(\xi \Delta )}\\
  \hat{\alpha}^{X,DoW}_{j}&= \frac{1}{1-e^{- 7\hat{\kappa}_X \Delta }} \sum_{k=0}^6 (\hat{\eta}^{DoW}_{j + k} - \hat{\beta}_0) e^{-(6-k)\hat{\kappa}_X \Delta }, \\
\end{cases}
\end{equation}
where 
\begin{equation}
\label{eq:def_hlambda}
\hat{\eta} = \left( \sum_{i=0}^{N-1} \Xi_{i \Delta} \Xi^\top_{i \Delta}  \right)^{-1} \left(\sum_{i=0}^{N-1} \Xi_{i \Delta}  X_{(i + 1)\Delta} \right) ,
\end{equation}
with $\Xi_{i\Delta} = (i\Delta,  X_{i\Delta}, \sin(\xi  i\Delta), \cos(\xi  i\Delta), (\mathbbm{1}_{\{DoW(i\Delta)=j\}})_{0\le j \le 6} ) \in \R^4\times \{0,1\}^7$ for $i \in \N$ and \\
$(\hat{\eta}^{DoW}_0,\dots,\hat{\eta}^{DoW}_6)=(\hat{\eta}_5,\dots,\hat{\eta}_{11})$ and $\hat{\eta}^{DoW}_j=\hat{\eta}^{DoW}_{\tilde{j}}$, with $\tilde{j}\in \{0,\dots 6\}$ such that $j=\tilde{j} \mod 7$.

Figure~\ref{fig:fit_energy} represents the fitted trend and seasonality component $\mu_X(\cdot)$ in the original logarithm of spot prices. We can observe an important difference of $\mu_X(\cdot)$ between week-ends and week days. In particular, Saturdays and Sundays are particularly cheaper days. Although energy still is traded on weekends, the volume is smaller which explains the different behaviours also noted by Meyer-Brandis and Tankov~\cite{meyer2008multi}.
Finally, Table~\ref{tab:LR_parameter_estimates} shows estimated parameters for French and North Italian data.

\begin{figure}[!htb]
    \begin{minipage}[b]{0.45\linewidth}
        \centering
        \includegraphics[width=\textwidth]{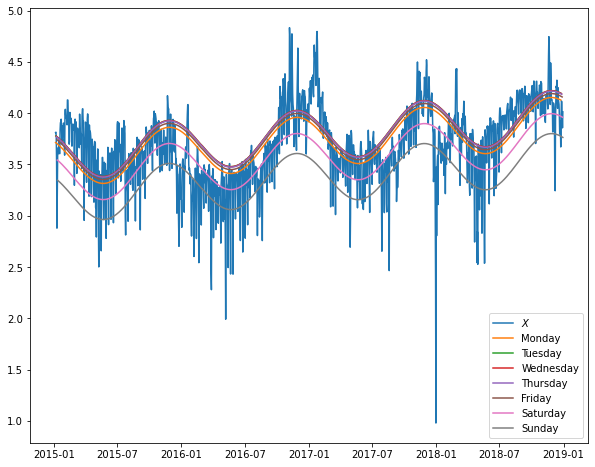}
    \end{minipage}
    \hspace{0.3cm}
    \begin{minipage}[b]{0.45\linewidth}
        \centering
        \includegraphics[width=\textwidth]{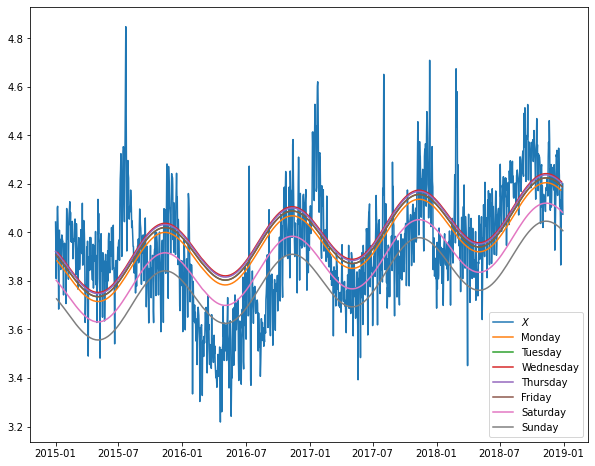}
    \end{minipage}
    \caption{Fitted deterministic curve $\mu_X(\cdot)$ for France (left) and Italy (right).} \label{fig:fit_energy}
\end{figure}

\begin{table}[h]
\centering
\begin{tabular}{cccc}
 & $\hat{\kappa}$ & Residuals mean   & Residuals SD \\ \hline \hline
France & $0.226$ & $9.614 \times 10^{-16}$ & $0.171$  \\
North Italy & $0.129$ & $5.056 \times 10^{-16}$ & $0.099$  
\end{tabular}
\caption{Fitted $\hat{\kappa}$ and residuals of Regression~\eqref{eq:LR_stop} for France (left) and Italy (right) }
\label{tab:LR_parameter_estimates}
\end{table}

\subsection{Parameter estimation for the NIG noise} \label{subsec:est_NIG}
We now move to the estimation of the parameters of the NIG process. Let first recall that:
\begin{align}
\tilde{X}_{t+\Delta}=e^{-\kappa_X \Delta}\tilde{X}_t+ \int_t^{t+\Delta} e^{-\kappa_X (t+\Delta-v)} dL^X_v \mbox{ where } \tilde{X}_t = X_{t} - \mu_X(t)
\end{align}
We can first observe that the residuals we study correspond to $\int_s^t e^{-\kappa_X(t-v)} dL^X_v$ and not the Levy noise per se. There is some study on the distribution of this integral particularly as Tempered Stable Processes, see Sabino~\cite{sabino2022fast}. However, given the inability to compute an explicit formula, we decided to work with approximations of this integrant. The following section will concentrate in three estimation methodologies: CLS, maximum likelihood and EM algorithm applied to a second order approximation of the characteristic function.

Let first study the form of the characteristic function of our process $(\tilde{X}_t)_{t\geq0}$. We have
\begin{equation*}
    \E(e^{i u \tilde{X}_{t+\Delta}})=\E\left(\exp(i u [e^{-\kappa_X \Delta} \tilde{X}_t + \int_t^{t+\Delta} e^{-\kappa_X (t+\Delta-v)} dL^X_v ])\right) 
\end{equation*}
If we focus on the second term and use Lemma 4.1. on Benth and Benth~\cite{benth2004normal}. We call $\varphi$ the characteristic function and we can write as follows:
\begin{equation}\label{eq_def_phi}
    \varphi(u ;\Delta) = \E(\exp(i u  \int_t^{t+\Delta} e^{-\kappa_X (t+\Delta-v)} dL^X_v )) = \exp \left( \int_t^{t+\Delta} \psi(iue^{-\kappa_X(t+\Delta-v)}) dv \right)
\end{equation}
where $\psi$ corresponds to the cumulant function of NIG distributions and is given by
\begin{equation*}
    \psi(x) =  x m^X + \delta^X(\gamma^X - \sqrt{(\alpha^X)^2- (\beta^X+ x)^2})
\end{equation*}
Finally, we have:
\begin{equation} \label{eq:charac_nig}
    \begin{aligned}
 \varphi(u ; \Delta) 
 &= \exp \Bigg(i u m^X \frac{1- e^{-\kappa_X\Delta}}{\kappa_X} + \delta \gamma^X \Delta - \delta^X \int_t^{t+\Delta} \sqrt{(\alpha^X)^2- (\beta^X+ iue^{-\kappa_X(t +\Delta- v)})^2} dv \Bigg)
\end{aligned}
\end{equation}

\paragraph{Estimation through CLS}
In this paragraph, we apply the CLS method developed by Klimko and Nelson~\cite{klimko1978conditional} to the characteristic function $\varphi$. Our objective is to minimise the below function for $u \in \R$.
\begin{equation} \label{eq:CLS_charac}
     \Big|e^{i u ( \tilde{X}_{t+\Delta} -e^{-\kappa_X \Delta} \tilde{X}_t)}- \varphi(u ; \Delta) \Big|^2
\end{equation}
Now let consider a discrete time interval, the objective function in Equation~\eqref{eq:CLS_charac} becomes:
\begin{equation*}
    \sum_{t=0}^{N-1}  \Big|e^{i u ( \tilde{X}_{t+\Delta} -e^{-\kappa_X \Delta} \tilde{X}_t)}- \varphi(u ; \Delta) \Big|^2
\end{equation*}
In our case, we minimise for different values of $u$ the below objective function:
\begin{equation}
    \sum_u  \sum_{t=0}^{N-1}  \Big|e^{i u ( \tilde{X}_{t+\Delta} -e^{-\kappa_X \Delta} \tilde{X}_t)}- \varphi(u ; \Delta) \Big|^2
    \label{eq:charac}
\end{equation}
We compute the characteristic function through numerical integration using the function \texttt{quad} in \texttt{Python} and $u$ is taken in $\{-5,-4,..., 5\}$.

\begin{remark}
    We contemplated to approximate the characteristic function through a Simpson's integration method. We tested how the choice of this approximation impacts the estimation by comparing with with the exact numerical integration. This method lead to similar  results (see Table~\ref{tab:parameter_estimates}) and not particularly quicker (it takes 0.53s (resp. 0.63s) to minimize~\eqref{eq:charac} with Simpson's on France (resp. North Italy) data) instead of 1.14s (resp. 1.25s) with exact integration). Therefore, on the following we keep using the exact numerical integration.
\end{remark}

The minimisation algorithm we apply is the Nelder–Mead algorithm~\cite{nelder1965simplex}. We choose this method because it enables to integrate boundary constraints such as $\alpha^X>0$ and $\alpha^X \geq |\beta^X|$ and shows good convergence. Table~\ref{tab:parameter_estimates} summarizes the results of the minimisation.\\

The reader can also note that the above methodology was validated with simulated data. We simulated $100,000$ Normal Inverse Gaussian simulations with predefined parameters, computed the corresponding integral and verify that the minimisation of metric~\eqref{eq:charac} lead to the correct parameter estimates. The sensitivity of this method was also tested with convincing results. We leave the mathematical proof of the characteristics of these estimators as a possible further research. 
\paragraph{MLE and EM-algorithm estimation through second order approximation}
An alternative approach for Normal Inverse Gaussian parameter estimation is to implement the Expectation-Maximization algorithm (EM) or maximum likelihood estimation (MLE). Unfortunately we do not know explicitly the density of $\int_t^{t+\Delta} e^{-\kappa_X (t+\Delta-v)} dL^X_v$. We hence proceed to an approximation of this density.

For this, let consider the characteristic function $\varphi(\cdot, \Delta)$ in Equation~\eqref{eq:charac_nig}. \red{We use the midpoint rule to approximate the integral and get:
\begin{align*}
    \varphi(u; \Delta) &=  \exp\Bigg(\Delta \Bigg(i u m^X e^{-\kappa_X \Delta/2} + \delta^X \gamma^X  - \delta^X \sqrt{(\alpha^X)^2- (\beta^X+ iue^{-\kappa_X\Delta/2})^2}  \Bigg)  +O(\Delta^3) \Bigg) \\
    &=(1+O(\Delta^2))\times \exp\Bigg(\Delta \Bigg(i u m^X e^{-\kappa_X \Delta/2} + \delta^X \gamma^X  - \delta^X \sqrt{(\alpha^X)^2- (\beta^X+ iue^{-\kappa_X\Delta/2})^2}  \Bigg) 
\end{align*}
We now  apply the transformation,
\begin{equation*}
 \begin{cases}
    \tilde{m}^X &= \Delta m^X \\
    \tilde{\delta}^X &= \Delta \delta^X \\
    \tilde{u} &= u e^{-\kappa_X \Delta/2}
\end{cases}   
\end{equation*}
to obtain the below approximation for the characteristic function:
\begin{align*}
    \E(\exp(i \tilde{u} e^{\kappa_X \Delta/2}  &\int_t^{t +\Delta} e^{-\kappa_X(t+\Delta-v)} dL^X_v )) \\
    &=(1+O(\Delta^2))\times \exp\left(i \tilde{u}\tilde{m}^X  + \tilde{\delta}^X \gamma^X  - \tilde{\delta}^X  \sqrt{(\alpha^X)^2- (\beta^X+ i\tilde{u})^2}  \right).
\end{align*}
Therefore, $\left(e^{\kappa_X \Delta/2}\int_{\ell \Delta}^{(\ell+1)\Delta} e^{-\kappa_X((\ell+1)\Delta-v)} dL^X_v \right)_{\ell\ge 1}$ behaves, at second order approximation in~$\Delta$, as a sample of independent NIG distribution with parameters $\alpha^X$, $\beta^X$, $\tilde{m}^X$ and $\tilde{\delta}^X$.} The EM algorithm or maximum likelihood estimation is therefore applied to this time series. The Maximum Likelihood Estimation (MLE) is applied through the function pre-implemented in the Python library \texttt{scipy.stats} while the EM-algorithm is implemented following Karlis~\cite{karlis2002type}. 
\begin{table}[h]
\begin{minipage}{.5\linewidth}
      \centering
\begin{tabular}{ccccc}
Method & $\hat{\alpha}^X$ & $\hat{\beta}^X$   & $\hat{m}^X$ & $\hat{\delta}^X$ \\ \hline \hline
CLS Simp & 4.222 & -0.361 & 0.011& 0.128\\
CLS Num & 4.222 & -0.361 & 0.011& 0.128\\
EM & 4.066 & -0.340& 0.011 & 0.131 \\
MLE & 4.013 & -0.203 & 0.007 & 0.130 
\end{tabular}
\end{minipage}%
    \begin{minipage}{.5\linewidth}  
    \centering
    \begin{tabular}{ccccc}
Method & $\hat{\alpha}^X$ & $\hat{\beta}^X$   & $\hat{m}^X$ & $\hat{\delta}^X$ \\ \hline \hline
CLS Simp & 13.736 & 0.319 & 0.004 & 0.152  \\
CLS Num & 13.736 & 0.319 & 0.004 & 0.152 \\
EM & 12.576 & 0.533 & -0.006 & 0.140\\
MLE & 12.575 & 0.533 & -0.006 & 0.140
\end{tabular}
\end{minipage}
\caption{Parameter estimation through the CLS, EM and maximum likelihood estimation for French (left) and Italian (right) energy log spot price.}
\label{tab:parameter_estimates}
\end{table}

Table~\ref{tab:parameter_estimates} shows the parameter estimations through the CLS method, EM algorithm and maximum likelihood minimisation. We choose to move forward with the CLS Numerical method as all method lead to similar results and EM and MLE method cannot be directly applied to the combined model. From here onwards, MLE estimates are mainly used to initialize the CLS minimisation problem~\eqref{eq:charac}.

\subsection{Estimation of $\kappa_T$ and $\mu_T(\cdot)$}
As introduced in Section 1, we consider that $(T_t)_{t\geq0}$ follows an Ornstein Uhlenbeck with Gaussian residuals such that:
\begin{equation*}
d(T_t-\mu_T(t))=-\kappa_T(T_t-\mu_T(t))dt + dW^T_t
\label{eq:TNIG}    
\end{equation*}
where $\mu_T(\cdot)$ represents the trend and seasonality component such that $\mu_T(t) = \alpha^T_0 + \beta^T_0 t + \alpha^T_{1} \sin(\xi t) + \beta^T_{1} \cos(\xi t)$, $\kappa_T$ the autoregressive parameter of the OU and $W^T_t$ is a Brownian motion.

First, similarly to Subsection~\ref{subsec:kappa_mu}, we implement CLS estimation to $(T_t)_{t\geq0}$ by minimising:
\begin{equation} \label{eq:LR_temp}
\sum_{i=0}^{N-1} \left( T_{(i+1)\Delta} - \mathbb{E}  [T_{(i + 1)\Delta} | T_{i\Delta} ] \right)^2.
\end{equation}
This can be solved through linear regression to obtain (see \cite[Proposition C.1]{alfonsi2022stochastic})
\begin{equation}
\begin{cases}\label{def_estim_T}
  \hat{\kappa}_T &= - \ln{\hat{\zeta} _2} \\
  \hat{\alpha}^T_0&= \frac{\hat{\zeta}_0 }{1-\hat{\zeta}_2} -  \frac{\hat{\zeta}_1 }{(1-\hat{\zeta}_2)^2}\\
  \hat{\beta}^T_0&= \frac{\hat{\zeta}_1}{1-\hat{\zeta}_2} \\
  \hat{\alpha}^T_1 &= \frac{\hat{\zeta}_3 (\cos(\xi \Delta )-e^{-\hat{\kappa}_T \Delta }) + \hat{\zeta}_4 \sin(\xi \Delta)}{(\cos(\xi \Delta )-e^{-\hat{\kappa}_T \Delta})^2 + \sin^2(\xi \Delta )}\\
  \hat{\beta}^T_1 &= \frac{\hat{\zeta}_4 (\cos(\xi \Delta )-e^{-\hat{\kappa}_T \Delta }) - \hat{\zeta}_3 \sin(\xi \Delta )}{(\cos(\xi \Delta )-e^{-\hat{\kappa}_T\Delta })^2 + \sin^2(\xi \Delta )}, \\
\end{cases}
\end{equation}
where 
\begin{equation}
\label{eq:def_hzeta}
\hat{\zeta} = \left( \sum_{i=0}^{N-1} \Pi_{i \Delta} \Pi^\top_{i \Delta}  \right)^{-1} \left(\sum_{i=0}^{N-1} \Pi_{i \Delta}  T_{(i + 1) \Delta} \right)
\end{equation}
with 
$\Pi_{i\Delta} = (1, i \Delta,  T_{i\Delta}, \sin(\xi  i \Delta), \cos(\xi  i \Delta)) \in \R^5$ for $i \in \N$. \\
Figure~\ref{fig:fit_temp} represents the daily average temperature and fitted function $\mu_T(\cdot)$. Table~\ref{tab:LR_temp_parameter_estimates} presents the estimates of $\kappa_T$ and residuals of Regression~\eqref{eq:LR_temp} for Paris and Milan. We can observe that the residuals are well centered.

\begin{table}[h]
\centering
\begin{tabular}{cccc}
 & $\hat{\kappa}_T$ & Residuals mean   & Residuals SD \\ \hline \hline
France & $0.254$ & $-1.741 \times 10^{-15}$ & $2.138$  \\
North Italy & $0.250$  & $-1.138 \times 10^{-16}$ & $1.638$  
\end{tabular}
\caption{Fitted $\hat{\kappa}$ and residuals of Regression~\eqref{eq:LR_temp} for France (left) and Italy (right).}
\label{tab:LR_temp_parameter_estimates}
\end{table}

\begin{figure}[!htb]
    \begin{minipage}[b]{0.45\linewidth}
        \centering
        \includegraphics[width=\textwidth]{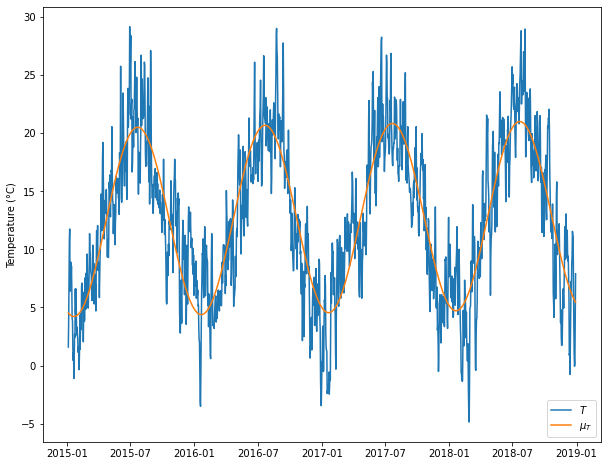}
    \end{minipage}
    \hspace{0.3cm}
    \begin{minipage}[b]{0.45\linewidth}
        \centering
        \includegraphics[width=\textwidth]{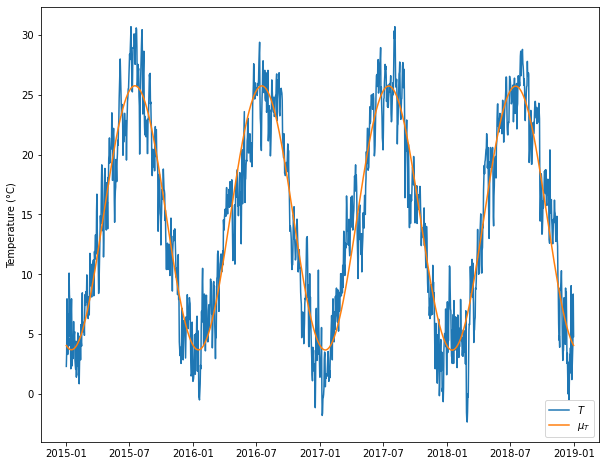}
    \end{minipage}
    \caption{Fitted deterministic curve $\mu_T(\cdot)$ for France (left) and Italy (right).} \label{fig:fit_temp}
\end{figure}
From here onwards, we will note $\tilde{T}$ the deseasonalised temperature such that $\tilde{T}_t = T_t-\mu_T(t)$. 

\subsection{Parameter estimation for the temperature residuals} 

Our first idea was to implement a multivariate NIG for $(\tilde{X}_t, \tilde{T}_t)$. We first then test if $(\tilde{T}_t)_{t\geq0}$ can also follow a Ornstein Uhlenbeck process with NIG noise and finally show that Gaussian noises are more stable and reliable. 

To estimate NIG parameters on $(\tilde{T}_t)_{t\geq0}$, we implement the CLS approach as in Subsection~\ref{subsec:est_NIG}.

\begin{table}[h]
\begin{minipage}{.5\linewidth}
      \centering
\begin{tabular}{ccccc}
Method & $\hat{\alpha}^T$ & $\hat{\beta}^T$   & $\hat{m}^T$ & $\hat{\delta}^T$ \\ \hline \hline
CLS & 14.095 &  $10^{-3}$   & -0.002   & 98.610 \\
\end{tabular}
\hspace{3mm}
\end{minipage}%
    \begin{minipage}{.5\linewidth}  
    \centering
    \begin{tabular}{ccccc}
Method & $\hat{\alpha}^T$ & $\hat{\beta}^T$   & $\hat{m}^T$ & $\hat{\delta}^T$ \\ \hline \hline
CLS & 38.152 &  $10^{-9}$  &  $10^{-9}$   & 111.119
\end{tabular}
\end{minipage}
\caption{Parameter estimation through the CLS for French (left) and North Italian (right) temperature.}
\label{tab:parameter_estimates_temp}
\end{table}

Table~\ref{tab:parameter_estimates_temp} shows the parameter estimations through CLS initiated with the MLE estimates. The results are quite unstable and particularly because $\beta$ shrinks towards 0.  Indeed, we are here confronted to the special case where $(\tilde{T}_t)_{t\geq0}$ nearly follows a Normal distribution. The analysis of quantile quantile plots in Figure~\ref{fig:qqplot_temp} show indeed that a Normal regression fits considerably well to the residuals regression~\ref{eq:LR_temp}. This result is aligned with Larsson's conclusions on German mean temperatures~\cite{larsson2023parametric}.

\begin{figure}[htb!]  
  \begin{minipage}[b]{0.5\linewidth}
    \centering
    \includegraphics[width=0.8\textwidth]{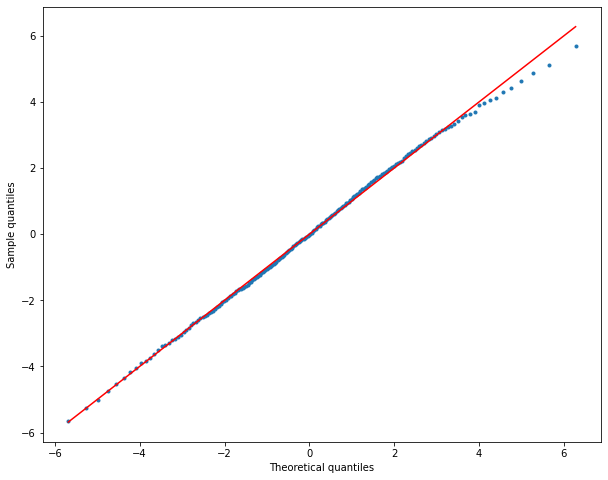}
  \end{minipage}
  \begin{minipage}[b]{0.5\linewidth}
    \centering
    \includegraphics[width=0.8\textwidth]{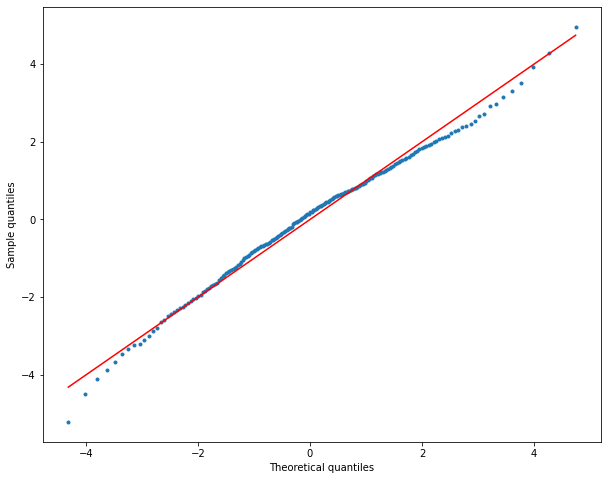}
  \end{minipage}
  \caption{Quantile quantile plots for residuals of regression~\ref{eq:LR_temp} compared with a theoretical quantiles of a normal distribution for Paris temperatures (left) and Milan temperatures (right).} \label{fig:qqplot_temp}
\end{figure}
Hence, we choose to stick to a Normal distribution for $(T_t)_{t\geq0}$. To estimate its parameters, we have that $W^T$ is a Brownian noise $\sim \mathcal{N}(m^T,\sigma_T^2)$. Hence, we have  $\int_t^{t+\Delta} e^{-\kappa_T(t+\Delta-v)} dW^T_v \sim \mathcal{N}(m^T \sqrt{\frac{1 - e^{-2\kappa_T\Delta}}{2\kappa_T}}, \sigma_T^2 \frac{1 - e^{-2\kappa_T\Delta}}{2\kappa_T})$. We can obtain the parameters of the Normal distribution by correcting the residuals of Regression~\eqref{eq:LR_temp} by the factor $\sqrt{\frac{2\kappa_T}{1 - e^{-2\kappa_T\Delta}}}$ . Table~\ref{tab:parameter_estimates_temp_gauss} summarized the results.

\begin{table}[h]
\begin{minipage}{.5\linewidth}
      \centering
\begin{tabular}{ccccc}
Method &  $\hat{m}^T$ & $\hat{\sigma}_T^2$ \\ \hline \hline
MLE & $10^{-15}$ & $2.413$
\end{tabular}
\end{minipage}%
    \begin{minipage}{.5\linewidth}  
    \centering
    \begin{tabular}{ccccc}
Method &  $\hat{m}^T$ & $\hat{\sigma}_T^2$ \\ \hline \hline
MLE & $10^{-16}$ & $ 1.846$
\end{tabular}
\end{minipage}
\caption{Parameter estimation through the maximum likelihood estimation for dynamic of temperature normally distributed for Paris (left) and Milan (right) temperature.}
\label{tab:parameter_estimates_temp_gauss}
\end{table}

\section{Towards a combined model for $(\tilde{X}_t, \tilde{T}_t)$}\label{Sec_modele2D}

In the previous section, we have considered separate models for the electricity spot price and the temperature. This has enabled us to estimate the trend functions $\mu_X(\cdot)$ and $\mu_T(\cdot)$ and the speeds of mean-reversion $\kappa_X$ and $\kappa_T$. We now consider the joint model~\eqref{eq:comb} with $\lambda \not = 0$. First, we show empirical results on the dependence between electricity prices and the temperature. Then, we propose an estimation procedure of the different parameters. Since the temperature follows an autonomous dynamics in Model~\eqref{eq:comb}, the estimation of $\kappa_T$, $\mu_T(\cdot)$ and $\sigma_T$ is unchanged from the previous section. By Proposition~\ref{prop_estimX}, the least square estimation of $\kappa_X$ and $\mu_X(\cdot)$ is also unchanged when $\lambda\not=0$. Therefore, this section focuses on the estimation of $\lambda$ and then on the estimation of the NIG parameters when $\lambda \not =0$.

\subsection{Test of dependence}
This section analyses the significance of the dependence structure between $(\tilde{X}_t)_{t\geq0}$ and $(\tilde{T}_t)_{t\geq0}$. For this we first estimate the Pearson correlation between the residuals $(\tilde{X}_{(i+1)\Delta}-e^{-\kappa_X \Delta} \tilde{X}_{i\Delta})$ and $(\tilde{T}_{(i+1)\Delta}-e^{-\kappa_T \Delta} \tilde{T}_{i\Delta})$. We obtain a correlation equal to $-0.087$ for France and $-0.043$ for north Italy, which suggests a small dependence.  Figure~\ref{fig:residuals_cloud} shows standardized residuals and ranked residuals plots for France and Italy. We cannot observe a clear dependence structure through these plots which supports the low correlations that we have obtained. 

To better analyse the dependence, Table~\ref{tab:freq_tercile} shows the frequencies of ranked residuals given a tercile classification. This time we can observe a slight anti-correlation as left top and bottom low corners are more populated than right top and bottom left corners in both cases.

\begin{figure}[htb!]  
  \begin{minipage}[b]{0.5\linewidth}
    \centering
    \includegraphics[width=0.8\linewidth]{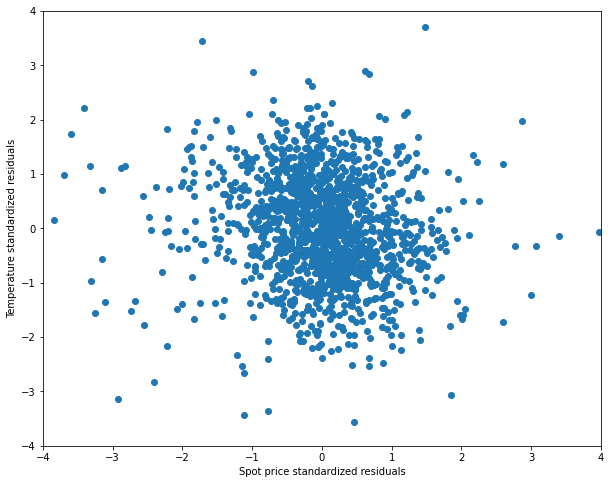} 
    \vspace{4ex}
  \end{minipage}
  \begin{minipage}[b]{0.5\linewidth}
    \centering
    \includegraphics[width=0.8\linewidth]{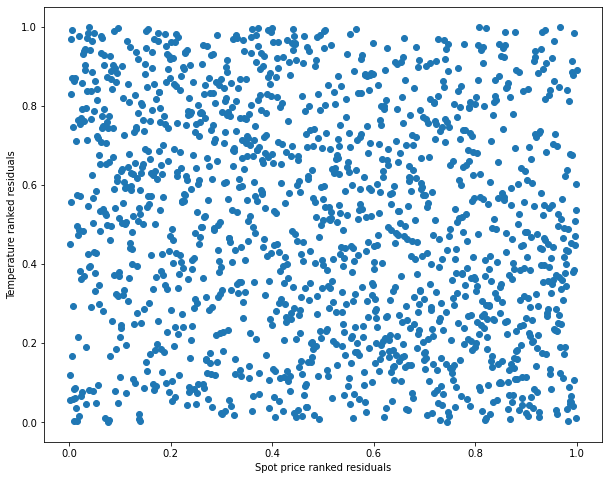} 
    \vspace{4ex}
  \end{minipage} 
  \begin{minipage}[b]{0.5\linewidth}
    \centering
    \includegraphics[width=0.8\linewidth]{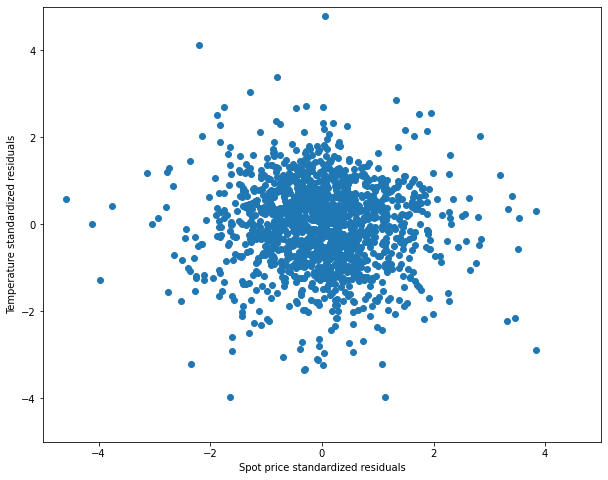} 
    \vspace{4ex}
  \end{minipage}
  \begin{minipage}[b]{0.5\linewidth}
    \centering
    \includegraphics[width=0.8\linewidth]{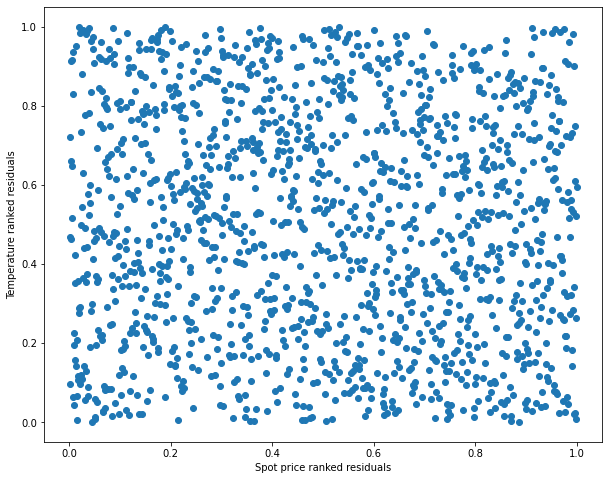}  
    \vspace{4ex}
  \end{minipage} 
  \caption{Standardized residuals (left) and ranked residuals (right) plots for France (top) and North Italy (bottom).}
  \label{fig:residuals_cloud}
\end{figure}

\begin{table}[h]
\begin{minipage}{.5\linewidth}
      \centering
\begin{tabular}{c|c|c}
197 & 162 & 126 \\ \hline 
163 & 160 & 162  \\ \hline 
124 & 163 & 197 
\end{tabular}
\end{minipage}%
    \begin{minipage}{.5\linewidth}  
    \centering
    \begin{tabular}{c|c|c}
165 & 173 & 149 \\ \hline
176 & 147 & 163 \\ \hline
145 & 166 & 175 
\end{tabular}
\end{minipage}
\caption{Observed frequencies by couple tercile for French (left) and Italian (right) coupled data.}
\label{tab:freq_tercile}
\end{table}

In order to move forward, we perform chi-square independence tests on the ranked residuals. For this we classify the ranked residuals based on quantiles, compute contingency tables with frequencies per coupled quantile classification and perform a chi-square independence test on these frequencies compared to a binomial distribution.

\begin{figure}[ht!]  
  \begin{minipage}[b]{0.5\linewidth}
    \centering
    \includegraphics[width=0.8\linewidth]{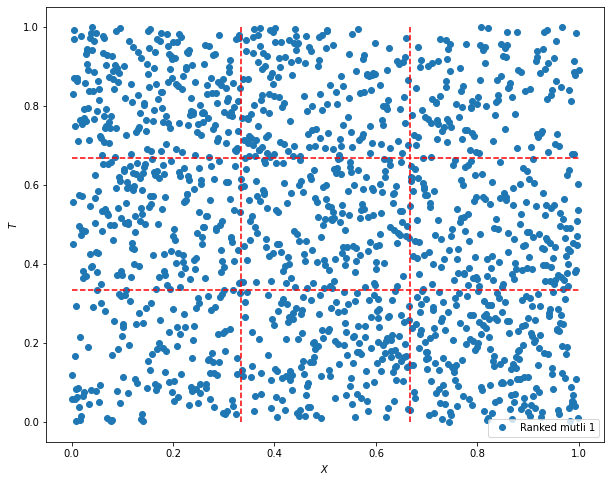} 
    \caption*{$p-value = 1.764e-06$} 
    \vspace{4ex}
  \end{minipage}
  \begin{minipage}[b]{0.5\linewidth}
    \centering
    \includegraphics[width=0.8\linewidth]{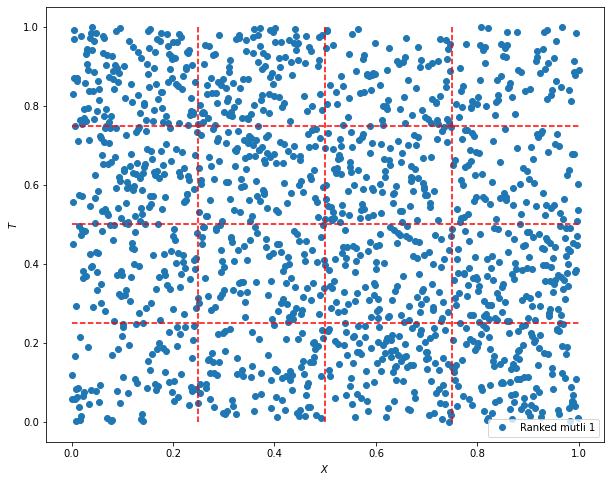} 
    \caption*{$p-value = 6.532e-07$} 
    \vspace{4ex}
  \end{minipage} 
  \begin{minipage}[b]{0.5\linewidth}
    \centering
    \includegraphics[width=0.8\linewidth]{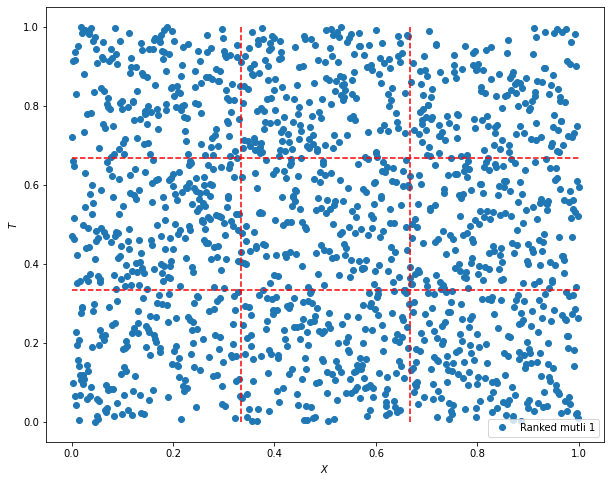} 
    \caption*{$p-value = 0.118$} 
    \vspace{4ex}
  \end{minipage}
  \begin{minipage}[b]{0.5\linewidth}
    \centering
    \includegraphics[width=0.8\linewidth]{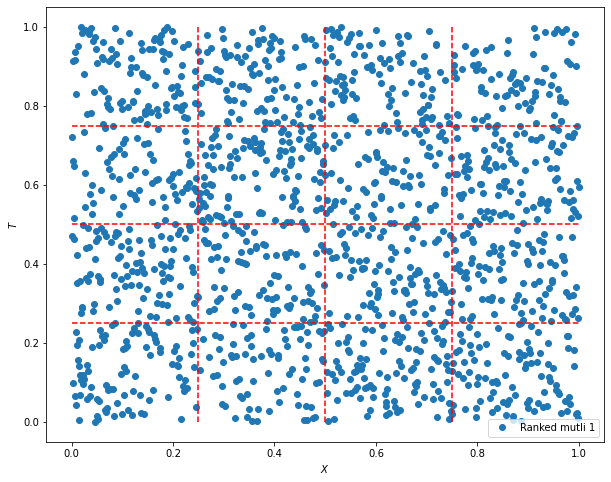} 
    \caption*{$p-value = 0.008$} 
    \vspace{4ex}
  \end{minipage} 
  \caption{$\chi^2$-test performed on ranked residuals for 9 (left) and 16 (right) category classification for France (top) and Italy (bottom).}
  \label{fig:chi_indep_test}
\end{figure}

Figure~\ref{fig:chi_indep_test} represents the results of the $\chi^2$ independence test performed on residuals. We can see that the two datasets do not show same results: the French dataset clearly rejects the independence hypotheses while the North Italian dataset only rejects the independence hypothesis for 16 categories. This motivates us to propose a combined model for $(\tilde{X}_t, \tilde{T}_t)$ that allows dependence on the residuals.

\subsection{Estimation of $\lambda$ and NIG parameters}

Let us recall that the parameters of the Temperature diffusion $\kappa_T$, $\mu_T(\cdot)$ and $\sigma_T$ can be estimated as in Section~\ref{sec:calibration}, as well $\kappa_X$ and $\mu_X(\cdot)$ by Proposition~\ref{prop_estimX}. We assume these parameters estimated, and focus first on the estimation of $\lambda$.

\paragraph{Estimation of $\lambda$}
From~\eqref{eq:comb_int}, we compute the covariance of the residuals:
\begin{align*}
    Cov\left(X_{t+\Delta}-\mu_X(t+\Delta)-e^{-\kappa_X \Delta}(X_t-\mu_X(t)),T_{t+\Delta}-\mu_T(t+\Delta)-e^{-\kappa_T \Delta}(T_t-\mu_T(t)) \right)\\
    =Cov\left(\lambda \sigma_T \int_t^{t+\Delta} e^{-\kappa_X(t+\Delta-v)} dW^T_v  + \int_t^{t+\Delta} e^{-\kappa_X(t+\Delta-v)} dL^X_v, \sigma_T \int_t^{t+\Delta} e^{-\kappa_T(t+\Delta-s)} dW^T_v \right) \\
    = \int_t^{t+\Delta} \sigma_T^2 \lambda e^{-(\kappa_X+\kappa_T)(t+\Delta-v)} dv
    = \sigma_T^2 \lambda \frac{1-e^{-(\kappa_{X}+ \kappa_{T})\Delta}}{\kappa_{X} + \kappa_{T}}.
\end{align*}
 Hence, discretizing for a time period $\Delta$, we get the following estimator
 \begin{equation}\label{eq:estim_lambda}
     \hat{\lambda} =  \frac{\hat{\kappa}_{X} + \hat{\kappa}_{T}}{\hat{\sigma}_T^2 (1-e^{-(\hat{\kappa}_{X}+ \hat{\kappa}_{T})\Delta})} \widehat{Cov},
 \end{equation}
 where $\widehat{Cov}$ is the usual covariance estimator between residuals.  
 Using values in Table~\ref{tab:parameter_estimates_temp_gauss}, we get the estimated value of $\lambda$ in Table~\ref{tab:lambda_estimates}.

\begin{table}[h]
\centering
    \begin{tabular}{c|c|cc}
    Market & $\hat{\lambda}$  & $\hat{\lambda}_{JFM}$  & $\hat{\lambda}_{JJA}$  \\ \hline \hline
    France & $ -0.007$ & $-0.0175$ & $0.0119$\\
    North Italy & $ -0.002$ & $-0.0136$ & $0.008$ \\
\end{tabular}
\caption{Estimated $\lambda$ of Model~\eqref{eq:comb} for France and North Italy.}
\label{tab:lambda_estimates}
\end{table}
    \red{ In Table~\ref{tab:lambda_estimates}, $\hat{\lambda}$, $\hat{\lambda}_{JFM}$ and $\hat{\lambda}_{JJA}$ correspond to the estimated values of $\lambda$ through the whole year, from January to March and from June to August respectively. 
    We can see that the estimation of $\lambda$ during winter months is larger in absolute value, meaning that the opposite dependence between electricity prices and temperature is higher during winter months than through the year. This is coherent as during winter the energy price is more related to heating activities and therefore a decrease in average temperature leads to an increase on electricity prices. On the contrary, during the summer, $\lambda$ becomes positive meaning that an increase on temperature leads to an increase on price. This is also coherent as in France and North Italy energy is used for cooling during summer. Overall the sign of $\lambda$ computed through out the year shows that the dependence on heating is more important than the dependence on cooling. This short analysis indicates that $\lambda$ may evolve periodically along the year. However, taking into account this feature in our model would lead to much more cumbersome developments in Section~\ref{Sec_risk_quanto} to explicit get formulas on quanto derivatives. This is left for further research. Besides, the formulas of Section~\ref{Sec_risk_quanto} already enables us to understand and handle the sensitivity of the derivatives to the parameter~$\lambda$.
    For the rest of the study, we use $\hat{\lambda}$ estimated on the whole year in our numerical experiments.}

 \paragraph{Estimation of NIG parameters}
 We will use, as in Section~\ref{sec:calibration} the CLS estimation method. For this, we write the (conditional) characteristic function of the log-prices:
\begin{equation} \label{eq:charac_nig_update}
     \psi_{X_t}(u; \Delta) := \E(e^{i u X_{t+\Delta}}|X_t) = { e^{iu (\mu_X(t+\Delta)+e^{-\kappa_X \Delta} (X_t - \mu_X(t))-{\frac {1}{2}}\lambda^{2}\sigma_T^{2}  \frac{1- e^{-2\kappa_X \Delta }}{2\kappa_X}  u^{2}}} \varphi(u ;\Delta).
\end{equation}
This gives immediately the characteristic function of the residuals $$\E[e^{i u ( \tilde{X}_{t+\Delta} -e^{-\kappa_X \Delta} \tilde{X}_t)}| \cF_t]= e^{-{\frac {1}{2}} \lambda^{2}\sigma_T^{2} \frac{1- e^{-2\kappa_X\Delta }}{2\kappa_X} u^{2}} \varphi(u ;\Delta).$$
Then, to apply Conditional Least Square Estimation, we update the objective function in Equation~\eqref{eq:charac} accordingly and seek to minimise the following quantity:
\begin{equation}
    \sum_u  \sum_{t=0}^{N-1}  \Big| e^{i u ( \tilde{X}_{t+\Delta} -e^{-\kappa_X \Delta} \tilde{X}_t)}- e^{-{\frac {1}{2}} \lambda^{2}\sigma_T^{2} \frac{1- e^{-2\kappa_X\Delta }}{2\kappa_X} u^{2}} \varphi(u ;\Delta) \Big|^2.
    \label{eq:charac2}
\end{equation}
Table~\ref{tab:parameter_estimates_updated} summarizes the NIG parameter estimations through CLS. We can observe that Table~\ref{tab:parameter_estimates_updated} is very close to Table~\ref{tab:parameter_estimates} which is expected as the parameter $\lambda$ is quite small.

\begin{table}[ht!]
\begin{minipage}{.5\linewidth}
      \centering
\begin{tabular}{ccccc}
Method & $\hat{\alpha}^X$ & $\hat{\beta}^X$   & $\hat{m}^X$ & $\hat{\delta}^X$ \\ \hline \hline
CLS & 4.189 & -0.379 & 0.011 & 0.125 \\
\end{tabular}
\end{minipage}%
    \begin{minipage}{.5\linewidth}  
    \centering
    \begin{tabular}{ccccc}
Method & $\hat{\alpha}^X$ & $\hat{\beta}^X$   & $\hat{m}^X$ & $\hat{\delta}^X$ \\ \hline \hline
CLS & 13.621 & 0.003 & $10^{-3}$ & 0.151 \\
\end{tabular}
\end{minipage}
\caption{Parameter estimation through the CLS  for French (left) and Italian (right) energy log spot price.}
\label{tab:parameter_estimates_updated}
\end{table}

We then turn to the goodness of fit of the estimated combined Model~\eqref{eq:comb}. Figure~\ref{fig:chi_test_m2} and \ref{fig:chi_test_m2_it} represent $\chi^2$ test performed between the historical and simulated distributions on the (2-dimensional) empirical copula between temperature and electricity spot price residuals. To ensure reliability of the results, we perform $1,000,000$ simulations and rescale the frequencies to compare with observed frequencies.  We can see that the test does not globally reject the null hypotheses which means that the dependence is correctly reproduced by Model~\eqref{eq:comb} for both French and North Italian data.

Finally we analyse the standard deviation explained by temperature component in Model~\eqref{eq:comb}. We use the ratio below:
\begin{equation*}
    \frac{|\sigma_T \lambda|}{\sqrt{\sigma_T^2 \lambda^2 + \frac{\delta \alpha^2}{\gamma^3} } }
\end{equation*}
we found that $9.43\%$ and $4.34\%$ of the standard deviation of the random term of the log energy spot price is explained by the temperature component for French and North Italian data correspondingly. This is small but not negligible, especially for handling the risk of derivatives as shown in the next section. 

\begin{figure}[htb!]  
  \begin{minipage}[b]{0.5\linewidth}
    \centering
    \includegraphics[width=0.8\linewidth]{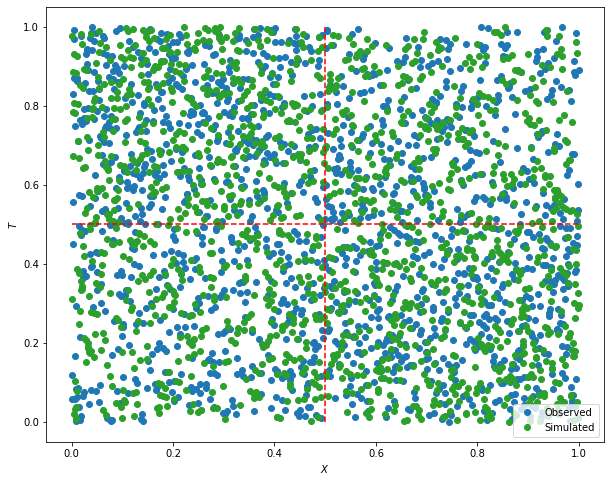} 
    \caption*{$p-value = 0.219$} 
    \vspace{4ex}
  \end{minipage}
  \begin{minipage}[b]{0.5\linewidth}
    \centering
    \includegraphics[width=0.8\linewidth]{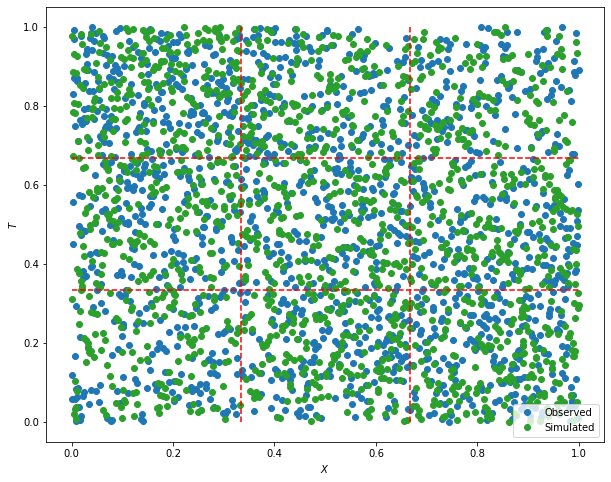} 
    \caption*{$p-value =  0.892$} 
    \vspace{4ex}
  \end{minipage} 
  \begin{minipage}[b]{0.5\linewidth}
    \centering
    \includegraphics[width=0.8\linewidth]{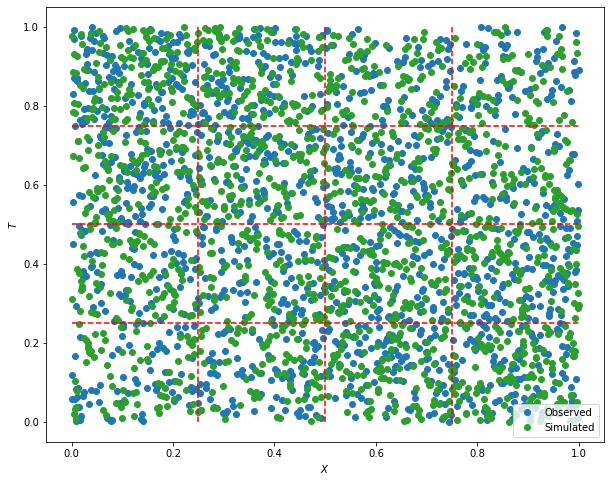} 
    \caption*{$p-value =0.616$} 
    \vspace{4ex}
  \end{minipage}
  \begin{minipage}[b]{0.5\linewidth}
    \centering
    \includegraphics[width=0.8\linewidth]{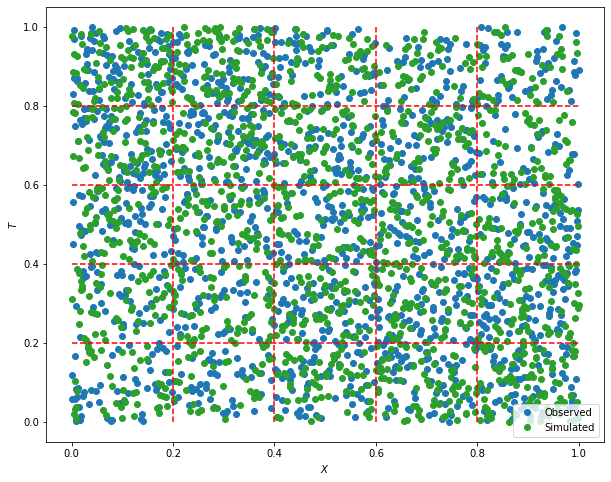} 
    \caption*{$p-value = 0.124$} 
    \vspace{4ex}
  \end{minipage} 
  \caption{From top left to bottom right, $\chi^2$ test performed on the distributions of real (blue) and simulated (green - based on $1,000,000$ simulations) ranked residuals for 4 and 25 categories for French data. }
  \label{fig:chi_test_m2}
\end{figure}

\begin{figure}[htb!]  
  \begin{minipage}[b]{0.5\linewidth}
    \centering
    \includegraphics[width=0.8\linewidth]{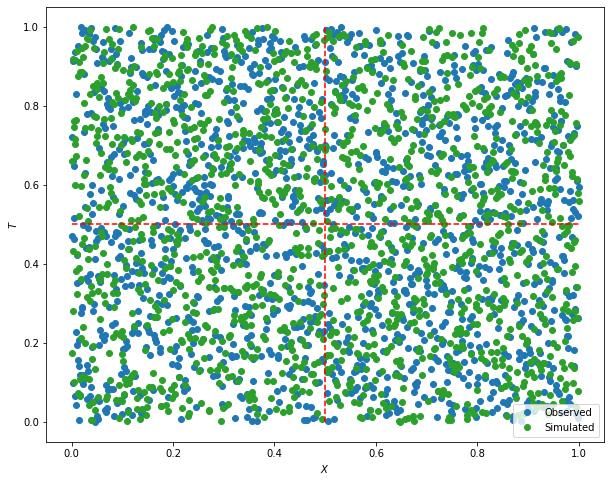} 
    \caption*{$p-value = 0.961$} 
    \vspace{4ex}
  \end{minipage}
  \begin{minipage}[b]{0.5\linewidth}
    \centering
    \includegraphics[width=0.8\linewidth]{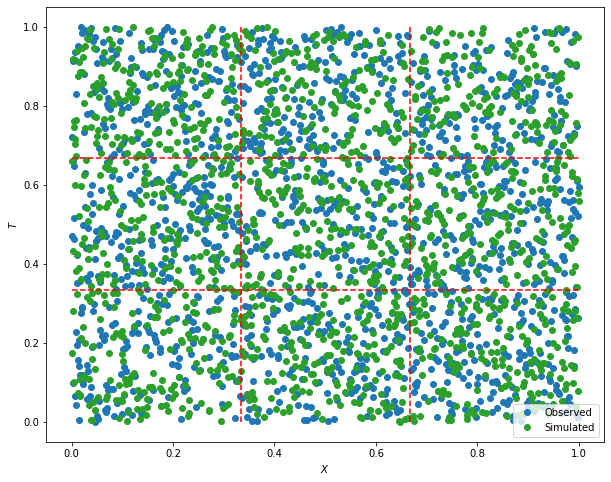} 
    \caption*{$p-value = 0.834$} 
    \vspace{4ex}
  \end{minipage} 
  \begin{minipage}[b]{0.5\linewidth}
    \centering
    \includegraphics[width=0.8\linewidth]{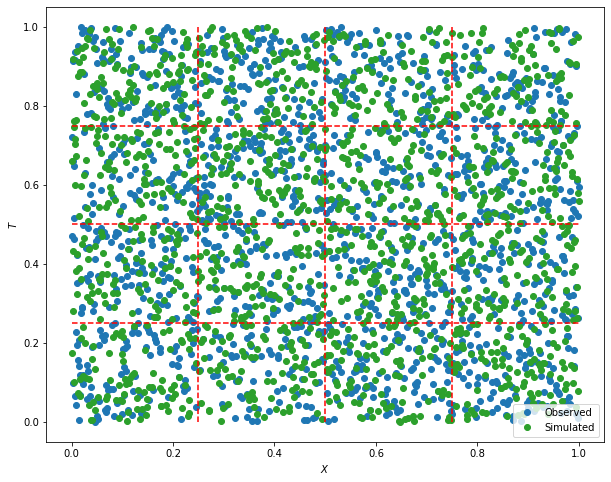} 
    \caption*{$p-value =0.198$} 
    \vspace{4ex}
  \end{minipage}
  \begin{minipage}[b]{0.5\linewidth}
    \centering
    \includegraphics[width=0.8\linewidth]{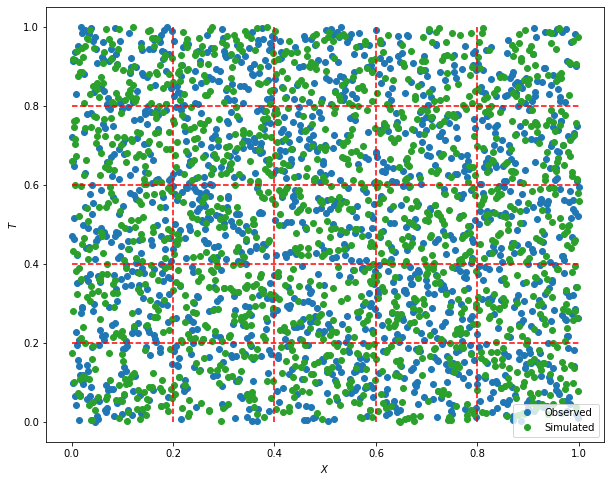} 
    \caption*{$p-value = 0.7$} 
    \vspace{4ex}
  \end{minipage} 
  \caption{From top left to bottom right, chi-square test performed on the distributions of real (blue) and simulated (green - based on $1,000,000$ simulations) ranked residuals for 4 and 25 categories for North Italian data. }
  \label{fig:chi_test_m2_it}
\end{figure}

\section{Handling the risk of quanto derivatives}\label{Sec_risk_quanto}

On the previous sections we have developed a combined model for energy spot price and temperature in order to price financial derivatives combining both parameters. The objective of this section is to apply this model for quanto valuation and hedging. 

\subsection{An overview on quanto design and risk valuation}

Quantos are derivative contracts of which payoff depends a double trigger, meaning the claim depends on the value at maturity of two indices. Their main interest relies on their capacity to hedge simultaneously volumetric risk, linked to weather conditions, and price risk, represented by the energy price. Quanto derivatives are defined, like weather and energy derivatives, over a time period $[t_1, t_2]$ such that the payoff of the contract will depend on the value of the underlying during all this risk period. While there exist large studies on quanto risk valuation involving different commodities~\cite{zhang1997exotic}, there exist little literature on the structuration and pricing of quanto products mixing commodity and weather inputs \cite{caporin2012model} \cite{benth2015pricing} \cite{kufakunesu2022sensitivity}. \\

On the one hand, there is no clear consensus on the structuring of the products. 
For Benth \red{et al.} \cite{benth2015pricing}  and \red{Kafakunesu} \cite{kufakunesu2022sensitivity}, the payoff structure is applied to an aggregate of the two underlyings:
\begin{equation} \label{eq:payoff_Benth}
    Payoff := f \left(\sum_{t=t_1}^{t_2} g_S (S_t), \sum_{t=t_1}^{t_2} g_T (T_t) \right)
\end{equation}
where $f$ represents the payoff function and $g_S$ and $g_T$ represent transformations of the initial inputs. The dates $t_1<t_2$ indicates two days, and the summation is made on each day between $t_1$ and $t_2$ (including these days). 
In particular, for Caporin~\cite{caporin2012model}, $f$ takes the form of a product and integrates common derivative payoff functions such that:
\begin{equation} 
    Payoff := f_S \left(\sum_{t=t_1}^{t_2} g_S (S_t)\right) \times f_T\left( \sum_{t=t_1}^{t_2} g_T (T_t) \right)
\end{equation}
Here, $f_S$ and $f_T$ can correspond to the payoff function of put and call options, capped linear for swaps and identity for futures. Commonly $f_T$ can correspond to the formula enabling to compute the Heating Degree Days (HDD) such that $f_T(\cdot) = (\bar{T} - T_t)^+ $, with $\bar{T}=18^\circ C$. 
Finally, some practitioners seem to favour another definition~\cite{cucu2016managing}. In these cases the payoff function is directly applied to the daily values such that:
\begin{equation} \label{eq:eqcucu}
    Payoff := \sum_{t=t_1}^{t_2} f_S( S_t) \times f_T(T_t )
\end{equation}
We will focus on this latter definition as it answers to practitioner's needs. In addition, we consider price settlement takes place at $t_0$ and payoff payment at $t_2$.  

We now discuss briefly the valuation of these products. Under classic risk-neutral pricing theory for financial derivatives we would like to write the price as follows:
 \begin{equation} \label{eq:Epayoff}
 \E_{\mathbb{Q}} \left(D(t_0,t_2) \sum_{t=t_1}^{t_2} f_S( S_t) \times f_T(T_t )\right)
 \end{equation}
where $\mathbb{Q}$ corresponds to the risk-neutral probability and $D(t_0,t_2)$ is a discount factor between $t_0$ and $t_2$.  

However, temperature is not an asset traded on markets, and hence risk neutral theory cannot be applied. A possible way to get around this is to work with payoffs  written on futures contracts, as in the works of Benth \red{et al.} \cite{benth2015pricing}  and \red{Kafakunesu} \cite{kufakunesu2022sensitivity}. However, futures on temperature are still not liquid and usual quanto payoffs are written on the temperature itself, not on the future contracts. Here, we will rather work on the real-world probability world and analyse the payoff distribution in this framework. We will be particularly focus on the average payoff, which we discuss in the next subsection. In addition, we consider a unit discount rate (i.e. $D(t_0,t_2)=1$) as the time span $t_2-t_0$ is rather short and there is no clear alignment on which rate should be used, see e.g.~\cite{benth2007putting}~\cite{brody2002dynamical}. Note also that if the discount rate is deterministic or independent of $(S,T)$, the formulas below are still valid up to a constant factor.

In the following section we develop explicit formulas for different payoff and compare then with Monte Carlo simulations. We also consider the possibility to hedge statically~\eqref{eq:eqcucu} with electricity and temperature derivatives, and analyse numerically the hedging error distribution.

\subsection{Expected values of some standard payoffs} \label{subsec:expected_values}
This section concentrates on our ability to get explicit values for the average payoff of different financial instruments. Namely, we will consider the payoff~\eqref{eq:eqcucu} for different choices of functions $f_S$ and $f_T$. Let $t_0$ denote the present date, we consider two dates $t_1$ and $t_2$ such that $t_0<t_1<t_2$ and want to determine:
\begin{equation} 
    \E \left( \sum_{t=t_1}^{t_2} f_S( S_t) \times f_T(T_t ) \mid \mathcal{F}_{t_0}\right).
\end{equation}
The dates $t_1$ and $t_2$ indicate days, and the summation is made on all days between $t_1$ and $t_2$ including them. We also apply these formulas and compare the results to prices computed through Monte Carlo simulation. The Monte Carlo discretization schemes can be found in Appendix~\ref{appendix:Simu_MNIG} and are applied to parameters in Table~\ref{tab:parameters_simu} and $\mu_X$ and $\mu_T$ as estimated for France with Formulas~\eqref{eq:def_hlambda} and \eqref{eq:def_hzeta}.

\begin{table}[ht!]
\centering
\begin{tabular}{lllllllllll}
$\kappa_X$ & $\beta^X_0$ &	$\alpha^X_1$ &$\beta^X_1$	& $\alpha^X_{DoW0}$ & $\alpha^{X,DoW}_{1}$ &	$\alpha^{X,DoW}_2$ & $\alpha^{X,DoW}_{3}$ &	$\alpha^{X,DoW}_{4}$ & $\alpha^{X,DoW}_{5}$ &	$\alpha^{X,DoW}_{6}$  \\ \hline \hline
0.226 &  0.0003 &	-0.165 &	 0.187 & 3.523 & 3.594 & 3.594 & 3.589 &3.566 &	3.370 &	3.175   
\end{tabular}
\begin{tabular}{llllll}
$\kappa_T$ &$\alpha^T_0$ & $\beta^T_0$ &	$\alpha^T_1$ &$\beta^T_1$	 \\ \hline \hline
0.254 &   6.578 &  0.00004   & -4.139& -6.959
\end{tabular}
\begin{tabular}{llllllll}
$\alpha^X$ & $\beta^X$  & $m^X$    & $\delta^X$ &   $\sigma_T$ & $\lambda$  \\ \hline \hline
4.189 &-0.379 &  0.011 & 0.125 &    2.413& -0.007    
\end{tabular}
\caption{Parameters used for numerical experiments. These parameters are the ones obtained with the estimation on French electricity and temperature data.} \label{tab:parameters_simu}
\end{table}

\paragraph{Forwards/Futures}
First, let consider $t_0<t_1<t_2$ and a future derivative between times $t_1$ and $t_2$. The payoff is given by~\eqref{eq:eqcucu} with $f_T(\cdot)=f_S(\cdot)=\cdot$. We consider the average payoff at time~$t_0$ under the historical probability:
\begin{align*}
    \mathcal{F}(t_1,t_2) = \E \Big( \sum_{t=t_1}^{t_2}  S_t \times T_t \mid \mathcal{F}_{t_0}\Big)
\end{align*}

\begin{prop}
Under  Model~\eqref{eq:comb} and for $t \in [t_1, t_2]$, we have:
\begin{equation}\label{eq:forward}
    \begin{aligned}
    \mathcal{F}(t_1,t_2)  &= \sum_{t=t_1}^{t_2} \Bigg[ \exp\Bigg(\mu_X(t) + e^{-\kappa_X (t-t_0) }(X_{t_0}-\mu_X(t_0)) \Bigg) \varphi(-i ; t -t_0)\\
    &  \Bigg( (\mu_T(t) + e^{-\kappa_T (t-t_0) }(T_{t_0}-\mu_T(t_0)))  e^{\frac{1}{2} k_X(t-t_0)^2\lambda^2 \sigma_T^2  } \\
    +& \lambda \sigma_T^2 k_{XT}^2(t-t_0) e^{ \frac{1}{2}\lambda^2 \sigma_T^2 k_{X}(t-t_0)^2}     \Bigg)  \Bigg]
    \end{aligned}
\end{equation}
\noindent where $\varphi$ is the characteristic function defined in Equation~\eqref{eq:charac_nig} and

\begin{equation} \label{eq:k}
    k_T(\Delta)=\sqrt{\frac{1 - e^{-2\kappa_T\Delta}}{2 \kappa_T}} \mbox{, }k_X(\Delta)=\sqrt{\frac{1 - e^{-2\kappa_X\Delta}}{2 \kappa_X}} \mbox{ and } k_{XT}(\Delta)=\sqrt{\frac{1 - e^{-(\kappa_X+\kappa_T)\Delta}}{\kappa_X+\kappa_T}}.
\end{equation}

\end{prop}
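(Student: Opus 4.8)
The plan is to use linearity of conditional expectation to reduce the claim to evaluating $\E(S_t T_t \mid \mathcal{F}_{t_0}) = \E(e^{X_t} T_t \mid \mathcal{F}_{t_0})$ for a single date $t \in [t_1, t_2]$, and then sum over $t$. First I would insert the integrated dynamics~\eqref{eq:comb_int} started at $s = t_0$. Writing the $\mathcal{F}_{t_0}$-measurable quantities $\tilde{X}_{t_0} = X_{t_0} - \mu_X(t_0)$ and $\tilde{T}_{t_0} = T_{t_0} - \mu_T(t_0)$, this gives
\begin{align*}
X_t &= \mu_X(t) + e^{-\kappa_X(t-t_0)}\tilde{X}_{t_0} + I^X_W + I^X_L, \\
T_t &= \mu_T(t) + e^{-\kappa_T(t-t_0)}\tilde{T}_{t_0} + I^T_W,
\end{align*}
with $I^X_W = \lambda\sigma_T\int_{t_0}^t e^{-\kappa_X(t-u)}dW^T_u$, $I^X_L = \int_{t_0}^t e^{-\kappa_X(t-u)}dL^X_u$ and $I^T_W = \sigma_T\int_{t_0}^t e^{-\kappa_T(t-u)}dW^T_u$. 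Since the increments of $W^T$ and $L^X$ on $[t_0,t]$ are independent of $\mathcal{F}_{t_0}$, conditioning amounts to freezing $\tilde{X}_{t_0}, \tilde{T}_{t_0}$ and taking an unconditional expectation over the three stochastic integrals.

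The key structural point is that $L^X$ is independent of $W^T$, so $I^X_L$ is independent of the pair $(I^X_W, I^T_W)$. Pulling out the deterministic and initial factors, this yields
\begin{align*}
\E(e^{X_t}T_t \mid \mathcal{F}_{t_0}) = e^{\mu_X(t)+e^{-\kappa_X(t-t_0)}\tilde{X}_{t_0}}\,\E\!\left(e^{I^X_L}\right)\,\E\!\left(e^{I^X_W}\big(\mu_T(t)+e^{-\kappa_T(t-t_0)}\tilde{T}_{t_0}+I^T_W\big)\right).
\end{align*}
For the Lévy factor I would recognize $\E(e^{I^X_L})$ as the characteristic function $\varphi$ of~\eqref{eq_def_phi}--\eqref{eq:charac_nig} evaluated at $u=-i$ over the horizon $t-t_0$, i.e. $\E(e^{I^X_L}) = \varphi(-i; t-t_0)$, producing the $\varphi(-i; t-t_0)$ factor of the statement.

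It remains to treat the Gaussian factor. The pair $(I^X_W, I^T_W)$ is centered bivariate normal, and by the Itô isometry its second moments are $\mathrm{Var}(I^X_W) = \lambda^2\sigma_T^2 k_X(t-t_0)^2$, $\mathrm{Var}(I^T_W) = \sigma_T^2 k_T(t-t_0)^2$ and $\mathrm{Cov}(I^X_W, I^T_W) = \lambda\sigma_T^2 k_{XT}(t-t_0)^2$, with $k_X, k_T, k_{XT}$ as in~\eqref{eq:k}. I would then apply the elementary identities $\E(e^U) = e^{\mathrm{Var}(U)/2}$ and $\E(e^U V) = \mathrm{Cov}(U,V)\,e^{\mathrm{Var}(U)/2}$, valid for centered jointly Gaussian $(U,V)$ (the second follows from the regression decomposition $V = \tfrac{\mathrm{Cov}(U,V)}{\mathrm{Var}(U)}U + V^\perp$ together with $\E(Ue^U) = \mathrm{Var}(U)e^{\mathrm{Var}(U)/2}$). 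With $U = I^X_W$ and $V = I^T_W$ the Gaussian factor becomes exactly the bracketed expression in~\eqref{eq:forward}; collecting all factors and summing over $t$ gives the stated formula.

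I expect the only genuinely delicate point to be the well-definedness of $\varphi(-i; t-t_0)$. Evaluating the NIG cumulant $\psi$ along the path $s = e^{-\kappa_X(t-v)} \in (0,1]$ in the integral~\eqref{eq:charac_nig} requires the first exponential moment of $L^X$ to be finite, i.e. $\sqrt{(\alpha^X)^2 - (\beta^X+s)^2}$ must stay real for all such $s$; the binding constraint is $s=1$, giving the integrability condition $\alpha^X \ge |\beta^X + 1|$. I would state this assumption explicitly (it is satisfied by the estimated parameters). Everything else is a routine combination of the independence structure, the already-established characteristic-function formula, and standard Gaussian moment computations.
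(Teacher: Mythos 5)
Your proposal is correct and takes essentially the same route as the paper's own proof (which reduces $\mathcal{F}(t_1,t_2)$ to the single-date expectation of Proposition~\ref{prop:eXt}): insert the integrated dynamics, factor out the NIG integral as $\varphi(-i;t-t_0)$ using the independence of $L^X$ and $W^T$, and evaluate the Gaussian cross-moment --- the paper does this last step via the $G,G^{\perp}$ decomposition of Proposition~\ref{prop:vector} together with $\E[e^{xG}]=e^{x^2/2}$ and $\E[Ge^{xG}]=xe^{x^2/2}$, which is exactly your identity $\E[e^U V]=\mathrm{Cov}(U,V)\,e^{\mathrm{Var}(U)/2}$ in different packaging. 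Your explicit integrability condition $\alpha^X \ge |\beta^X+1|$ for $\varphi(-i;\cdot)$ to be finite is a point the paper leaves implicit, and stating it is a welcome refinement rather than a deviation.
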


This result is a direct consequence of Proposition~\ref{prop:eXt} that calculates explicitly $\E ( S_t \times T_t \mid \mathcal{F}_{t_0})$. 
Figure~\ref{fig:forward_swap} shows the price of monthly futures computed 30 days in advance for France data. The prices have been computed with 100,000 Monte Carlo simulations and with explicit formula in Equation~\eqref{eq:forward}. We can see that both methodologies provide same results. However, computation with Equation~\eqref{eq:forward} is around 650 times faster. Additionally we compute the price with $\lambda=0$. Although it is close to the precedent case, the price computed with  $\lambda=0$ is usually not within the confidence intervals of the Monte Carlo simulations.

\paragraph{Swap} Swaps are derivatives where the payoff is given by~\eqref{eq:eqcucu}, that is, $f_T(\cdot)=(\bar{T}-\cdot)$ and $f_S(\cdot)=(\cdot-\Bar{S})$. Here we suppose strikes $\bar{S}$ and $\bar{T}$ are given. For applications, and on the following of the paper, $\Bar{T}=18^\circ C$ and $\Bar{S}=50$ EUR/MWh. While taking $\Bar{T}=18^\circ C$ is a market standard, taking $\Bar{S}=50$ is our choice. We decide to take a strike which is relatively into the money and repeat the exercise for other strikes ($\Bar{S}=40, 60$), the below conclusions remain unchanged.

We define the swap's average payoff under historical probability $\mathcal{S}(t_1,t_2)$ as follows:

\begin{align*} 
    \mathcal{S}(t_1,t_2) = \E \Big( \sum_{t=t_1}^{t_2}  (S_t - \bar{S}) (\bar{T}-T_t) \mid \mathcal{F}_{t_0}\Big)
\end{align*}

\begin{prop}
Under  Model~\eqref{eq:comb} and for $t \in [t_1, t_2]$, we have
    \begin{equation}\label{eq:swap}
        \begin{aligned}
        \mathcal{S}(t_1,t_2) &= \sum_{t=t_1}^{t_2} \bar{T} \psi_{X_{t_0}}(-i ; t -t_0) -\mathcal{F}(t_1,t_2) -\bar{S} \bar{T} (t_2-t_1+1)\\
        &+ \bar{S} \sum_{t=t_1}^{t_2} (\mu_T(t) + e^{-\kappa_T (t-t_0)} (T_{t_0}- \mu_T(t_0)))
        \end{aligned}
    \end{equation}
    \noindent where $\psi$ is the characteristic function defined in Equation~\eqref{eq:charac_nig_update}, $\mathcal{F}(t_1,t_2)$ as defined in Equation~\eqref{eq:forward} and $k_T(\cdot)$ , $k_X(\cdot)$ and $k_{XT}(\cdot)$ are defined in Equation~\eqref{eq:k}.
\end{prop}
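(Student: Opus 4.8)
The plan is to expand the product inside the payoff and compute the resulting conditional expectations term by term, each of which is either already available from the preceding results or reduces to an elementary calculation. Writing
\begin{equation*}
(S_t-\bar S)(\bar T-T_t)=\bar T\, S_t - S_t T_t -\bar S\bar T+\bar S\, T_t,
\end{equation*}
and using linearity of the conditional expectation together with the fact that the sum runs over the $t_2-t_1+1$ integer dates of $[t_1,t_2]$, it suffices to evaluate $\E(S_t\mid\mathcal{F}_{t_0})$, $\E(S_tT_t\mid\mathcal{F}_{t_0})$ and $\E(T_t\mid\mathcal{F}_{t_0})$ for each $t$ and to sum the four contributions.

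For the first term I would recognize $\E(S_t\mid\mathcal{F}_{t_0})=\E(e^{X_t}\mid\mathcal{F}_{t_0})$ as the conditional characteristic function $\psi_{X_{t_0}}(u;t-t_0)$ of~\eqref{eq:charac_nig_update} evaluated at the imaginary argument $u=-i$, since $e^{iuX_t}=e^{X_t}$ there; this produces the term $\bar T\sum_{t=t_1}^{t_2}\psi_{X_{t_0}}(-i;t-t_0)$. The second term is exactly $-\mathcal{F}(t_1,t_2)$ by the definition of the forward average payoff and the explicit expression~\eqref{eq:forward} given by the preceding proposition. The constant term integrates trivially to $-\bar S\bar T(t_2-t_1+1)$. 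For the last term I would invoke the integrated temperature dynamics in~\eqref{eq:comb_int}: since the stochastic integral $\sigma_T\int_{t_0}^t e^{-\kappa_T(t-u)}\,dW^T_u$ is centered, one obtains $\E(T_t\mid\mathcal{F}_{t_0})=\mu_T(t)+e^{-\kappa_T(t-t_0)}(T_{t_0}-\mu_T(t_0))$, which yields the final $\bar S$-weighted sum. Collecting the four pieces reproduces~\eqref{eq:swap}.

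The only point requiring care, rather than a genuine obstacle, is the passage from the characteristic function to the Laplace-type quantity $\E(e^{X_t}\mid\mathcal{F}_{t_0})$ by setting $u=-i$. This is an analytic continuation of $\psi_{X_{t_0}}(\cdot;t-t_0)$ off the real line, and it is licit only under exponential integrability of the NIG increments, i.e.\ when the argument of the square root in~\eqref{eq:charac_nig} remains nonnegative at $u=-i$ (a condition of the form $\alpha^X>|\beta^X|+1$ on the NIG parameters, which keeps $\varphi(-i;\cdot)$ well defined). Since this is precisely the evaluation already carried out in the forward proposition, no new hypothesis is introduced, and everything else amounts to the routine expansion described above.
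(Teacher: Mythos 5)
Your proof is correct and follows essentially the same route as the paper's: expand $(S_t-\bar S)(\bar T-T_t)$ into the four terms $\bar T S_t - S_tT_t - \bar S\bar T + \bar S T_t$, identify $\E(S_t\mid\mathcal{F}_{t_0})=\psi_{X_{t_0}}(-i;t-t_0)$, use the forward proposition for $\E(S_tT_t\mid\mathcal{F}_{t_0})$, and the centered OU integral for $\E(T_t\mid\mathcal{F}_{t_0})$, then sum over the dates. Your additional remark on the exponential integrability needed to evaluate the characteristic function at $u=-i$ is a point the paper leaves implicit, but it does not change the argument.
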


This result is a direct consequence of Proposition~\ref{prop:swap} that calculates $\E ( (S_t-\bar{S})(\bar{T}-  T_t) \mid \mathcal{F}_{t_0})$. Figure~\ref{fig:forward_swap} shows the price of monthly swaps computed 30 days in advance for France data. The prices have been computed with 100,000 Monte Carlo simulations and with explicit formula in Equation~\eqref{eq:swap}. As for forwards, we can see that both methods provide similar results while using Formula~\eqref{eq:swap} is around 300 times faster. Again, prices computed for $\lambda=0$ are close but out of the confidence interval of the Monte Carlo simulations.

\begin{figure}[ht!]
    \begin{minipage}[b]{0.45\linewidth}
        \centering
        \includegraphics[width=\textwidth]{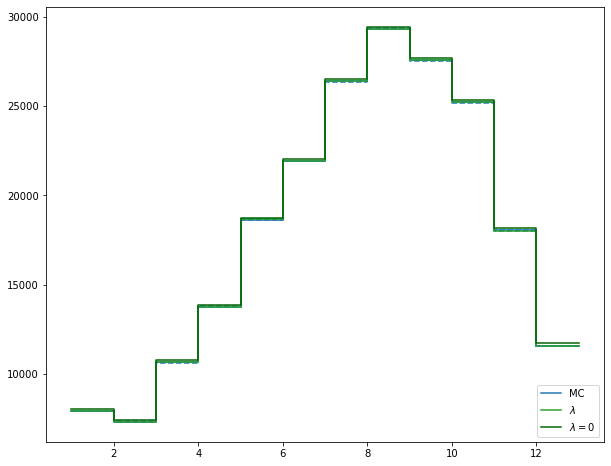}
    \end{minipage}
    \hspace{0.3cm}
    \begin{minipage}[b]{0.45\linewidth}
        \centering
        \includegraphics[width=\textwidth]{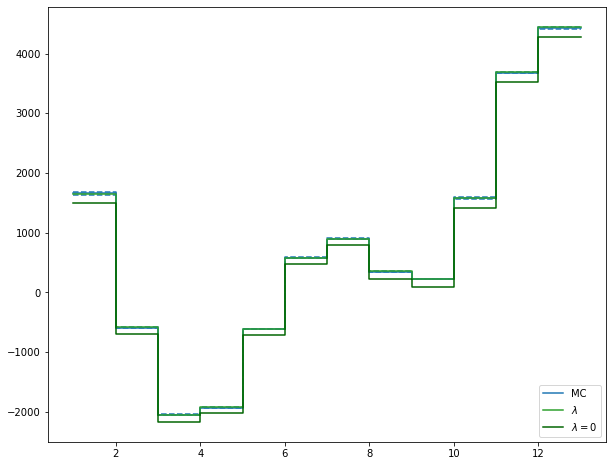}
    \end{minipage}
    \caption{Forward (left) and Swap (right) prices computed with 100,000 simulation-Monte Carlo (blue) and Equations~\eqref{eq:forward} and~\eqref{eq:swap} (green) methods. Each contract lasts one month of 2018. Time $t_0$ corresponds to $30$ days ahead of the first day of the month, $t_1$ to the first day of the month and $t_2$ to the last day of the month.} \label{fig:forward_swap}
\end{figure}

\paragraph{Single sided options $\mathcal{E}$-HDD and $\mathcal{E}$-CDD}
We now focus on put and call options on temperature. These are defined by the payoff~\eqref{eq:eqcucu} with $f_S(x)=x$ and either $f_T(\cdot) := (\bar{T} - \cdot )^+$ for a put option or $f(\cdot) := (\cdot - \bar{T})^+$ for a call option. We call these products single sided options, since the option brings only on the temperature. We define the average payoff under historical probability of a single sided option $\mathcal{E-}HDD(t_1,t_2)$ as follows:
\begin{align}
    \mathcal{E-}HDD(t_1,t_2) = \E \Big( \sum_{t=t_1}^{t_2}  S_t (\bar{T}-T_t)^+ \mid \mathcal{F}_{t_0}\Big) = \sum_{t=t_1}^{t_2}  \E ( S_t (\bar{T}-T_t)^+ \mid \mathcal{F}_{t_0})
\end{align}
Let first consider each term separately. We have:
\begin{align}\label{decompo_put}
    \E ( S_t (\bar{T}-T_t)^+ \mid \mathcal{F}_{t_1}) &= \int_{T^0}^{\bar{T}} \E ( e^{X_t} \mathbbm{1}_{T_t \leq u }  \mid \mathcal{F}_{t_1}) du,
\end{align}
where $T^0=-\infty$. Note that the Gaussian model for the temperature allows in principle any real temperature. In practice, with the estimated parameters, the probability of having extreme temperatures is very infinitesimal. We can thus use~\eqref{decompo_put} with $T^0=-273.15$ (the absolute zero temperature) or $T^0=-100$ with a negligible error. 

\begin{prop}
Under  Model~\eqref{eq:comb} and for $t \in [t_1, t_2]$, we have
    \begin{equation} \label{eq:eHDD}
    \begin{aligned}
    \mathcal{E-}HDD(t_1,t_2) &= \sum_{t=t_1}^{t_2}  \Bigg[  \psi_{X_{t_0}}(-i ; t -t_0) \times \\
    & \int_{T^0}^{\bar{T}}  \Phi\Big( \frac{u - (\mu_T(t) + e^{-\kappa_T (t-t_0)}(T_{t_0}-\mu_T(t_0))) }{\sigma_Tk(t-t_0)} - \lambda \sigma_T \frac{k_{XT}^2(t-t_0)}{k_{T}(t-t_0)} \Big) du \Bigg]
    \end{aligned}
    \end{equation}
where $\psi$ is the characteristic function defined in Equation~\eqref{eq:charac_nig_update}, $\Phi$ is the cumulative function of the standard Gaussian distribution and  $k_T(\cdot)$ , $k_X(\cdot)$ and $k_{XT}(\cdot)$ are as in Equation~\eqref{eq:k}.
\end{prop}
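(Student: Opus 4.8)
The plan is to compute $\E(S_t(\bar T-T_t)^+ \mid \mathcal{F}_{t_0})$ for a single date $t \in [t_1,t_2]$ and then sum over $t$. Starting from the layer-cake identity $(\bar T-T_t)^+ = \int_{T^0}^{\bar T} \mathbbm{1}_{T_t \le u}\,du$ together with Fubini's theorem, which already yields~\eqref{decompo_put}, the whole task reduces to evaluating $\E(e^{X_t}\mathbbm{1}_{T_t\le u}\mid\mathcal{F}_{t_0})$ for a fixed $u \le \bar T$.

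First I would use the integrated form~\eqref{eq:comb_int} with $s=t_0$ to write, conditionally on $\mathcal{F}_{t_0}$,
\begin{equation*}
X_t = \bar X + G_X + J_X, \qquad T_t = \bar T_0 + G_T,
\end{equation*}
where $\bar X = \mu_X(t)+e^{-\kappa_X(t-t_0)}(X_{t_0}-\mu_X(t_0))$ and $\bar T_0 = \mu_T(t)+e^{-\kappa_T(t-t_0)}(T_{t_0}-\mu_T(t_0))$ are $\mathcal{F}_{t_0}$-measurable, $G_X = \lambda\sigma_T\int_{t_0}^t e^{-\kappa_X(t-v)}dW^T_v$ and $G_T = \sigma_T\int_{t_0}^t e^{-\kappa_T(t-v)}dW^T_v$ form a centred Gaussian vector, and $J_X = \int_{t_0}^t e^{-\kappa_X(t-v)}dL^X_v$ is the NIG contribution. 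The structural observation driving the proof is that $L^X$ is independent of $W^T$, so $J_X$ is independent of $(G_X,G_T)$. This factorises the conditional expectation as
\begin{equation*}
\E(e^{X_t}\mathbbm{1}_{T_t\le u}\mid\mathcal{F}_{t_0}) = e^{\bar X}\,\E(e^{J_X})\,\E\big(e^{G_X}\mathbbm{1}_{G_T\le u-\bar T_0}\big),
\end{equation*}
and I would identify $\E(e^{J_X})=\varphi(-i;t-t_0)$ directly from definition~\eqref{eq_def_phi}.

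The heart of the argument is the remaining jointly-Gaussian expectation $\E(e^{G_X}\mathbbm{1}_{G_T\le c})$ with $c=u-\bar T_0$. Using the It\^o isometry I would record the three moments $\mathrm{Var}(G_X)=\lambda^2\sigma_T^2 k_X^2(t-t_0)$, $\mathrm{Var}(G_T)=\sigma_T^2 k_T^2(t-t_0)$ and $\mathrm{Cov}(G_X,G_T)=\lambda\sigma_T^2 k_{XT}^2(t-t_0)$, with $k_X,k_T,k_{XT}$ as in~\eqref{eq:k}. The expectation is then evaluated by an exponential change of measure (equivalently, by completing the square): the tilting density $e^{G_X}/\E(e^{G_X})$ keeps $(G_X,G_T)$ Gaussian but shifts the mean of $G_T$ by $\mathrm{Cov}(G_X,G_T)$, giving
\begin{equation*}
\E\big(e^{G_X}\mathbbm{1}_{G_T\le c}\big) = e^{\frac{1}{2}\lambda^2\sigma_T^2 k_X^2(t-t_0)}\,\Phi\!\left(\frac{c-\lambda\sigma_T^2 k_{XT}^2(t-t_0)}{\sigma_T k_T(t-t_0)}\right).
\end{equation*}
I expect this Gaussian computation to be the main, though standard, obstacle, since one must correctly track the mean shift induced by the correlation between $G_X$ and $G_T$ (itself generated purely by the shared noise $W^T$).

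Finally, I would collect the three factors and recognise that $e^{\bar X}\,\varphi(-i;t-t_0)\,e^{\frac{1}{2}\lambda^2\sigma_T^2 k_X^2(t-t_0)}$ is exactly $\psi_{X_{t_0}}(-i;t-t_0)$, obtained by evaluating~\eqref{eq:charac_nig_update} at $u=-i$ (there the prefactor $iu(\cdot)$ becomes $(\cdot)$ and the term $-\tfrac12(\cdot)u^2$ becomes $+\tfrac12(\cdot)$ since $(-i)^2=-1$). Substituting $c=u-\bar T_0$ and simplifying the argument of $\Phi$ produces precisely the integrand of~\eqref{eq:eHDD}; integrating over $u\in[T^0,\bar T]$ and summing over $t$ from $t_1$ to $t_2$ then completes the proof.
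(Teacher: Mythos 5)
Your proof is correct and follows essentially the same route as the paper: the layer-cake identity~\eqref{decompo_put}, factorization via independence of $L^X$ and $W^T$, an explicit bivariate Gaussian computation, and recognition of the factor $\psi_{X_{t_0}}(-i;t-t_0)$ — this is exactly the content of the paper's Proposition~\ref{prop:eX1}, which the paper's one-line proof invokes after~\eqref{decompo_put}. The only (immaterial) difference is that you evaluate $\E\big[e^{G_X}\mathbbm{1}_{G_T\le c}\big]$ by exponential tilting, whereas the paper decomposes the pair into independent standard Gaussians via Proposition~\ref{prop:vector} and applies the identity $\E[e^{xG}\mathbbm{1}_{G\le a}]=e^{x^2/2}\Phi(a-x)$; both yield the same mean shift by $\mathrm{Cov}(G_X,G_T)=\lambda\sigma_T^2 k_{XT}^2(t-t_0)$ and hence the same integrand.
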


\begin{proof}
    For $t \in [t_1, t_2]$, we use~\eqref{decompo_put} and apply then Proposition~\ref{prop:eX1}.
\end{proof}

Similarly, we can calculate explicitly the average payoff of a $\mathcal{E}$-CDD defined as:
\begin{align} 
    \mathcal{E-}CDD(t_1,t_2) = \E \Big( \sum_{t=t_1}^{t_2}  S_t (T_t-\bar{T})^+ \mid \mathcal{F}_{t_0}\Big). 
\end{align}

\begin{prop}
Under  Model~\eqref{eq:comb} and for $t \in [t_1, t_2]$, we have:
\begin{equation} \label{eq:eCDD}
    \begin{aligned}
    \mathcal{E-}CDD(t_1,t_2) &= \sum_{t=t_1}^{t_2}  \Bigg[  \psi_{X_{t_0}}(-i ;t -t_0) \times \\
    &   \int_{\bar{T}}^{T^{m}}  \Phi\Big( \lambda \sigma_T \frac{k_{XT}^2(t-t_0)}{k_{T}(t-t_0)} - \frac{u - (\mu_T(t) + e^{-\kappa_T (t-t_0)}(T_{t_0}-\mu_T(t_0))) }{\sigma_T k_T(t-t_0) }   \Big) du \Bigg]
\end{aligned}
\end{equation}
where $T^m=+\infty$, $\psi$ is the characteristic function defined in Equation~\eqref{eq:charac_nig_update}, $\Phi$ is the cumulative function of the standard Gaussian distribution, $k_T(\cdot)$, $k_X(\cdot)$ and $k_{XT}(\cdot)$ are as in Equation~\eqref{eq:k}.
\end{prop}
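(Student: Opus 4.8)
The plan is to mirror the proof of the $\mathcal{E}$-HDD formula, simply replacing the ``put'' layer-cake decomposition by the corresponding ``call'' one. First I would write, by Fubini's theorem applied to the nonnegative integrand, the representation $(T_t-\bar{T})^+=\int_{\bar{T}}^{+\infty}\mathbbm{1}_{T_t>u}\,du$, so that after conditioning on $\mathcal{F}_{t_0}$ and exchanging expectation and integral,
$$\E\big(S_t (T_t-\bar{T})^+ \mid \mathcal{F}_{t_0}\big) = \int_{\bar{T}}^{T^m} \E\big(e^{X_t}\mathbbm{1}_{T_t > u} \mid \mathcal{F}_{t_0}\big)\, du, \qquad T^m=+\infty.$$
This is the exact analogue of~\eqref{decompo_put}, the only change being the orientation of the indicator (and the direction of integration).

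Next, I would reduce $\E(e^{X_t}\mathbbm{1}_{T_t > u}\mid\mathcal{F}_{t_0})$ to the quantity already computed in Proposition~\ref{prop:eX1}. Writing $\mathbbm{1}_{T_t>u}=1-\mathbbm{1}_{T_t\le u}$ gives
$$\E\big(e^{X_t}\mathbbm{1}_{T_t > u}\mid\mathcal{F}_{t_0}\big) = \E\big(e^{X_t}\mid\mathcal{F}_{t_0}\big) - \E\big(e^{X_t}\mathbbm{1}_{T_t\le u}\mid\mathcal{F}_{t_0}\big).$$
The first term equals $\psi_{X_{t_0}}(-i;t-t_0)$, by evaluating the conditional characteristic function~\eqref{eq:charac_nig_update} at $u=-i$ (so that $e^{iuX_t}=e^{X_t}$), while the second term is exactly the expression produced by Proposition~\ref{prop:eX1} and already used in the $\mathcal{E}$-HDD formula. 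Factoring out the common prefactor $\psi_{X_{t_0}}(-i;t-t_0)$ leaves
$$\psi_{X_{t_0}}(-i;t-t_0)\left[1 - \Phi\!\left(\frac{u - (\mu_T(t)+e^{-\kappa_T(t-t_0)}(T_{t_0}-\mu_T(t_0)))}{\sigma_T k_T(t-t_0)} - \lambda\sigma_T\frac{k_{XT}^2(t-t_0)}{k_T(t-t_0)}\right)\right].$$

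Finally, I would apply the reflection identity $1-\Phi(x)=\Phi(-x)$ to flip the sign inside the Gaussian c.d.f., turning the bracket into
$$\Phi\!\left(\lambda\sigma_T\frac{k_{XT}^2(t-t_0)}{k_T(t-t_0)} - \frac{u - (\mu_T(t)+e^{-\kappa_T(t-t_0)}(T_{t_0}-\mu_T(t_0)))}{\sigma_T k_T(t-t_0)}\right),$$
which is precisely the integrand of~\eqref{eq:eCDD}; inserting this into the $u$-integral and summing over $t$ from $t_1$ to $t_2$ finishes the proof. I do not anticipate any genuine obstacle, since the whole analytic content — the joint conditional law of $(X_t,T_t)$ and the resulting Gaussian c.d.f. — is already packaged in Proposition~\ref{prop:eX1}. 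The only points needing a word of justification are that Fubini applies (immediate from nonnegativity) and that the $u$-integral converges despite $T^m=+\infty$, which holds because $\Phi$ of the displayed argument tends to $0$ with a Gaussian tail as $u\to+\infty$.
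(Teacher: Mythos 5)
Your proof is correct and follows essentially the same route as the paper: the call-type layer-cake decomposition $(T_t-\bar{T})^+=\int_{\bar{T}}^{\infty}\mathbbm{1}_{T_t> u}\,du$ followed by Proposition~\ref{prop:eX1}, which is exactly how the paper obtains~\eqref{eq:eCDD} in analogy with the $\mathcal{E}$-HDD case. The only cosmetic difference is that you recover $\E(e^{X_t}\mathbbm{1}_{u\le T_t}\mid\mathcal{F}_{t_0})$ by complementation and the identity $1-\Phi(x)=\Phi(-x)$, whereas the paper's Proposition~\ref{prop:eX1} already states this second identity directly.
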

\noindent In practice, \eqref{eq:eCDD} can be used with $T_m=100$  with a negligible error, the probability of having temperature above 100°C being infinitesimal. 

Figures in~\ref{fig:eHDD_eCDD} show the price of monthly $\mathcal{E}$-HDD and $\mathcal{E}$-CDD computed 30 days in advance for French data. The prices have been computed with 100,000 Monte Carlo simulations and with explicit formula in Equations~\eqref{eq:eHDD} and~\eqref{eq:eCDD}. We can see that both methodologies provide same results. However, computation with formulas is, again, around 40 times faster for $\mathcal{E}$-HDD and 3 times faster for $\mathcal{E}$-CDD. In addition, we compute the price with $\lambda=0$, the impact on prices is visual for $\mathcal{E}$-CDD as the prices are out of the confidence interval of the Monte Carlo simulations. This shows the significance of $\lambda$ on the valuation of derivatives.

\begin{figure}[ht!]
    \begin{minipage}[b]{0.45\linewidth}
        \centering
        \includegraphics[width=\textwidth]{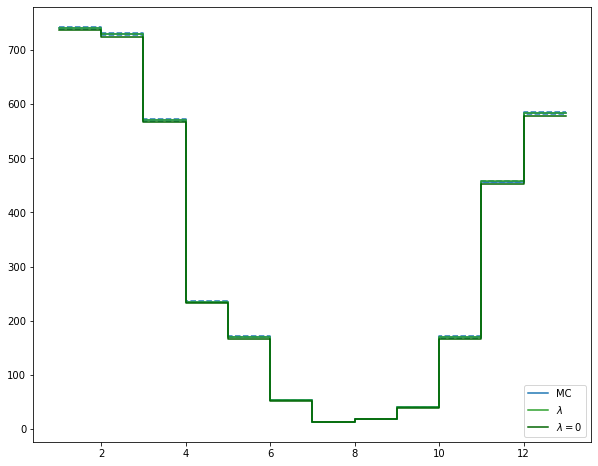}
    \end{minipage}
    \hspace{0.3cm}
    \begin{minipage}[b]{0.45\linewidth}
        \centering
        \includegraphics[width=\textwidth]{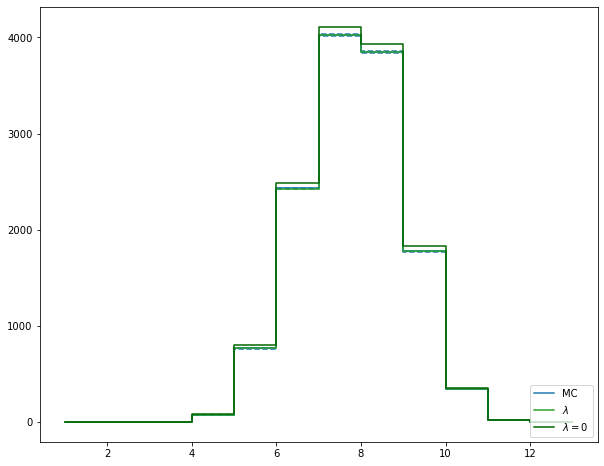}
    \end{minipage}
    \caption{$\mathcal{E}$-HDD (left) and $\mathcal{E}$-CDD (right) prices computed with 100,000 simulation-Monte Carlo (blue) and Equations~\eqref{eq:eHDD} and~\eqref{eq:eCDD} (green) methods. Each contract lasts a month of 2018. $t_0$ corresponds to $30$ days ahead of the first day of the month, $t_1$ to the first day of the month and $t_2$ to the last day of the month. The computation of the derivatives through the formulas around 40 times faster for $\mathcal{E}$-HDD and 3 times faster for $\mathcal{E}$-CDD. } \label{fig:eHDD_eCDD}
\end{figure}

\paragraph{Quanto options} Let now consider double sided options given by Equation~\eqref{eq:eqcucu} with the following payoff function specifications $f_S(\cdot)=(\cdot-\bar{S})^+$ and $f_T(\cdot)=(\bar{T}-\cdot)^+$. We consider then the average payoff function:
\begin{align}
    \mathcal{Q}(t_1,t_2) = \E \Big( \sum_{t=t_1}^{t_2}   (S_t - \bar{S})^+  (\bar{T}-T_t)^+ \mid \mathcal{F}_{t_0}\Big)
\end{align}
Given our Model~\eqref{eq:comb}, there is no explicit formula for $\mathcal{Q}$ up to our knowledge. However, we suggest to perform a Taylor's series expansion to the first order on $\lambda$ given that $\lambda$ is quite small. The next proposition gives semi-explicit formulas for this expansion. 

\begin{prop}
Under  Model~\eqref{eq:comb} and for $t \in [t_1, t_2]$, we have the following Taylor's series expansion:
    \begin{equation} \label{eq:quanto_DL}
        \begin{aligned}
              \mathcal{Q}(t_1,t_2) &=  \sum_{t=t_1}^{t_2} \Bigg( \E_{\lambda=0}( (S_{t}- \bar{S})^+\mid \mathcal{F}_{t_0}) \times
              \Bigg(\Big(\Bar{T} - \mu_T(t) - e^{-\kappa_T(t-t_0)} (T_{t_0} - \mu_T(t_0)) \Big) \times \\
             &\Phi\Big(\frac{\Bar{T} - \mu_T(t) - e^{-\kappa_T(t-t_0)} (T_{t_0} - \mu_T(t_0) )}{\sigma_T k_T(t-t_0)}\Big) \\
            &+\frac{\sigma_T k_T(t-t_0)}{\sqrt{2 \pi}} \exp\Big(-\frac{1}{2} \Big( \frac{\Bar{T} - \mu_T(t) - e^{-\kappa_T(t-t_0)} (T_{t_0} - \mu_T(t_0) )}{\sigma_T k_T(t-t_0)}\Big)^2\Big) \Bigg) \\
            &- \Big( \E_{\lambda=0} ( (S_{t}- \bar{S})^+\mid \mathcal{F}_{t_0}) + \bar{S} \mathbb{P}_{\lambda=0} \left(   S_{t} \geq \bar{S}   \mid \mathcal{F}_{t_0} \right ) \Big) \times \\
            &\sigma_T^2 k_{XT}(t-t_0)^2 \Phi\Big(\frac{\bar{T}-\mu_T(t) - e^{-\kappa_T (t-t_0)}(T_{t_0} -\mu_T(t_0))}{-\sigma_T k_T(t-t_0)}\Big) \Big)\lambda \Bigg) + o(\lambda)
        \end{aligned}
\end{equation}
where $\Phi$ is the cumulative distribution function of the standard Gaussian distribution,  $k_T(\cdot)$ , $k_X(\cdot)$ and $k_{XT}(\cdot)$ are as in Equation~\eqref{eq:k}.
\end{prop}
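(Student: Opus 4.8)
The plan is to condition on $\mathcal{F}_{t_0}$, handle each summand $\E\big((S_t-\bar S)^+(\bar T-T_t)^+\mid\mathcal{F}_{t_0}\big)$ on its own, and expand it in powers of $\lambda$. The starting point is the integrated form~\eqref{eq:comb_int} taken from $s=t_0$: the parameter $\lambda$ enters only through the Brownian term of the log-price, so that
\[
S_t=S_t^0\,e^{\lambda \tilde G_X},\qquad \tilde G_X:=\sigma_T\int_{t_0}^{t}e^{-\kappa_X(t-u)}\,dW^T_u,
\]
where $S_t^0=\exp\!\big(\mu_X(t)+e^{-\kappa_X(t-t_0)}(X_{t_0}-\mu_X(t_0))+\int_{t_0}^{t}e^{-\kappa_X(t-u)}dL^X_u\big)$ is the price obtained for $\lambda=0$ and depends only on the Lévy integral. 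Writing $T_t=\nu_T(t)+G_T$ with $\nu_T(t)=\mu_T(t)+e^{-\kappa_T(t-t_0)}(T_{t_0}-\mu_T(t_0))$ and $G_T=\sigma_T\int_{t_0}^{t}e^{-\kappa_T(t-u)}dW^T_u$, the pair $(\tilde G_X,G_T)$ is a centered Gaussian vector, independent of $S_t^0$ (because $L^X\perp W^T$), with $\mathrm{Var}(G_T)=\sigma_T^2k_T(t-t_0)^2$ and $\mathrm{Cov}(\tilde G_X,G_T)=\sigma_T^2k_{XT}(t-t_0)^2$, the latter being exactly~\eqref{cov_residuals}.

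Next I would set $h(\lambda):=\E\big((S_t^0e^{\lambda\tilde G_X}-\bar S)^+(\bar T-T_t)^+\mid\mathcal{F}_{t_0}\big)$ and write $h(\lambda)=h(0)+h'(0)\lambda+o(\lambda)$, summing over $t$ at the end. For $h(0)$, independence of $S_t^0$ and $T_t$ at $\lambda=0$ gives the product $\E\big((S_t^0-\bar S)^+\mid\mathcal{F}_{t_0}\big)\,\E\big((\bar T-T_t)^+\mid\mathcal{F}_{t_0}\big)$, and the Bachelier put formula for $T_t\sim\mathcal N(\nu_T(t),\sigma_T^2k_T^2)$, namely $\E[(\bar T-T_t)^+]=(\bar T-\nu_T(t))\Phi\big(\tfrac{\bar T-\nu_T(t)}{\sigma_Tk_T}\big)+\tfrac{\sigma_Tk_T}{\sqrt{2\pi}}e^{-(\bar T-\nu_T(t))^2/(2\sigma_T^2k_T^2)}$, reproduces the bracket on the first two lines of~\eqref{eq:quanto_DL}.

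The first-order term is the crux. Differentiating $(S_t^0e^{\lambda\tilde G_X}-\bar S)^+$ in $\lambda$ and evaluating at $0$ yields $h'(0)=\E\big(\mathbbm{1}_{\{S_t^0>\bar S\}}S_t^0\,\tilde G_X\,(\bar T-T_t)^+\mid\mathcal{F}_{t_0}\big)$. Since $\mathbbm{1}_{\{S_t^0>\bar S\}}S_t^0$ is a function of the Lévy integral alone while $\tilde G_X(\bar T-T_t)^+$ is a function of $W^T$ alone, independence factorizes $h'(0)$ into $\E\big(S_t^0\mathbbm{1}_{\{S_t^0>\bar S\}}\big)\cdot\E\big(\tilde G_X(\bar T-T_t)^+\big)$. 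The first factor is $\E\big((S_t^0-\bar S)^+\big)+\bar S\,\mathbb{P}(S_t^0\ge\bar S)$, which is the combination appearing in~\eqref{eq:quanto_DL}. For the second factor I would use Stein's lemma (equivalently, regress $\tilde G_X=\tfrac{\mathrm{Cov}}{\mathrm{Var}}G_T+\varepsilon$ with $\varepsilon\perp G_T$): with $g(y)=(\bar T-\nu_T(t)-y)^+$ and $g'(y)=-\mathbbm{1}_{\{y<\bar T-\nu_T(t)\}}$ one gets $\E\big(\tilde G_X(\bar T-T_t)^+\big)=\mathrm{Cov}(\tilde G_X,G_T)\,\E[g'(G_T)]=-\sigma_T^2k_{XT}^2\,\mathbb{P}\big(G_T<\bar T-\nu_T(t)\big)=-\sigma_T^2k_{XT}^2\,\Phi\big(\tfrac{\bar T-\nu_T(t)}{\sigma_Tk_T}\big)$. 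Multiplying the two factors produces the $\lambda$-coefficient of~\eqref{eq:quanto_DL}.

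The step I expect to be the main obstacle is justifying the interchange of differentiation and expectation, since $x\mapsto(x-\bar S)^+$ is only Lipschitz. The integrand is however a.s.\ differentiable in $\lambda$ because $S_t^0$ has a continuous law, so $\{S_t^0e^{\lambda\tilde G_X}=\bar S\}$ is null; dominated convergence then applies once one bounds $S_t^0|\tilde G_X|e^{|\lambda||\tilde G_X|}$ in $L^1$ uniformly for $\lambda$ in a neighbourhood of $0$. This integrability rests on the exponential moments of the NIG law (the same condition already implicit in the finiteness of $\varphi(-i;\cdot)$ used for the forward price) combined with the Gaussian tails of $\tilde G_X$. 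The two remaining Gaussian integrals are routine, so the only genuine work is this moment/dominated-convergence bookkeeping and the careful assembling of the terms.
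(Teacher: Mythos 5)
Your proof is correct and follows essentially the same route as the paper's own proof (Proposition~\ref{prop:Taylor_expansion} in the appendix): the zeroth-order term factorizes by independence of the L\'evy and Brownian drivers into the Carr--Madan factor times the Bachelier put value, and the first-order term is the derivative at $\lambda=0$, which factorizes into the L\'evy factor $\E_{\lambda=0}[(S_t-\bar{S})^+\mid\mathcal{F}_{t_0}]+\bar{S}\,\mathbb{P}_{\lambda=0}(S_t\ge\bar{S}\mid\mathcal{F}_{t_0})$ times a purely Gaussian factor. Your use of Stein's lemma for that Gaussian factor is just a repackaging of the paper's argument, which projects $\int_{t_0}^{t}e^{-\kappa_X(t-u)}dW^T_u$ onto $\int_{t_0}^{t}e^{-\kappa_T(t-u)}dW^T_u$ (Proposition~\ref{prop:vector}) and then invokes Lemma~\ref{lemma:gauss_comb0}; both yield
\begin{equation*}
\E\Bigl[\sigma_T\!\int_{t_0}^{t}e^{-\kappa_X(t-u)}dW^T_u\,(\bar{T}-T_t)^+\Bigm|\mathcal{F}_{t_0}\Bigr]
=-\sigma_T^2\,k_{XT}^2(t-t_0)\,\Phi\Bigl(\tfrac{\bar{T}-\nu_T(t)}{\sigma_T k_T(t-t_0)}\Bigr),
\quad \nu_T(t):=\mu_T(t)+e^{-\kappa_T(t-t_0)}(T_{t_0}-\mu_T(t_0)).
\end{equation*}
Two remarks. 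First, your coefficient carries $+\sigma_T k_T(t-t_0)$ inside $\Phi$, which matches the appendix version of this proposition but not the displayed statement~\eqref{eq:quanto_DL}, where the denominator reads $-\sigma_T k_T(t-t_0)$; that minus sign appears to be a typo in the main text, and your sign (the appendix's) is the correct one. Second, your justification of the differentiation under the conditional expectation (a.s.\ differentiability since $S^0_t$ has a continuous law, plus domination using the NIG exponential moment already implicit in $\varphi(-i;\cdot)$ and Gaussian tails) addresses a step the paper passes over in silence, so on this point your write-up is actually more complete than the paper's. A very minor slip: the covariance $\mathrm{Cov}(\tilde G_X,G_T)=\sigma_T^2 k_{XT}^2(t-t_0)$ is~\eqref{cov_residuals} divided by $\lambda$, not~\eqref{cov_residuals} itself, but this plays no role in your argument.
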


\noindent This result is a direct application of Proposition~\ref{prop:Taylor_expansion}. Note that $ \E_{\lambda=0}( (S_{t}- \bar{S})^+\mid \mathcal{F}_{t_0})$ (resp. $\mathbb{P}_{\lambda=0} \left( S_{t} \geq \bar{S}   \mid \mathcal{F}_{t_0} \right ) $) can be computed efficiently as in Equation~\eqref{eq:carr_madan} (resp. Equation~\eqref{eq:gil-pelaez}).

Figure~\ref{fig:quanto_month} shows quanto prices computed with Equation~\eqref{eq:quanto_DL} and 100,000 simulation-Monte Carlo simulations. We can see that the first order Taylor development is sufficient as prices are always within Monte Carlo confidence intervals. We compare this with a quanto approached keeping only the first term of Taylor development which is equivalent to $\lambda=0$. This time, average payoff values are close but out of the Monte Carlo confidence intervals. Explicit formula computation remains faster than Monte Carlo simulations however it gets less attractive than for previous derivatives as it includes several numerical integrations.

\begin{figure}[ht!]
    \begin{minipage}[b]{0.45\linewidth}
        \centering
        \includegraphics[width=\textwidth]{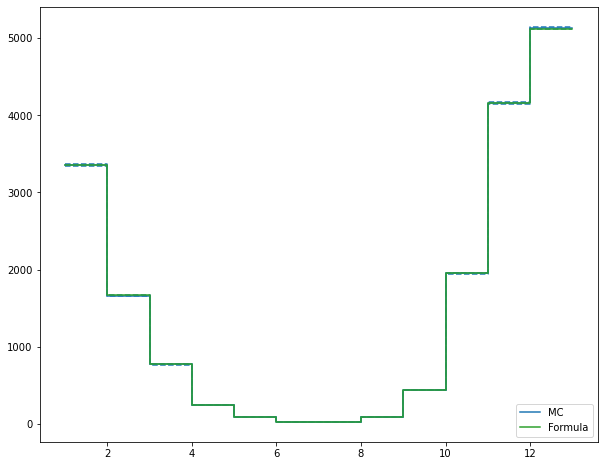}
    \end{minipage}
    \hspace{0.3cm}
    \begin{minipage}[b]{0.45\linewidth}
        \centering
        \includegraphics[width=\textwidth]{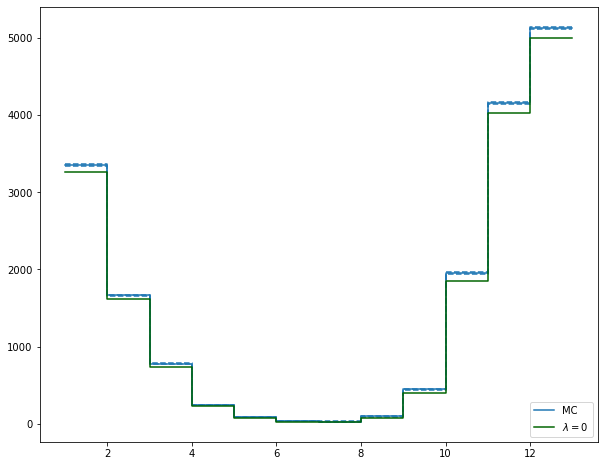}
    \end{minipage}
    \caption{Quanto prices computed with 100,000 simulation-Monte Carlo (blue) and Equation~\eqref{eq:quanto_DL} (green) methods. On the left the price corresponds to Formula~\eqref{eq:quanto_DL}. On the right only the first term of the Taylor development in Equation~\eqref{eq:quanto_DL} is considered. This is equivalent to consider $\lambda=0$. Each contract lasts a month of 2018. Time $t_0$ corresponds to $30$ days ahead of the first day of the month, $t_1$ to the first day of the month and $t_2$ to the last day of the month. The computation of the derivatives through the formulas is around 6 times faster than using Monte Carlo simulations. } \label{fig:quanto_month}
\end{figure}

To sum up, we have developed in this subsection explicit or semi-explicit formulas for futures, swaps, single-sided and double sided options given Model~\eqref{eq:comb}. These formulas are verified through Monte Carlo simulations and remain faster to use than Monte Carlo techniques. Finally, approaching the formulas with $\lambda=0$ provides results close to the simulated ones but out of their confidence intervals showing the significance of the dependence between electricity and temperature on the risk associated to these derivatives.

\subsection{Static hedging of hybrid derivatives}

In the following section, we leverage explicit formulas developed in the above subsection and discuss potential statistic hedging strategies for $\mathcal{E}$-HDD and double-sided quantos. 

We remind that hedging challenges are key for portfolio and risk managers to handle risk aggregation questions and meet solvency constraints. The aim of this subsection is to show that while quantos are somehow exotic derivatives, they can be hedged through more common derivatives increasing their attractiveness and risk understanding.

\subsubsection{Risk decomposition of $\mathcal{E}$-HDD}
Let first focus on $\mathcal{E}$-HDD and take a self-financing portfolio approach. We suppose that we are at time $t$ and that we want to find the static portfolio made with single $HDD$, future on electricity and cash that minimizes the square hedging error at $t+\Delta$ where $\Delta>0$:
\begin{equation} \label{eq:min_eHDD}
    \E[(S_{t+\Delta}(\bar{T}-T_{t+\Delta})^+-c^0_{t,t+\Delta}-c^1_{t,t+\Delta}(\bar{T}-T_{t+\Delta})^+-c^2_{t,t+\Delta} S_{t+\Delta})^2|\mathcal{F}_t].
\end{equation}

\begin{remark}
The risk related to the future on the electricity spot can in principle be hedged dynamically on the electricity market.  Instead, the risk related to the elementary HDD $(\bar{T}-T_t)^+$ cannot  be hedged.    
\end{remark}
This is a quadratic function with respect to $(c^0_{t,t+\Delta},c^1_{t,t+\Delta},c^2_{t,t+\Delta})$ and the first order condition leads to:
\begin{equation} \label{eq:quad_eHDD}
\begin{aligned}
\left[\begin{matrix}
 1 & \E[(\bar{T}-T_{t+\Delta})^+|\mathcal{F}_t] & \E[S_{t+\Delta}|\mathcal{F}_t]   \\    \E[(\bar{T}-T_{t+\Delta})^+|\mathcal{F}_t]  & \E[((\bar{T}-T_{t+\Delta})^+)^2|\mathcal{F}_t] & \E[S_{t+\Delta}(\bar{T}-T_{t+\Delta})^+|\mathcal{F}_t]\\ 
            \E[S_{t+\Delta}|\mathcal{F}_t]  & \E[S_{t+\Delta}(\bar{T}-T_{t+\Delta})^+|\mathcal{F}_t] & \E[S_{t+\Delta}^2|\mathcal{F}_t]
    \end{matrix}\right]\left[\begin{matrix}c^0_{t,t+\Delta} \\c^1_{t,t+\Delta} \\c^2_{t,t+\Delta} \end{matrix}\right] \\ \\
    =\left[\begin{matrix} \E[S_{t+\Delta}(\bar{T}-T_{t+\Delta})^+ |\mathcal{F}_t]  \\\E[S_{t+\Delta}((\bar{T}-T_{t+\Delta})^+)^2 |\mathcal{F}_t] \\\E[S_{t+\Delta}^2(\bar{T}-T_{t+\Delta})^+ |\mathcal{F}_t] \end{matrix}\right]
\end{aligned}
\end{equation}

\begin{prop}\label{prop_hedge_eHDD}
    Under  Model~\eqref{eq:comb} and for $\Delta >0$, the vector  $(c^0_{t,t+\Delta},c^1_{t,t+\Delta},c^2_{t,t+\Delta})$ minimising the function~\eqref{eq:min_eHDD} is the unique solution of the linear equation~\eqref{eq:quad_eHDD}, whose components can be explicitly or semi-explicitly calculated.
\end{prop}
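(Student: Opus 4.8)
\emph{Proof plan.} The plan is to read \eqref{eq:min_eHDD} as a conditional $L^2$-projection and then reduce every coefficient of the resulting normal equations to the building blocks already obtained in Subsection~\ref{subsec:expected_values}. Writing $Z_0=1$, $Z_1=(\bar{T}-T_{t+\Delta})^+$, $Z_2=S_{t+\Delta}$ and $Y=S_{t+\Delta}(\bar{T}-T_{t+\Delta})^+$, the quantity in \eqref{eq:min_eHDD} equals $\E[(Y-\sum_{j}c^j_{t,t+\Delta}Z_j)^2\mid\mathcal{F}_t]$, a convex quadratic in $(c^0_{t,t+\Delta},c^1_{t,t+\Delta},c^2_{t,t+\Delta})$ whose gradient vanishes exactly at the normal equations \eqref{eq:quad_eHDD}. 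Its Hessian is twice the conditional Gram matrix $G=(\E[Z_iZ_j\mid\mathcal{F}_t])_{0\le i,j\le 2}$, so to get both existence and uniqueness it suffices to show $G\succ 0$.

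For uniqueness I would show that $1$, $(\bar{T}-T_{t+\Delta})^+$ and $S_{t+\Delta}$ are linearly independent in $L^2(\mathcal{F}_t)$. A clean argument: suppose $a+b(\bar{T}-T_{t+\Delta})^++cS_{t+\Delta}=0$ almost surely; restricting to the event $\{T_{t+\Delta}>\bar{T}\}$ kills the HDD term and leaves $a+cS_{t+\Delta}=0$, which forces $a=c=0$ because $S_{t+\Delta}=e^{X_{t+\Delta}}$ has a continuous conditional law (Gaussian part from $W^T$ plus an independent NIG part), and then $b=0$ follows on $\{T_{t+\Delta}\le\bar{T}\}$ since $(\bar{T}-T_{t+\Delta})$ takes a continuum of values there. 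Hence $\det G>0$, the linear system \eqref{eq:quad_eHDD} admits a unique solution, and it is the global minimizer of \eqref{eq:min_eHDD}.

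It remains to compute the seven distinct expectations appearing in \eqref{eq:quad_eHDD}, which I would split into three families. First, the pure-temperature moments $\E[(\bar{T}-T_{t+\Delta})^+\mid\mathcal{F}_t]$ and $\E[((\bar{T}-T_{t+\Delta})^+)^2\mid\mathcal{F}_t]$ are explicit Bachelier-type Gaussian integrals, since $T_{t+\Delta}\mid\mathcal{F}_t\sim\mathcal{N}(\mu_T(t+\Delta)+e^{-\kappa_T\Delta}\tilde{T}_t,\ \sigma_T^2 k_T(\Delta)^2)$. Second, the pure-price moments $\E[S_{t+\Delta}\mid\mathcal{F}_t]=\psi_{X_t}(-i;\Delta)$ and $\E[S_{t+\Delta}^2\mid\mathcal{F}_t]=\psi_{X_t}(-2i;\Delta)$ follow directly from the conditional characteristic function \eqref{eq:charac_nig_update}. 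Third, the coupled moments $\E[S_{t+\Delta}^k((\bar{T}-T_{t+\Delta})^+)^m\mid\mathcal{F}_t]$ for $(k,m)\in\{(1,1),(2,1),(1,2)\}$ are handled exactly as in the derivation of \eqref{eq:eHDD}: using the layer-cake identity $(\bar{T}-T_{t+\Delta})^+=\int_{T^0}^{\bar{T}}\mathbbm{1}_{T_{t+\Delta}\le u}\,du$ (and its square written as a double integral), each reduces to integrating $\E[S_{t+\Delta}^k\mathbbm{1}_{T_{t+\Delta}\le u}\mid\mathcal{F}_t]$, which equals $\psi_{X_t}(-ki;\Delta)$ times a standard Gaussian CDF evaluated at a mean shifted by $k\lambda\sigma_T k_{XT}(\Delta)^2/k_T(\Delta)$, the shift produced by completing the square on the Brownian increment $W^T$ shared by $X$ and $T$.

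The obstacle I anticipate is technical rather than conceptual. The $k=2$ terms require $\varphi$ evaluated at $-2i$, hence the exponential-moment condition $\alpha^X>\beta^X+2$ guaranteeing that $\E[S_{t+\Delta}^2\mid\mathcal{F}_t]$ is finite (satisfied by the parameters of Table~\ref{tab:parameters_simu}), while the squared-HDD term $\E[S_{t+\Delta}((\bar{T}-T_{t+\Delta})^+)^2\mid\mathcal{F}_t]$ leaves a genuine double integral over the temperature threshold that does not collapse to closed form. This is precisely why the components are only semi-explicit: each is expressed through the NIG characteristic function and the standard Gaussian CDF, but one remaining numerical integration over $u$ (two for the squared HDD) cannot be avoided.
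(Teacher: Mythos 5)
Your proposal is correct and follows essentially the same route as the paper: reduce \eqref{eq:min_eHDD} to the normal equations \eqref{eq:quad_eHDD}, prove invertibility of the conditional Gram matrix via linear independence of $1$, $(\bar{T}-T_{t+\Delta})^+$ and $S_{t+\Delta}$ (your restriction-to-events argument usefully spells out what the paper dismisses as ``clearly impossible''), and then identify each entry with the characteristic-function and Gaussian-CDF formulas of the appendix (Propositions~\ref{prop:E_T_positive}, \ref{prop:E_T_2}, \ref{prop:eX1}, \ref{prop:eS2}, \ref{prop:S_T^2}). The only minor discrepancy is your final remark: the squared-HDD coupled term does not leave a genuine double integral, since by the symmetry identity $((\bar{T}-T_{t+\Delta})^+)^2=2\int_{T^0}^{\bar{T}}(\bar{T}-u)\mathbbm{1}_{T_{t+\Delta}\le u}\,du$ used in Proposition~\ref{prop:S_T^2} it collapses to a single one-dimensional integral of the quantity computed in Proposition~\ref{prop:eX1}.
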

    
\begin{proof}
    The fact that the linear quadratic problem~\eqref{eq:min_eHDD} boils down to~\eqref{eq:quad_eHDD} is standard, one only has to check that the matrix on the left-hand side is invertible so that there is a unique solution. This matrix is a (conditional) covariance matrix: it is invertible, otherwise we could find $\cF_t$-measurable coefficients $(c^0_{t,t+\Delta},c^1_{t,t+\Delta},c^2_{t,t+\Delta})$ such that $c^0_{t,t+\Delta}+c^1_{t,t+\Delta}(\bar{T}-T_{t+\Delta})^+ +c^2_{t,t+\Delta} S_{t+\Delta}=0$, which is clearly impossible from~\eqref{eq:comb_int}.

    We now recall how to calculate explicitly or semi-explicitly: $ \E[(\bar{T}-T_{t+\Delta})^+|\mathcal{F}_t] $ is given by Proposition~\ref{prop:E_T_positive},  $\E[((\bar{T}-T_{t+\Delta})^+)^2|\mathcal{F}_t]$ by Proposition~\ref{prop:E_T_2}, $\E[S_{t+\Delta}|\mathcal{F}_t]$ and $\E[S^2_{t+\Delta}|\mathcal{F}_t]$ can be calculated with the characteristic function in Equation~\eqref{eq:charac_nig},  $\E[S_{t+\Delta}(\bar{T}-T_{t+\Delta})^+|\mathcal{F}_t]$ by Proposition~\ref{prop:eX1} and $\E[S_{t+\Delta}^2(\bar{T}-T_{t+\Delta})^+|\mathcal{F}_t]$  by Proposition~\ref{prop:eS2}. 
    Finally we compute $\E[S_{t+\Delta}((\bar{T}-T_{t+\Delta})^+)^2|\mathcal{F}_t]$ by using Proposition~\ref{prop:S_T^2}.
\end{proof}

Figures~\ref{fig:portfolio_jan} show the results of the daily portfolio optimisation during 31 days starting the 1st January 2018.  First,  we can observe that $c^0$ and $c^1$ present a seasonality. This is explained by the integration of weekend days where there is small energy trade while the constant $c^0$ and $c^1$ are related to non-seasonal tools. This seasonality is not present in $c^2$ as the instrument hedged by the $c^2$ has the same seasonality as the output. Second, we can also comment on the signs of $c^1$ and $c^2$ which are both positive. This shows that both instruments are used to hedge the output $S_{t+\Delta}(\bar{T}-T_{t+\Delta})^+-c^0_{t,t+\Delta}-c^1_{t,t+\Delta}(\bar{T}-T_{t+\Delta})^+-c^2_{t,t+\Delta} S_{t+\Delta}$. The cash quantity $c^0$ accommodates to respond to the minimisation. Third, we can observe that for $\lambda$ non-zero a small share of $c^0$ is reported to $c^2$ as we are leveraging the dependence structure between the energy and the temperature.

\begin{figure}[ht!]
    \begin{minipage}[b]{0.32\linewidth}
        \centering
        \includegraphics[width=\textwidth]{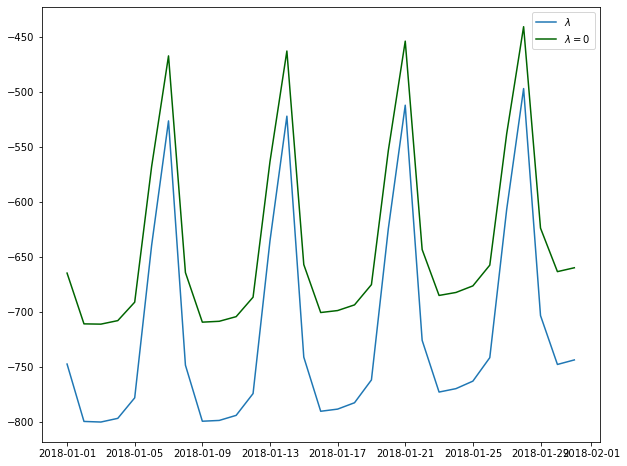}
    \end{minipage}
    \hspace{0.0cm}
    \begin{minipage}[b]{0.31\linewidth}
        \centering
        \includegraphics[width=\textwidth]{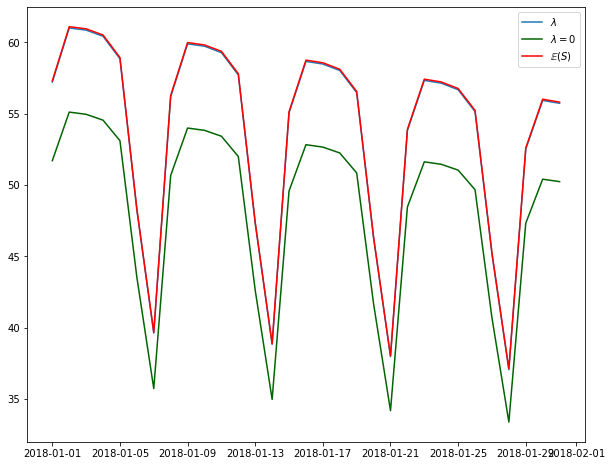}
    \end{minipage}
    \begin{minipage}[b]{0.32\linewidth}
        \centering
        \includegraphics[width=\textwidth]{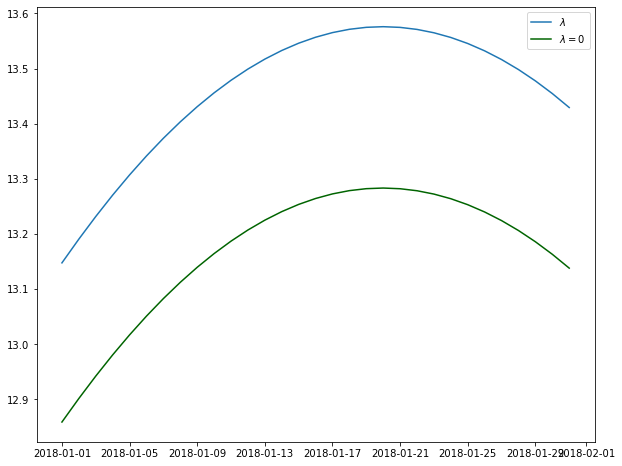}
    \end{minipage}
    \caption{From top left to bottom right, $c^0_{t_0,t_1+i\Delta}$, $c^1_{t_0,t_1+i\Delta}$ and $c^2_{t_0,t_1+i\Delta}$  starting from 1st January 2018 ($t_1$) and with $t_0 = t_1 - 30$, $\Delta=1$ and $i=0,\dots,30$.} \label{fig:portfolio_jan}
\end{figure}

Figure~\ref{fig:hist_portfolio} shows the empirical density of the portfolio only composed by $S_{t+\Delta}(\bar{T}-T_{t+\Delta})^+$ and including hedging for the month of January and May 2018. We compare the $100,000$ Monte Carlo simulations of the portfolio with and without hedging. We can see that the hedging is efficient as the average of the hedge portfolio is 0 while the average of the portfolio without hedging is negative. In addition, the hedging strategy also decreases the variances of a portfolio payoff as the portfolio without hedging is clearly more spread than the one hedged. In the example of January (resp. May), the average payoff of the $\mathcal{E}$-HDD is $22,055$ (resp. $3,014$) and its standard deviation is $4,127$ (resp. $1,487$). In contrast, the average PnL of the static hedging portfolio is 
$-0.754$ (resp. $-0.389$ in May) and its standard deviation is $534$ (resp. $276$).\\

Furthermore, we analyse the impact of $\lambda <0$ supposing a portfolio hedged by using the model with $\lambda=0$. While the hedging of the portfolio still is effective the average PnL of the portfolio  in January (resp. May) is $-137.638$  (resp. $-67.734$) with $\lambda=0$ instead of $-0.754$ (resp. $-0.389$) with the correct value of~$\lambda$, and the corresponding standard deviation is $597$ (resp. $282$) with $\lambda=0$ instead of $534$ (resp. $276$)  with the correct value of $\lambda$. This shows that $\lambda$ has some influence on the quality of the hedge, and the portfolio hedging effectiveness when using options on temperature and energy as hedging instruments.

\begin{figure}[ht!]
    \begin{minipage}[b]{0.45\linewidth}
        \centering
        \includegraphics[width=\textwidth]{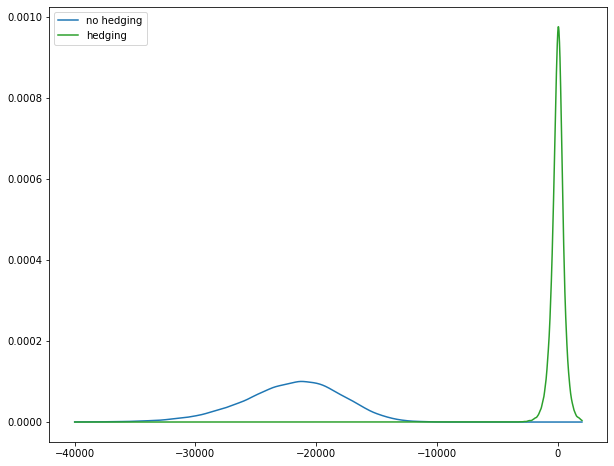}
    \end{minipage}
    \hspace{1.3cm}
    \begin{minipage}[b]{0.45\linewidth}
        \centering
        \includegraphics[width=\textwidth]{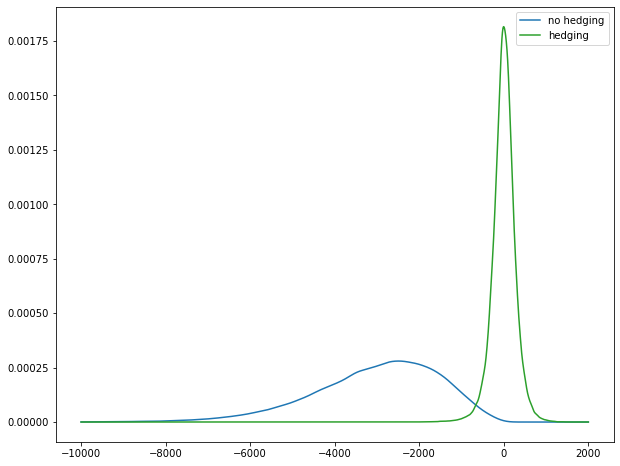}
    \end{minipage}
    \caption{Empirical density of $\sum_{i=1}^{31}-S_{t+i\Delta}(\bar{T}-T_{t_1+i\Delta})^+$ (blue) and $ \sum_{i=1}^{31} c^0_{t_0,t_1 +(i-1)\Delta}+c^1_{t_0,t_1 +(i-1)\Delta}(\bar{T}-T_{t_1+i\Delta})^+ +c^2_{t_0,t_1 +(i-1)\Delta} S_t-S_{t_1+i\Delta}(\bar{T}-T_{t_1+i\Delta})^+$ (green) for portfolio optimisation starting on 1st January 2018 (for $t_1$ on the left) and 1st May 2018 (for $t_1$ on the right), with $t_0 =t_1-30$ and lasting the whole month.} \label{fig:hist_portfolio}
\end{figure}

\subsubsection{Risk decomposition of quantos}
We replicate the above exercise for portfolios including quantos. We suppose that we are at time $t$ and that we want to find the static portfolio made with single $HDD$, puts on electricity and cash that minimizes the square hedging error:
\begin{equation} \label{eq:min_quanto}
    \E[((S_{t+\Delta}- \Bar{S})^+(\bar{T}-T_{t+\Delta})^+-d^0_{t,t+\Delta}-d^1_{t,t+\Delta}(\bar{T}-T_{t+\Delta})^+-d^2_{t,t+\Delta} (S_{t+\Delta}- \Bar{S})^+)^2|\mathcal{F}_t].
\end{equation}

This is again a quadratic function with respect to $(d^0_{t,t+\Delta},d^1_{t,t+\Delta},d^2_{t,t+\Delta})$ and the first order condition leads to:
\begin{equation} \label{eq:quad_quantos}
\begin{aligned}
\left[\begin{matrix}
 1 & \E[(\bar{T}-T_{t+\Delta})^+|\mathcal{F}_t] & \E[(S_{t+\Delta}-\Bar{S})^+|\mathcal{F}_t]   \\    \E[(\bar{T}-T_{t+\Delta})^+|\mathcal{F}_t]  & \E[((\bar{T}-T_{t+\Delta})^+)^2|\mathcal{F}_t] & \E[(S_{t+\Delta}-\Bar{S})^+(\bar{T}-T_{t+\Delta})^+|\mathcal{F}_t]\\ 
            \E[(S_{t+\Delta}-\Bar{S})^+|\mathcal{F}_t]  & \E[(S_{t+\Delta}-\Bar{S})^+(\bar{T}-T_{t+\Delta})^+|\mathcal{F}_t] & \E[((S_{t+\Delta}-\Bar{S})^+)^2|\mathcal{F}_t]
    \end{matrix}\right]\left[\begin{matrix}d^0_{t,t+\Delta} \\d^1_{t,t+\Delta} \\d^2_{t,t+\Delta} \end{matrix}\right] \\
    =\left[\begin{matrix} \E[(S_{t+\Delta}-\Bar{S})^+(\bar{T}-T_{t+\Delta})^+ |\mathcal{F}_t]  \\\E[(S_{t+\Delta}-\Bar{S})^+((\bar{T}-T_{t+\Delta})^+)^2 |\mathcal{F}_t] \\ \E[((S_{t+\Delta}-\Bar{S})^+)^2(\bar{T}-T_{t+\Delta})^+ |\mathcal{F}_t] \end{matrix}\right]
\end{aligned}
\end{equation}

\begin{prop}\label{prop_hedge_quanto}
    Under  Model~\eqref{eq:comb} and for $\Delta >0$, the vector  $(d^0_{t,t+\Delta},d^1_{t,t+\Delta},d^2_{t,t+\Delta})$ minimising the quadratic criterion~\eqref{eq:min_quanto} is the unique solution of the linear system~\eqref{eq:quad_quantos}. The first order Taylor development when $\lambda\to 0$ of all components of this linear system can be explicitly or semi-explicitly calculated.
\end{prop}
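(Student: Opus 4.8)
The plan is to follow the two-step scheme of the proof of Proposition~\ref{prop_hedge_eHDD}. The first step (unicity) is identical in spirit: minimising the quadratic criterion~\eqref{eq:min_quanto} over $(d^0_{t,t+\Delta},d^1_{t,t+\Delta},d^2_{t,t+\Delta})$ is a standard linear least-squares problem whose normal equations are exactly~\eqref{eq:quad_quantos}, and the solution is unique as soon as the left-hand matrix is invertible. This matrix is the conditional Gram matrix $\E[VV^\top\mid\mathcal{F}_t]$ of $V=(1,(\bar{T}-T_{t+\Delta})^+,(S_{t+\Delta}-\bar{S})^+)^\top$, so it fails to be invertible only if there are $\mathcal{F}_t$-measurable coefficients, not all zero, with $d^0_{t,t+\Delta}+d^1_{t,t+\Delta}(\bar{T}-T_{t+\Delta})^+ +d^2_{t,t+\Delta}(S_{t+\Delta}-\bar{S})^+=0$ almost surely. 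I would exclude this using~\eqref{eq:comb_int}: conditionally on $\mathcal{F}_t$ the pair $(X_{t+\Delta},T_{t+\Delta})$ has a density with full support on $\R^2$ (the Gaussian vector driven by $W^T$ is non-degenerate and the NIG increment, added only to the $X$-coordinate, is independent), so each of the events $\{T_{t+\Delta}>\bar{T},S_{t+\Delta}<\bar{S}\}$, $\{T_{t+\Delta}>\bar{T},S_{t+\Delta}>\bar{S}\}$ and $\{T_{t+\Delta}<\bar{T},S_{t+\Delta}<\bar{S}\}$ has positive probability. Testing the relation on the first forces $d^0_{t,t+\Delta}=0$, and on the other two (where the relevant positive part ranges over an interval) forces $d^2_{t,t+\Delta}=0$ and $d^1_{t,t+\Delta}=0$, a contradiction.

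For the second step I would sort the entries of~\eqref{eq:quad_quantos} according to how $\lambda$ enters. The temperature-only entries $\E[(\bar{T}-T_{t+\Delta})^+\mid\mathcal{F}_t]$ and $\E[((\bar{T}-T_{t+\Delta})^+)^2\mid\mathcal{F}_t]$ do not depend on $\lambda$ at all, the temperature being autonomous in~\eqref{eq:comb}; they are given in closed form by Propositions~\ref{prop:E_T_positive} and~\ref{prop:E_T_2}. The electricity-only entries $\E[(S_{t+\Delta}-\bar{S})^+\mid\mathcal{F}_t]$ and $\E[((S_{t+\Delta}-\bar{S})^+)^2\mid\mathcal{F}_t]$ depend on $\lambda$ only through the additional Gaussian variance $\lambda^2\sigma_T^2 k_X(t-t_0)^2$ appearing in~\eqref{eq:charac_nig_update}, i.e. only through $\lambda^2$; hence their first-order coefficient in $\lambda$ vanishes and, to first order, they equal their $\lambda=0$ values, recovered from the characteristic function $\varphi$ by a Carr--Madan/Gil-Pelaez inversion as in~\eqref{eq:carr_madan} and~\eqref{eq:gil-pelaez} (the squared put reducing to truncated first and second moments of $S_{t+\Delta}$).

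The genuine work is concentrated in the mixed entries $\E[(S_{t+\Delta}-\bar{S})^+(\bar{T}-T_{t+\Delta})^+\mid\mathcal{F}_t]$, $\E[(S_{t+\Delta}-\bar{S})^+((\bar{T}-T_{t+\Delta})^+)^2\mid\mathcal{F}_t]$ and $\E[((S_{t+\Delta}-\bar{S})^+)^2(\bar{T}-T_{t+\Delta})^+\mid\mathcal{F}_t]$. The first is exactly the quantity expanded in Proposition~\ref{prop:Taylor_expansion}, so its development is already available. For all three I would use the representation $X_{t+\Delta}=a_X+\lambda\sigma_T G_X+N$, $T_{t+\Delta}=a_T+\sigma_T G_T$, with $(G_X,G_T)$ centred jointly Gaussian (variances $k_X(t-t_0)^2$, $k_T(t-t_0)^2$ and covariance $k_{XT}(t-t_0)^2$) independent of the NIG increment $N$: at $\lambda=0$ the factor $S_{t+\Delta}=e^{a_X+N}$ is independent of $T_{t+\Delta}$, so the leading term factorises into the temperature and electricity quantities listed above, while differentiating in $\lambda$ and integrating by parts in the Gaussian variables yields a first-order coefficient of the form $\sigma_T^2 k_{XT}(t-t_0)^2\,\E_{\lambda=0}[S_{t+\Delta}g'(S_{t+\Delta})\mid\mathcal{F}_t]\,\E_{\lambda=0}[h'(T_{t+\Delta})\mid\mathcal{F}_t]$ for the relevant $g\in\{(\cdot-\bar{S})^+,((\cdot-\bar{S})^+)^2\}$ and $h\in\{(\bar{T}-\cdot)^+,((\bar{T}-\cdot)^+)^2\}$. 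Each scalar is then explicit or semi-explicit: $\E_{\lambda=0}[h'(T_{t+\Delta})\mid\mathcal{F}_t]$ reduces to a Gaussian tail probability or to $\E_{\lambda=0}[(\bar{T}-T_{t+\Delta})^+\mid\mathcal{F}_t]$, and $\E_{\lambda=0}[S_{t+\Delta}g'(S_{t+\Delta})\mid\mathcal{F}_t]$ to truncated first- or second-order moments of $S_{t+\Delta}$ computed from $\varphi$.

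The step I expect to be the main obstacle is the rigorous justification of this differentiation for the two squared mixed moments, which are not covered by Proposition~\ref{prop:Taylor_expansion}: the integrands $(\cdot-\bar{S})^+$ and $((\bar{T}-\cdot)^+)^2$ are only piecewise $C^1$, so the derivatives $g'$ and $h'$ must be read in the weak sense (e.g. $h'(T)=-2(\bar{T}-T)^+$ for $h(T)=((\bar{T}-T)^+)^2$), and the interchange of $\partial_\lambda$ with $\E[\cdot\mid\mathcal{F}_t]$, together with the Gaussian integration by parts, has to be validated by dominated convergence. This needs the finiteness of the relevant exponential moments, in particular $\E[S_{t+\Delta}^2\mid\mathcal{F}_t]<\infty$, which holds for the NIG parameter range at hand; granting this integrability, the argument is a direct adaptation of the one already carried out for Proposition~\ref{prop:Taylor_expansion}.
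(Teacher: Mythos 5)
Your proposal is correct and follows essentially the same route as the paper: uniqueness via invertibility of the conditional Gram matrix, and reduction of every entry of the linear system~\eqref{eq:quad_quantos} either to an exact semi-explicit formula (Propositions~\ref{prop:E_T_positive} and~\ref{prop:E_T_2}, Carr--Madan/Gil-Pelaez, Proposition~\ref{prop_carrmadan_sq}) or to a first-order expansion in $\lambda$ whose coefficient has exactly the factorized form $\sigma_T^2 k_{XT}^2(\Delta)\,\E_{\lambda=0}[S_{t+\Delta}g'(S_{t+\Delta})\mid\cF_t]\,\E[h'(T_{t+\Delta})\mid\cF_t]$ you obtain by Gaussian integration by parts --- this is precisely the content of Propositions~\ref{prop:Taylor_expansion}, \ref{prop:CallS_T^2} and~\ref{prop:S^2_T} that the paper's proof invokes, established there by the same projection argument (Proposition~\ref{prop:vector}) combined with Lemmas~\ref{lemma:gauss_comb0} and~\ref{lemma:gauss_comb2}, which are case-by-case instances of your Stein-type identity. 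The only substantive differences are that you spell out the full-support argument for invertibility (a worthwhile elaboration, since here, unlike in Proposition~\ref{prop_hedge_eHDD}, both hedging payoffs $(\bar{T}-T_{t+\Delta})^+$ and $(S_{t+\Delta}-\bar{S})^+$ vanish simultaneously with positive probability, so the contradiction needs the region-by-region test you give), and that you treat the electricity-only entries to first order by noting $\lambda$ enters only through $\lambda^2$, whereas the paper computes them exactly from the $\lambda$-dependent characteristic function~\eqref{eq:charac_nig_update}; both suffice for the statement.
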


\begin{proof}
    The arguments assuring the existence of a unique minimizer are the same as in Proposition~\ref{prop_hedge_eHDD}.
    All the terms above have already been implemented in the Subsection~\ref{subsec:expected_values} except for $\E[((S_{t+\Delta}-\Bar{S})^+)^2|\mathcal{F}_t]$, $\E[(S_{t+\Delta}-\Bar{S})^+((\bar{T}-T_{t+\Delta})^+)^2 |\mathcal{F}_t]$ and $\E[((S_{t+\Delta}-\Bar{S})^+)^2(\bar{T}-T_{t+\Delta})^+ |\mathcal{F}_t]$. The calculation of the first one can be made by using the Carr-Madan approach as presented in Proposition~\ref{prop_carrmadan_sq}, while the Taylor developments of the two other terms are given respectively by Proposition~\ref{prop:CallS_T^2} and Proposition~\ref{prop:S^2_T}.
\end{proof}

Figure~\ref{fig:portfolio_jan_quanto} shows the coefficients $d^0_{t,t+\Delta}$, $d^1_{t,t+\Delta}$ and $d^2_{t,t+\Delta}$ evolution on the month of January 2018. We can observe  that $d^0_{t,t+\Delta}$ and $d^1_{t,t+\Delta}$ are, as before, weekly seasonal. In addition, in this case $d^0_{t,t+\Delta}$ and $d^1_{t,t+\Delta}$ are very close for both $\lambda$ nill and negative. For $d^2_{t,t+\Delta}$, we only get the weekly phenomenon when $\lambda$ is negative.

Figure~\ref{fig:hist_portfolio_quanto} shows the empirical density of the PnL of the portfolio for January and May 2018. In both cases we observe a significant hedging effect. The hedging effect is more important in January as the double condition on the options are hit more frequently. In May, the quanto does not claim so often and the PnL Monte Carlo simulations are closer to 0. In both cases, the PnL of the portfolio is reduced on average from $-3,358$ to $0.208$ in January (from $-89.174$ to $-0.113$ in May) and on standard deviation from $2,197$ to $391$  in January (from $177$  to $98$ in May).

We again compare the hedging obtained with the exact value of $\lambda<0$ (coupled model) and the one obtained with $\lambda=0$ (independent dynamics). In January (resp. May), the average PnL of the portfolio with $\lambda$ is $0.208$ (resp. $-0.113$) instead of $-93.072$ (resp. $-12.283$) for $\lambda=0$ for January. The standard deviation with $\lambda$ is  $391$ (resp. $99$) instead of $394$ (resp. $100$) for $\lambda=0$ respectively. This again shows that the parameter $\lambda$ has some notable influence on the portfolio hedging: while $\lambda$ is small, the hedging constructed with this value performs better than the one using $\lambda=0$.

\begin{figure}[htb!]
    \begin{minipage}[b]{0.32\linewidth}
        \centering
        \includegraphics[width=\textwidth]{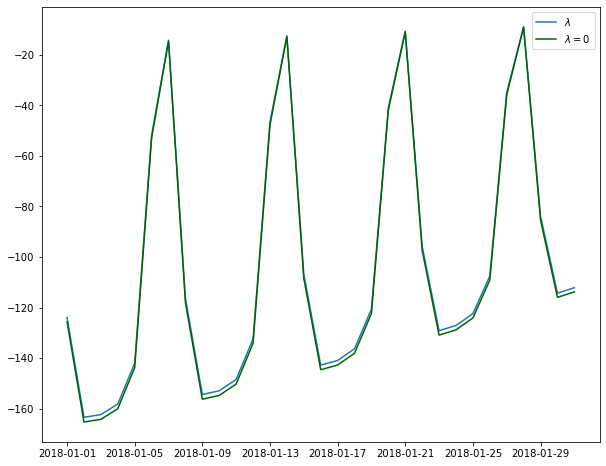}
    \end{minipage}
    \hspace{0.0cm}
    \begin{minipage}[b]{0.31\linewidth}
        \centering
        \includegraphics[width=\textwidth]{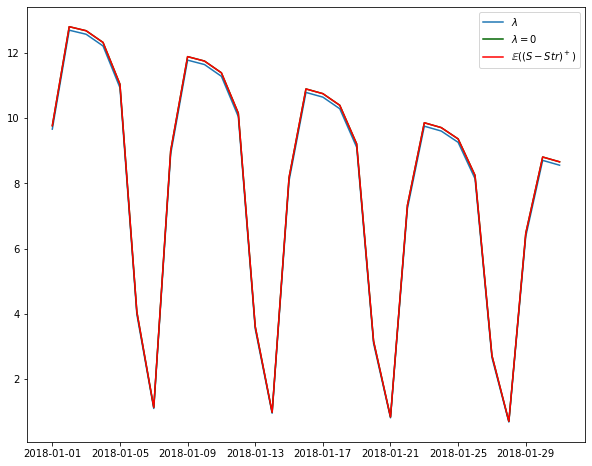}
    \end{minipage}
    \begin{minipage}[b]{0.32\linewidth}
        \centering
        \includegraphics[width=\textwidth]{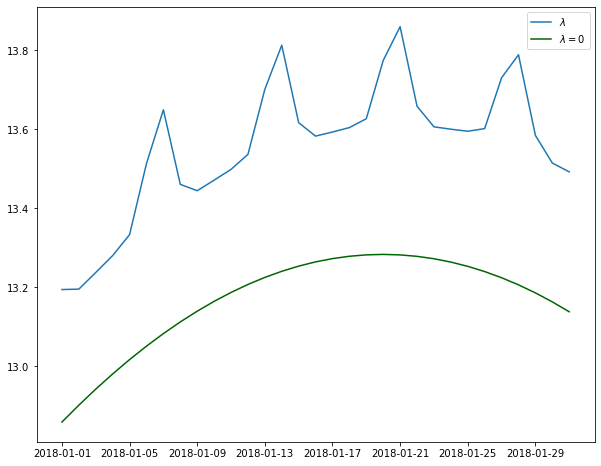}
    \end{minipage}
    \caption{From top left to bottom right, $d^0_{t_0,t_1+i\Delta}$, $d^1_{t_0,t_1+i\Delta}$and $d^2_{t_0,t_1+i\Delta}$ starting from 1st January 2018 ($t_1$), with $t_0 =t_1-30$ and with $t_0 = t_1 - 30$, $\Delta=1$ and $i=0,\dots,30$.} \label{fig:portfolio_jan_quanto}
\end{figure}

\begin{figure}[h!]
    \begin{minipage}[b]{0.45\linewidth}
        \centering
        \includegraphics[width=\textwidth]{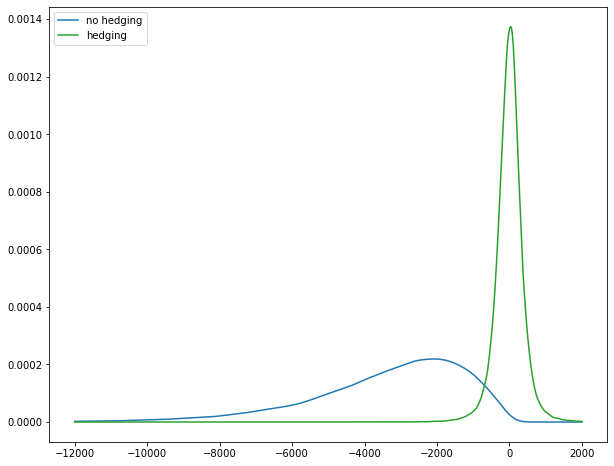}
    \end{minipage}
    \hspace{1.3cm}
    \begin{minipage}[b]{0.45\linewidth}
        \centering
        \includegraphics[width=\textwidth]{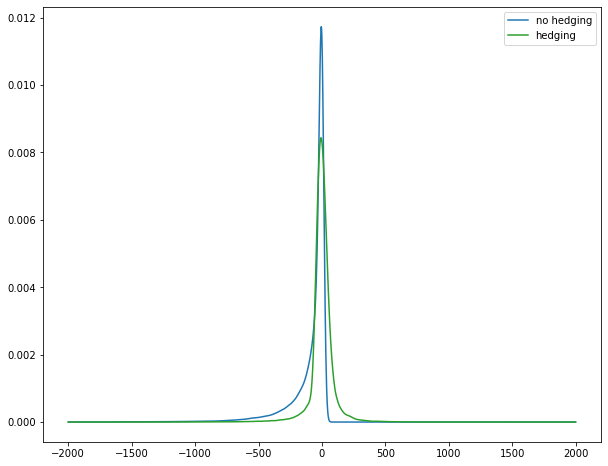}
    \end{minipage}
    \caption{Empirical density of $\sum_{i=1}^{31}-(S_{t_1+i\Delta}-\Bar{S})^+(\bar{T}-T_{t_1+i\Delta})^+$ (blue) and $ \sum_{i=1}^{31} d^0_{t_0,t_1 +(i-1)\Delta}+d^1_{t_0,t_1 +(i-1)\Delta}(\bar{T}-T_{t_1+i\Delta})^+ +d^2_{t_0,t_1 +(i-1)\Delta} (S_{t_1+i\Delta}-\Bar{S})^+-(S_{t_1+i\Delta}-\Bar{S})^+(\bar{T}-T_{t_1+i\Delta})^+$ (green) for portfolio optimisation starting on 1st January 2018 (for $t_1$ on the left) and 1st May 2018 (for $t_1$ on the right), with $t_0=t_1-30$ and lasting the whole month.} \label{fig:hist_portfolio_quanto}
\end{figure}

To sum up, the portfolio study above shows that one can effectively hedge $\mathcal{E}$-HDD and double-sided option quantos through single index-based derivatives. This single-index-based derivatives are traded in open market which ease accessibility and decrease operational costs. For energy derivatives, we can even consider these markets as liquid and suppose a perfect hedging of this risk component. Understanding this risk decomposition is key for risk managers and risk transfer businesses as it enables to gain comfort on the product and ensure meeting their own solvency constraints. Given this is a highly regulated economic sector, regulators are also concerned about this hedging capacity.

Finally, in this section we address the pricing of quanto derivatives on temperature and electricity. We explore different payoff functions and develop explicit pricing formulas for swaps, futures, single sided quanto options and double sided quanto options. These formulas are verified through Monte Carlo simulations. Finally we explore the possibility to statistically hedge single and double sided quanto options. We obtain an efficient daily risk decomposition of these derivatives leading to a averaged-simulated complete hedging of these derivatives. This capacity to hedge these derivatives is key to confirm the efficiency and market viability of these products.
    
\bibliographystyle{plain} 
\bibliography{references.bib} 

\appendix

\section{The Normal Inverse Gaussian (NIG) distribution} \label{appendix:NIG}

This paragraph recalls the parametrization of NIG distributions and some elementary properties. The NIG distribution is a generalised hyperbolic distribution introduced by Barndorff-Nielsen~\cite{barndorff1997normal}. Its density function is defined as follow:
\begin{equation} \label{eq:NIG}
    f(x ; \alpha, \beta, \delta, m) = {\frac {\alpha \delta K_{1}\left(\alpha {\sqrt {\delta ^{2}+(x-m )^{2}}}\right)}{\pi {\sqrt {\delta ^{2}+(x-m )^{2}}}}}\;e^{\delta \gamma +\beta (x-m )}, x \in \R
    \end{equation} 
where $m \in \R$ is the location of the density, $\beta \in \R$, $\alpha > |\beta|$, and $\delta\in \R$ the scale and $K_{1}$ denotes a modified Bessel function of the second kind. We denote by $NIG(\alpha,\beta,\delta,m)$ this law and set $\gamma = \sqrt{\alpha^2 - \beta^2}>0$. The moments and the characteristic function are known explicitly: for $X\sim NIG(\alpha,\beta ,\delta,m)$, we have
\begin{align*}
   & \E(X) = m + \frac{\delta \beta}{\gamma}  ,  &&    Var(X) = \frac{\delta \alpha^2}{\gamma^3} \\
    &Skewness(X) = \frac{3 \beta}{\alpha \sqrt{\gamma \delta}} , &&     Ex. Kurtosis(X) = 3 \frac{1+ 4 \beta^2/ \alpha^2}{\delta \gamma}.
\end{align*}
The characteristic function is given by
$$\E[e^{iuX}]= e^{i u m + \delta \left(\gamma - \sqrt{\alpha^2- (\beta+ iu )^2} \right)},  \ u \in \R.$$

\section{CLS estimator of the drift parameters of the log spot price process} \label{appendix:T}

We consider Model~\eqref{eq:comb}, and we want to estimate the mean-reversion parameter $\kappa_X$ as well as the parameters defining the function $\mu_X(\cdot)$ given by~\eqref{eq:mu}.

The goal of this appendix is to compute the conditional least squares estimator of $(\kappa_X, \beta^X_0, \alpha^X_1, \beta^X_1, \alpha^{X,DoW}_0,\dots,\alpha^{X,DoW}_6  )$ and to prove the next proposition. We note ${\alpha}^{X,DoW}=(\alpha^{X,DoW}_0,\dots,\alpha^{X,DoW}_6)$ and define, for $j\in \N$, $\alpha^{X,DoW}_j=\alpha^{X,DoW}_{\tilde{j}}$, with $\tilde{j}\in \{0,\dots 6\}$ such that $j=\tilde{j} \mod 7$. 

\begin{prop}\label{prop_estimX}
Let $\Xi_{i\Delta} = (i\Delta,  X_{i\Delta}, \sin(\xi  i\Delta), \cos(\xi  i\Delta), \left( \mathbbm{1}_{\{DoW(i\Delta) = j  \}}\right)_{j=0,\dots,6} ) \in \R^4\times \{0,1\}^7$ for $i \in \N$  with $(X_t)_{t \ge 0}$ following the dynamics of~\eqref{eq:comb} and $\Delta>0$. We assume that $\sum_{i=0}^{N-1} \Xi_{i\Delta} \Xi^\top_{i\Delta}$ is invertible and define 
 \begin{equation}\label{def_hlambda_app}
\hat{\eta} = (\hat{\eta}_1,\dots \hat{\eta}_{11})^\top= \left( \sum_{i=0}^{N-1} \Xi_{i\Delta} \Xi^\top_{i\Delta}  \right)^{-1} \left(\sum_{i=0}^{N-1} \Xi_{i\Delta}  X_{(i + 1)\Delta} \right). 
\end{equation} 
If $\hat{\eta}_2 \in (0,1)\cup (1,+\infty)$, the solution of the minimisation problem

\begin{mini}|s| 
{\kappa_X, \beta^X_0, \alpha^X_1, \beta^X_1, {\alpha}^{X,DoW}}{\sum_{i=0}^{N-1} \left( X_{(i + 1)\Delta} - \mathbb{E}  [X_{(i + 1)\Delta} | X_{i \Delta} ] \right)^2}{}{}
\label{minim_X}
\end{mini}

\noindent is given by

\begin{equation*}
\begin{cases}
  \hat{\kappa}_X &= - \ln{\hat{\eta} _2} \\
  \hat{\beta}^X_0&= \frac{\hat{\eta}_1}{1-\hat{\eta}_2} \\
  \hat{\alpha}^X_1 &= \frac{\hat{\eta}_3 (\cos(\xi )-e^{-\hat{\kappa}_X \Delta}) + \hat{\eta}_4 \sin(\xi \Delta )}{(\cos(\xi \Delta)-e^{-\hat{\kappa}_X \Delta})^2 + \sin^2(\xi \Delta )}\\
  \hat{\beta}^X_1 &= \frac{\hat{\eta}_4 (\cos(\xi )-e^{-\hat{\kappa} \Delta }) - \hat{\eta}_3 \sin(\xi \Delta )}{(\cos(\xi \Delta )-e^{-\hat{\kappa}\Delta })^2 + \sin^2(\xi \Delta)}\\
  \hat{\alpha}^{X,DoW}_{j}&= \frac{1}{1-e^{- 7\hat{\kappa}_X \Delta }} \sum_{k=0}^6 (\hat{\eta}^{DoW}_{j + k} - \hat{\beta}_0) e^{-(6-k)\hat{\kappa}_X \Delta }, \\
\end{cases}
\end{equation*}
with $(\hat{\eta}^{DoW}_0,\dots,\hat{\eta}^{DoW}_6)=(\hat{\eta}_5,\dots,\hat{\eta}_{11})$ and $\hat{\eta}^{DoW}_j=\hat{\eta}^{DoW}_{\tilde{j}}$, with $\tilde{j}\in \{0,\dots 6\}$ such that $j=\tilde{j} \mod 7$.

\label{prop-ap1}
\end{prop}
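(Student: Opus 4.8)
The plan is to recast the nonlinear conditional least squares problem~\eqref{minim_X} as an ordinary linear regression through an explicit change of variables, and then to read off the estimators by inverting this change of variables on the solution~\eqref{def_hlambda_app} of the associated normal equations. Concretely, I would introduce a reparametrization map $\Psi$ sending the original parameters $(\kappa_X,\beta^X_0,\alpha^X_1,\beta^X_1,\alpha^{X,DoW})$ to the regression coefficients $\eta=(\eta_1,\dots,\eta_{11})$, show that $\Psi$ is a bijection on the domain singled out by the hypotheses, and conclude that the minimizer of~\eqref{minim_X} is $\Psi^{-1}(\hat\eta)$.

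First I would insert the expression~\eqref{eq:expectation} for $\E[X_{(i+1)\Delta}\mid X_{i\Delta}]$ and expand $\mu_X$ at the times $(i+1)\Delta$ and $i\Delta$, writing
\[
\E[X_{(i+1)\Delta}\mid X_{i\Delta}] = e^{-\kappa_X\Delta}X_{i\Delta} + \big(\mu_X((i+1)\Delta) - e^{-\kappa_X\Delta}\mu_X(i\Delta)\big).
\]
Setting $\eta_2:=e^{-\kappa_X\Delta}$, I would then collect terms against each coordinate of $\Xi_{i\Delta}$. The linear trend produces the factor $\beta^X_0(1-\eta_2)$ in front of $i\Delta$ together with a leftover constant $\beta^X_0\Delta$; the angle-addition formulas for $\sin(\xi(i+1)\Delta)$ and $\cos(\xi(i+1)\Delta)$ turn the seasonal part into a combination of $\sin(\xi i\Delta)$ and $\cos(\xi i\Delta)$ whose coefficients $(\eta_3,\eta_4)$ are the image of $(\alpha^X_1,\beta^X_1)$ under the matrix with first row $(\cos(\xi\Delta)-\eta_2,\,-\sin(\xi\Delta))$ and second row $(\sin(\xi\Delta),\,\cos(\xi\Delta)-\eta_2)$; and, since $DoW(i\Delta)=i\bmod 7$, the day-of-week term evaluated on the indicator $\mathbbm 1_{\{DoW(i\Delta)=j\}}$ contributes $\beta^X_0\Delta + \alpha^{X,DoW}_{j+1} - \eta_2\,\alpha^{X,DoW}_j$ (indices mod $7$, the constant $\beta^X_0\Delta$ being absorbed because the seven indicators sum to one). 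This shows that $\E[X_{(i+1)\Delta}\mid X_{i\Delta}]=\eta^\top\Xi_{i\Delta}$ with $\eta=\Psi(\kappa_X,\beta^X_0,\alpha^X_1,\beta^X_1,\alpha^{X,DoW})$ defined by these relations.

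Next I would invert $\Psi$ block by block. From $\eta_2$ one recovers $\kappa_X=-\tfrac1\Delta\ln\eta_2$ and then $\beta^X_0=\eta_1/(1-\eta_2)$; inverting the $2\times2$ seasonal matrix, whose determinant is $(\cos(\xi\Delta)-\eta_2)^2+\sin^2(\xi\Delta)$, gives the stated formulas for $\alpha^X_1,\beta^X_1$. The day-of-week block is the only genuinely coupled part: the relations $\alpha^{X,DoW}_{j+1}=\eta_2\,\alpha^{X,DoW}_j+(\eta^{DoW}_j-\beta^X_0\Delta)$ form a cyclic linear recursion of period $7$, and iterating it once around the cycle while using $\alpha^{X,DoW}_{j+7}=\alpha^{X,DoW}_j$ yields
\[
(1-\eta_2^{7})\,\alpha^{X,DoW}_j=\sum_{k=0}^{6}\eta_2^{\,6-k}\big(\eta^{DoW}_{j+k}-\beta^X_0\Delta\big),
\]
which is exactly the announced expression once $\eta_2^{\,6-k}=e^{-(6-k)\kappa_X\Delta}$ and $\eta_2^{7}=e^{-7\kappa_X\Delta}$. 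Finally I would check that all these inversions are licit precisely under the hypothesis $\hat\eta_2\in(0,1)\cup(1,\infty)$: positivity makes $\ln\hat\eta_2$ real, $\hat\eta_2\ne1$ makes $1-\hat\eta_2$ and $1-\hat\eta_2^{7}$ nonzero, and $\sin(\xi\Delta)\ne0$ (true for $\xi=2\pi/365$, $\Delta=1$) keeps the seasonal determinant strictly positive; hence $\Psi$ is a bijection onto $\{\eta:\eta_2\in(0,1)\cup(1,\infty)\}$. The objective in~\eqref{minim_X} therefore coincides with the linear least-squares criterion $\sum_i(X_{(i+1)\Delta}-\eta^\top\Xi_{i\Delta})^2$, whose unique minimizer is~\eqref{def_hlambda_app} by invertibility of the Gram matrix; since the hypothesis places $\hat\eta$ in the image of $\Psi$, pulling it back through $\Psi^{-1}$ gives the claimed estimators. \emph{The main obstacle is the cyclic day-of-week recursion:} keeping the $\bmod\,7$ indexing consistent and solving the periodic system cleanly is where care is needed, whereas the remaining steps are routine linear algebra.
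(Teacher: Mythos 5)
Your proof is correct and takes essentially the same route as the paper: rewrite $\E[X_{(i+1)\Delta}\mid X_{i\Delta}]$ so that the objective becomes a linear regression on $\Xi_{i\Delta}$ with coefficients $\eta=\Psi(\kappa_X,\beta^X_0,\alpha^X_1,\beta^X_1,\alpha^{X,DoW})$ (your $\eta_1,\dots,\eta_4,\eta^{DoW}_j$ match the paper's exactly), identify the minimizer with~\eqref{def_hlambda_app}, and invert $\Psi$ block by block. You in fact supply the one step the paper leaves implicit (``the system can be inverted, and the claim follows easily''), namely the explicit solution $(1-\eta_2^7)\,\alpha^{X,DoW}_j=\sum_{k=0}^6 \eta_2^{6-k}(\eta^{DoW}_{j+k}-\beta^X_0\Delta)$ of the cyclic day-of-week recursion, and you state the $\Delta$-dependence ($\hat\kappa_X=-\tfrac1\Delta\ln\hat\eta_2$, $\beta^X_0\Delta$) correctly where the paper's displayed formulas implicitly take $\Delta=1$.
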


\begin{proof} From~\eqref{eq:comb_int}, we get
$$\E[X_{t+\Delta}|\cF_t]=X_t e^{-\kappa_X \Delta} +\mu_X(t+\Delta)-\mu_X(t)e^{-\kappa_X\Delta}$$
by using the martingale property of the stochastic integral and the fact that $L^X$ is centered. 
We now use trigonometric identities to get
\begin{align*}
\mu_X(t+\Delta)-e^{-\kappa_X \Delta}\mu_X(t)=& \beta_0 (t + \Delta) - \beta_0 e^{-\kappa_X  \Delta}t  + \alpha_1 \sin(\xi (t+\Delta)) - \alpha_1 e^{-\kappa_X \Delta} \sin(\xi t) \\
& + \beta_1 \cos(\xi (t+\Delta)) - \beta_1 e^{-\kappa_X \Delta} \cos(\xi t) + \sum_{j=0}^6 \alpha_j^{DoW} \mathbbm{1}_{\{DoW(t+\Delta)= j  \}} \\
&- \alpha_j^{DoW} e^{-\kappa_X  \Delta} \mathbbm{1}_{\{\lfloor DoW(t)= j \}} \\
&=  \eta_1 t + \eta_3 \sin(\xi t) + \eta_4  \cos(\xi t) + \sum_{j=0}^6 \eta_j^{DoW} \mathbbm{1}_{\{\lfloor DoW(t)= j  \}},
\end{align*}

\noindent with

\begin{equation}
  \begin{cases}
  \eta_1 &= \beta_0(1-e^{-\kappa_X \Delta}) \\
  \eta_2 &= e^{-\kappa_X \Delta} \\
  \eta_3 &=\alpha_1 (\cos(\xi \Delta)-e^{-\kappa_X \Delta})-\beta_1\sin (\xi \Delta) \\
  \eta_4 &=\alpha_1 \sin(\xi \Delta)+\beta_1(\cos (\xi \Delta)- e^{-\kappa_X \Delta})\\
  \eta_j^{DoW} &= \alpha_{j+1}^{DoW}- \alpha_j^{DoW} e^{-\kappa_X \Delta} + \beta_0 \Delta  \text{ with convention } \alpha_7^{DoW}=\alpha_0^{DoW},\\
  \end{cases} 
\end{equation} 
where $\eta_2$ is set to have $\E[X_{(i+1)\Delta}|\cF_{i\Delta}]=\eta^\top \Xi_{i\Delta}$, i.e. is the regression coefficient with respect to~$X_{i\Delta}$. The minimization problem~\eqref{minim_X} is then equivalent to
\begin{mini*}|s|
{\eta \in \R^{11}}{\sum_{i=0}^{N-1} \left( X_{(i + 1)\Delta} - \eta^\top \Xi_{i\Delta}   \right)^2}{}{}.
\end{mini*}

\noindent This corresponds to a linear regression, whose solution is given by~\eqref{def_hlambda_app}. When $\eta_2 \in (0,1)$, the system can be inverted, and the claim follows easily. 
\end{proof}
\noindent Let us note here that $\hat{\eta}^\top \Xi_{i\Delta}$ can then be seen as the estimation of $\mathbb{E}  [X_{(i+1) \Delta} | X_{i \Delta} ]$.

\section{Simulation of Model~\eqref{eq:comb} and associated characteristic function} \label{appendix:Simu_MNIG}

Let recall~\eqref{eq:comb_int}:
\begin{equation} 
   \begin{dcases}
    X_{t+\Delta}-\mu_X(t+\Delta)  &= e^{-\kappa_X \Delta }(X_t-\mu_X(t)) + \lambda \sigma_T \int_t^{t+\Delta} e^{-\kappa_X (t+\Delta-v)} dW^T_v   + \int_t^{t+\Delta} e^{-\kappa_X (t+\Delta-v)} dL^X_v\\
    T_{t+\Delta}-\mu_T(t +\Delta)  &= e^{-\kappa_T \Delta}(T_t-\mu_T(t)) + \sigma_T \int_t^{t+\Delta} e^{-\kappa_T (t+\Delta-v)}  dW^T_v.
 \end{dcases} 
\end{equation}
The simulation algorithm is the following:
\begin{enumerate}
    \item Simulate $ N_1 \sim \mathcal{N}(0, 1)$ and $ N_2 \sim \mathcal{N}(0, 1)$. 
    \item Simulate $Z^X\sim NIG(\alpha^X, \beta^X,\delta^X, -\frac{\delta^X \beta^X}{\gamma^X})$ (we work with centered NIG distributions). 
    \item Simulate $(\tilde{X}_{t+\Delta}, \tilde{T}_{t+\Delta} )$ given $(\tilde{X}_t, \tilde{T}_t )$ using the below scheme
    \begin{equation*} 
   \begin{dcases}
    X_{t+\Delta} &= \mu_X(t+\Delta) + e^{-\kappa_X \Delta}(X_t-\mu_X(t)) +  \lambda \sigma_T \sqrt{\frac{1- e^{-2 \kappa_X\Delta}}{2 \kappa_X} } N_1+  e^{-\kappa_X \Delta /2}Z^X \\
    T_{t+\Delta}  &= \mu_T(t+\Delta) + e^{-\kappa_T \Delta}(T_t-\mu_T(t)) +  \sigma_T \sqrt{\frac{1- e^{-2 \kappa_T \Delta}}{2 \kappa_T}} (\rho N_1+\sqrt{1-\rho^2}N_2),
 \end{dcases} 
\end{equation*}

with $\rho$ defined as in Proposition~\ref{prop:vector}.
\end{enumerate}
Note that this is the exact scheme for $T$, and the only discretization error on~$X$ comes from the approximation of $\int_t^{t+\Delta} e^{-\kappa_X (t+\Delta-v)} dL^X_v$. When comparing the pricing by Monte-Carlo with the formulas using the Fourier transform as in Section~\ref{Sec_risk_quanto}, it is then worth to use the characteristic function associated to this discretization scheme. This avoids to have a bias between both methods. Namely, we use for $t_0<t$ such that $t-t_0$ is a multiple of the discretization step $\Delta$
\begin{equation} \label{eq:charac_nig_update_approx}
   \begin{aligned}  \hat{\psi}_{X_t}(u; t_0-t) =& { e^{iu (\mu_X(t)+e^{-\kappa_X (t-t_0)} (X_t - \mu_X(t))-{\frac {1}{2}}\lambda^{2}\sigma_T^{2}  \frac{1- e^{-2\kappa_X (t-t_0) }}{2\kappa_X}  u^{2}}} \times \\
   &\exp\Bigg(\Delta \sum_{\ell=0}^{\frac{t-t_0}\Delta -1}\Bigg(i u m^X e^{-\kappa_X (\ell+1/2) \Delta} + \delta^X \gamma^X  - \delta^X \sqrt{(\alpha^X)^2- (\beta^X+ iue^{-\kappa_X (\ell+1/2) \Delta})^2}  \Bigg) \Bigg).
   \end{aligned}
\end{equation}
instead of~\eqref{eq:charac_nig_update}. 

\section{Proofs of the results of Section~\ref{Sec_risk_quanto}} 

\subsection{Results on the dependence between $X$ and $T$}
From~\eqref{eq:comb_int}, we are interested in the law of $(\int_t^{t+ \Delta} e^{-\kappa_X(t+\Delta-v)} dW^T_v, \int_t^{t+ \Delta} e^{-\kappa_T(t +\Delta-v)} dW^T_v )$ that captures the dependence between $X_{t+\Delta}$ and $T_{t+\Delta}$ given $X_t$ and $T_t$.

\begin{prop} \label{prop:vector}
The random vector $\begin{pmatrix}
            \int_t^{t+ \Delta} e^{-\kappa_X(t+\Delta-v)} dW^T_v  \\
            \int_t^{t+ \Delta} e^{-\kappa_T(t+\Delta-v)} dW^T_v 
    \end{pmatrix}$ is a centered Gaussian vector with covariance matrix

\begin{equation*} 
    \bfK(\Delta):=\left[\begin{matrix}k_{X}^2(\Delta) & k_{XT}^2(\Delta)\\
           k_{XT}^2(\Delta) & k_T^2(\Delta)
\end{matrix}\right]=\left[\begin{matrix}
           \frac{1 - e^{-2\kappa_X\Delta}}{2 \kappa_X} & \frac{1 - e^{-(\kappa_X+\kappa_T)\Delta}}{\kappa_X+ \kappa_T}\\
           \frac{1 - e^{-(\kappa_X+\kappa_T)\Delta}}{\kappa_X+ \kappa_T} &\frac{1 - e^{-2\kappa_T\Delta}}{2 \kappa_T}
    \end{matrix}\right].
\end{equation*} 
    It has the same law as
\begin{align*} 
    \begin{pmatrix}
           k_{X}(\Delta) \varrho \\
           k_{T}(\Delta)
    \end{pmatrix} G +
    \begin{pmatrix}
            k_{X}(\Delta) \sqrt{1-\varrho^2}\\
            0
    \end{pmatrix} G^{\perp} 
  \end{align*}
where $G$ and $G^{\perp}  \sim \mathcal{N}(0,1)$ are independent and $\varrho = \frac{k_{XT}^2(\Delta)}{k_{X}(\Delta) k_{T}(\Delta)}  \in [0,1]$. 
\end{prop}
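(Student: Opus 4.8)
The plan is to exploit that both coordinates are Wiener integrals of deterministic functions against the \emph{same} Brownian motion $W^T$, so that the whole vector is automatically jointly Gaussian, and then to identify the two laws simply by matching covariance matrices. First I would establish joint Gaussianity: for any $a,b\in\R$, the linear combination $a\int_t^{t+\Delta}e^{-\kappa_X(t+\Delta-v)}dW^T_v+b\int_t^{t+\Delta}e^{-\kappa_T(t+\Delta-v)}dW^T_v$ equals $\int_t^{t+\Delta}\bigl(ae^{-\kappa_X(t+\Delta-v)}+be^{-\kappa_T(t+\Delta-v)}\bigr)dW^T_v$, the Wiener integral of a deterministic square-integrable integrand, hence a centered one-dimensional Gaussian. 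Since every linear combination of the coordinates is centered Gaussian, the vector itself is a centered Gaussian vector.

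Second, I would compute the covariance matrix entrywise via the It\^o isometry together with the change of variable $s=t+\Delta-v$. The diagonal entries become $\int_0^\Delta e^{-2\kappa_X s}\,ds=k_X^2(\Delta)$ and $\int_0^\Delta e^{-2\kappa_T s}\,ds=k_T^2(\Delta)$, while the off-diagonal entry is $\int_0^\Delta e^{-(\kappa_X+\kappa_T)s}\,ds=k_{XT}^2(\Delta)$. This produces exactly the matrix $\bfK(\Delta)$ in the statement.

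Third, to identify the second representation, I would observe that the proposed vector $\binom{k_X(\Delta)\varrho}{k_T(\Delta)}G+\binom{k_X(\Delta)\sqrt{1-\varrho^2}}{0}G^\perp$ is itself a centered Gaussian vector, being a fixed linear image of the independent standard Gaussian pair $(G,G^\perp)$. Hence it suffices to check its covariance matrix equals $\bfK(\Delta)$. A direct computation gives first-coordinate variance $k_X^2(\Delta)\varrho^2+k_X^2(\Delta)(1-\varrho^2)=k_X^2(\Delta)$, second-coordinate variance $k_T^2(\Delta)$, and cross-covariance $k_X(\Delta)\varrho\,k_T(\Delta)=k_{XT}^2(\Delta)$ by the very definition $\varrho=k_{XT}^2(\Delta)/(k_X(\Delta)k_T(\Delta))$. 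Since two centered Gaussian vectors with the same covariance matrix have the same law, the identification follows.

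The only genuinely substantive point, and the one I would flag as the main obstacle, is verifying $\varrho\in[0,1]$ so that $\sqrt{1-\varrho^2}$ is well defined. Nonnegativity is immediate because the three $k$-quantities are strictly positive. For $\varrho\le1$ I would invoke the Cauchy--Schwarz inequality in $L^2([t,t+\Delta])$ applied to the functions $v\mapsto e^{-\kappa_X(t+\Delta-v)}$ and $v\mapsto e^{-\kappa_T(t+\Delta-v)}$, whose inner product is $k_{XT}^2(\Delta)$ and whose norms are $k_X(\Delta)$ and $k_T(\Delta)$. This yields $k_{XT}^4(\Delta)\le k_X^2(\Delta)k_T^2(\Delta)$, i.e.\ $\varrho\le1$, with equality precisely when the two exponentials are proportional, that is when $\kappa_X=\kappa_T$.
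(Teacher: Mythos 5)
Your proof is correct and follows essentially the same route as the paper's: joint Gaussianity from Wiener integrals of deterministic integrands against the common Brownian motion $W^T$, the It\^o isometry for the covariance matrix, and identification of the second representation by matching covariances of centered Gaussian vectors. In fact your write-up is more complete than the paper's one-line argument, since you also verify $\varrho\in[0,1]$ via Cauchy--Schwarz in $L^2([t,t+\Delta])$, a point the paper leaves implicit.
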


\begin{proof} The Brownian motion is a Gaussian process, which gives the Gaussian property. We use then the Itô isometry to get the covariance matrix. 
\end{proof}
From Proposition~\ref{prop:vector}, we can quickly get the following corollary.
\begin{corollary} \label{corollary:projection}
Conditionally on  $(\int_t^{t+ \Delta} e^{-\kappa_T(t+\Delta-v)} dW^T_v)$,  $(\int_t^{t+ \Delta} e^{-\kappa_X(t+\Delta-v)} dW^T_v)$ follows a Gaussian distribution with mean
\begin{equation*}
    \E \Big(\int_t^{t+ \Delta} e^{-\kappa_X(t+\Delta-v)} dW^T_v \Big| \int_t^{t+ \Delta} e^{-\kappa_T(t+\Delta-v)} dW^T_v \Big) = \frac{2 \kappa_T}{1 - e^{-2\kappa_T\Delta}} \frac{1 - e^{-(\kappa_X + \kappa_T )\Delta}}{\kappa_X + \kappa_T} \int_t^{t+ \Delta} e^{-\kappa_T(t+\Delta-v)} dW^T_v
\end{equation*}
and variance
$$ \frac{1 - e^{-2\kappa_X\Delta}}{2 \kappa_X}-\Big(\frac{1 - e^{-(\kappa_X + \kappa_T )\Delta}}{\kappa_X + \kappa_T}\Big)^2 \frac{2 \kappa_T}{(1 - e^{-2\kappa_T\Delta})}.$$
\end{corollary}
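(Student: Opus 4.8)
The plan is to read the conditional law off directly from the joint Gaussian structure already established in Proposition~\ref{prop:vector}, so that essentially no new computation is needed beyond substituting the explicit entries of $\bfK(\Delta)$. Write $A := \int_t^{t+\Delta} e^{-\kappa_X(t+\Delta-v)} dW^T_v$ and $B := \int_t^{t+\Delta} e^{-\kappa_T(t+\Delta-v)} dW^T_v$. By Proposition~\ref{prop:vector}, $(A,B)$ is a centered Gaussian vector, hence the conditional law of $A$ given $B$ is again Gaussian, and it suffices to identify its mean and variance.

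The cleanest route exploits the explicit representation supplied in Proposition~\ref{prop:vector}: there we have, in law, $A = k_X(\Delta)\varrho\, G + k_X(\Delta)\sqrt{1-\varrho^2}\, G^{\perp}$ and $B = k_T(\Delta)\, G$, with $G, G^{\perp}\sim\mathcal N(0,1)$ independent and $\varrho = k_{XT}^2(\Delta)/(k_X(\Delta) k_T(\Delta))$. Since $B$ determines $G = B/k_T(\Delta)$ and is independent of $G^{\perp}$, conditioning on $B$ fixes the first summand of $A$ and leaves only the $G^{\perp}$-term random. Hence $A \mid B$ is Gaussian with mean $k_X(\Delta)\varrho\, G = \big(k_{XT}^2(\Delta)/k_T^2(\Delta)\big)\,B$ and variance $k_X^2(\Delta)\,(1-\varrho^2)$.

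It then remains only to substitute the explicit values of $k_X,k_T,k_{XT}$ from~\eqref{eq:k} (equivalently, the entries of $\bfK(\Delta)$). For the mean, $k_{XT}^2/k_T^2 = \frac{2\kappa_T}{1-e^{-2\kappa_T\Delta}}\cdot\frac{1-e^{-(\kappa_X+\kappa_T)\Delta}}{\kappa_X+\kappa_T}$, which matches the stated coefficient; for the variance, $k_X^2(1-\varrho^2)=k_X^2 - k_{XT}^4/k_T^2$, and since $k_{XT}^4/k_T^2 = \big(\tfrac{1-e^{-(\kappa_X+\kappa_T)\Delta}}{\kappa_X+\kappa_T}\big)^2\tfrac{2\kappa_T}{1-e^{-2\kappa_T\Delta}}$ this reproduces the displayed expression exactly. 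Alternatively, one could bypass the representation and invoke the standard conditional-Gaussian formulas $\E[A\mid B] = \big(Cov(A,B)/Var(B)\big)B$ and $Var(A\mid B) = Var(A) - Cov(A,B)^2/Var(B)$ applied to $\bfK(\Delta)$, which give the same answer.

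There is no genuine obstacle here; the result is a routine specialization of the bivariate conditional-Gaussian formula. The only point requiring care is the bookkeeping forced by the paper's notation, in which $k_{XT}^2(\Delta)$ is itself already a squared quantity: the off-diagonal covariance entry is $k_{XT}^2$ (not $k_{XT}$), so the correction term $Cov(A,B)^2/Var(B)$ carries a fourth power $k_{XT}^4$. Keeping track of where each square lands is all that stands between Proposition~\ref{prop:vector} and the claimed mean and variance.
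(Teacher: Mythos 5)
Your proposal is correct and follows exactly the route the paper intends: the paper gives no written proof beyond the remark that the corollary follows ``quickly'' from Proposition~\ref{prop:vector}, and your argument supplies precisely that routine step --- reading the conditional mean $\big(k_{XT}^2(\Delta)/k_T^2(\Delta)\big)B$ and conditional variance $k_X^2(\Delta)-k_{XT}^4(\Delta)/k_T^2(\Delta)$ off the bivariate Gaussian structure (equivalently, off the representation with $G$, $G^{\perp}$) and substituting the entries of $\bfK(\Delta)$. Your closing caution about $k_{XT}^2$ already denoting the covariance entry, so that the correction term involves $k_{XT}^4$, is exactly the right bookkeeping point.
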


\subsection{Identities on the normal distribution}
We note $\Phi$ the cumulative distribution function of the normal distribution $\mathcal{N}(0,1)$.
\begin{lemma} \label{lemma:partie_positive}
    Let $G\sim \mathcal{N}(0,1)$, $a \in \R$, $b>0$. We have 
    $$\E[(a+b G)^+]=a \Phi(a/b)+\frac{b}{\sqrt{2 \pi}} e^{-a^2/(2 b^2)}.$$
\end{lemma}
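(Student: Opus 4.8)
The plan is to compute the expectation directly from the definition as a Gaussian integral. Since $b>0$, the event $\{a+bG>0\}$ coincides with $\{G>-a/b\}$, so I would write
\begin{equation*}
\E[(a+bG)^+]=\int_{-a/b}^{\infty}(a+bx)\frac{1}{\sqrt{2\pi}}e^{-x^2/2}\,dx,
\end{equation*}
and then split the integrand into the constant term $a$ and the linear term $bx$, handling the two resulting integrals separately.

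For the first integral, I would recognize $\int_{-a/b}^{\infty}\frac{1}{\sqrt{2\pi}}e^{-x^2/2}\,dx=\mathbb{P}(G>-a/b)=1-\Phi(-a/b)$, and invoke the symmetry $1-\Phi(-y)=\Phi(y)$ of the standard normal to rewrite this as $\Phi(a/b)$. This produces the term $a\,\Phi(a/b)$.

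For the second integral, the key observation is that $x\,e^{-x^2/2}$ is the derivative of $-e^{-x^2/2}$, so the antiderivative is explicit:
\begin{equation*}
b\int_{-a/b}^{\infty}x\,\frac{1}{\sqrt{2\pi}}e^{-x^2/2}\,dx
=b\left[-\frac{1}{\sqrt{2\pi}}e^{-x^2/2}\right]_{-a/b}^{\infty}
=\frac{b}{\sqrt{2\pi}}e^{-a^2/(2b^2)},
\end{equation*}
where the upper limit contributes $0$. Adding the two contributions yields the claimed identity.

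There is no genuine obstacle here: the result is an elementary Gaussian computation, and the only points requiring a word of care are the sign of $b$ (used to keep the inequality defining the positive part in the correct direction) and the use of the normal symmetry relation to present the first term in terms of $\Phi(a/b)$ rather than $1-\Phi(-a/b)$. I would present the argument in a couple of lines and conclude.
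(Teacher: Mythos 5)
Your proof is correct and follows essentially the same route as the paper's one-line argument: both write $\E[(a+bG)^+]$ as the Gaussian integral $\int_{-a/b}^{\infty}(a+bx)\frac{e^{-x^2/2}}{\sqrt{2\pi}}\,dx$ and evaluate the constant and linear parts to obtain $a\,\Phi(a/b)$ and $\frac{b}{\sqrt{2\pi}}e^{-a^2/(2b^2)}$ respectively. Your version simply spells out the intermediate steps (the normal symmetry and the explicit antiderivative) that the paper leaves implicit.
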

\begin{proof}
We have $\E[(a+b G)^+]=\int_{-a/b}^\infty (a+ b x) \frac{e^{-x^2/2}}{\sqrt{2 \pi}}dx=a \Phi(a/ b)+\frac{b}{\sqrt{2 \pi}} e^{-a^2/(2 b^2)} $.    
\end{proof}

\begin{lemma} \label{lemma:gauss_comb0}
Let $a\in \R$, $b\in \R$ and $G\sim \mathcal{N}(0,1)$.
We have
\begin{equation} 
    \E[bG(a+bG)^+] = b^2\Phi(a/|b|)
\end{equation}
\end{lemma}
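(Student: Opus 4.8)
The plan is to reduce the expectation to two elementary half-line Gaussian integrals by conditioning on the sign of $a+bG$, and then to notice that the cases $b>0$ and $b<0$ collapse to the same expression precisely because of the $|b|$ in the target formula. Throughout I write $\phi(x)=\frac{e^{-x^2/2}}{\sqrt{2\pi}}$ for the standard normal density, so that $\Phi'=\phi$.

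First I would dispatch the trivial case $b=0$, where the integrand $bG(a+bG)^+$ vanishes identically and so does the right-hand side (which carries the prefactor $b^2$); hence I may assume $b\neq 0$. Writing $\E[bG(a+bG)^+]=\int_{\R} bx(a+bx)^+\phi(x)\,dx$, I restrict the domain to $\{a+bx>0\}$, which is the half-line $\{x>-a/b\}$ when $b>0$ and $\{x<-a/b\}$ when $b<0$. On this half-line the integrand equals $(abx+b^2x^2)\phi(x)$, so everything reduces to computing $\int x\phi$ and $\int x^2\phi$ over a half-line.

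The two building blocks are standard. From $\phi'(x)=-x\phi(x)$ one gets $\int_c^\infty x\phi(x)\,dx=\phi(c)$, and integrating by parts via $\frac{d}{dx}\big(x\phi(x)\big)=\phi(x)-x^2\phi(x)$ gives $\int_c^\infty x^2\phi(x)\,dx=c\phi(c)+\Phi(-c)$, together with the obvious sign-reversed analogues on $(-\infty,c]$. Substituting $c=-a/b$ and using that $\phi$ is even, the two terms proportional to $\phi(a/b)$ cancel exactly, leaving $b^2\Phi(a/b)$ when $b>0$ and $b^2\Phi(-a/b)$ when $b<0$. Since $a/b=a/|b|$ in the first case and $-a/b=a/|b|$ in the second, both reduce to $b^2\Phi(a/|b|)$, which is the claim.

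The only genuine obstacle is the bookkeeping around the sign of $b$: it controls both the orientation of the half-line of integration and the identity $\pm a/b=a/|b|$, and it is exactly this interplay that manufactures the absolute value in the statement. A slicker route I would keep as a backup is Gaussian integration by parts (Stein's identity) $\E[G\,h(G)]=\E[h'(G)]$ applied to the Lipschitz function $h(x)=b(a+bx)^+$, whose almost-everywhere derivative is $h'(x)=b^2\mathbbm{1}_{\{a+bx>0\}}$ for either sign of $b$; this yields $\E[h'(G)]=b^2\,\mathbb{P}(a+bG>0)=b^2\Phi(a/|b|)$ in a single line, again with the sign of $b$ being the sole point requiring attention.
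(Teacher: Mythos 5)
Your proof is correct. The core of your main argument---reducing $\E[bG(a+bG)^+]$ to half-line Gaussian integrals $\int x\phi$ and $\int x^2\phi$ and watching the $\phi$-terms cancel---is exactly the computation in the paper's proof. Where you differ is in how the sign of $b$ is handled: you carry both cases $b>0$ and $b<0$ through the integration explicitly, whereas the paper disposes of the sign in one line by noting that $bG\overset{law}{=}|b|G$ (symmetry of the standard Gaussian), so it suffices to treat $b>0$ and compute the single integral $\int_{-d}^{\infty}x(d+x)\phi(x)\,dx=\Phi(d)$ with $d=a/b$. The paper's symmetry reduction buys less bookkeeping; your two-case version makes the origin of the $|b|$ more visible but is otherwise the same calculation. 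Your backup route via Stein's identity, $\E[G\,h(G)]=\E[h'(G)]$ with $h(x)=b(a+bx)^+$, giving $b^2\,\mathbb{P}(a+bG>0)=b^2\Phi(a/|b|)$ directly, is a genuinely different and slicker argument that the paper does not use; its only cost is that one must justify Gaussian integration by parts for a Lipschitz, a.e.-differentiable $h$, which is standard. Either route is fully rigorous; also note your treatment of $b=0$ (both sides vanish) is a harmless case the paper's statement covers implicitly since $\Phi(a/|b|)$ is then undefined but multiplied by $b^2=0$ only after interpretation---your explicit dispatch of it is slightly cleaner.
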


\begin{proof}
It is sufficient to prove the result for $b>0$ since $bG\overset{law}=|b|G$.
Let $d=a/b$. We have $\E[bG(a+bG)^+]=b^2 \E[G(d+G)^+]$ and
\begin{align*}
 \E[G(d+G)^+]&=\int_{-d}^\infty x (d+x)\frac{e^{-\frac{x^2}2}}{\sqrt{2\pi}}dx = \Phi(d).\qedhere
\end{align*}
\end{proof}

\begin{lemma} \label{lemma:gauss_comb}
Let $a\in \R$, $b>0$ and $G\sim \mathcal{N}(0,1)$.
We have
$$\E[((a+bG)^+)^2]=(a^2+b^2)\Phi(a/b)+\frac{ab}{\sqrt{2 \pi}}e^{-\frac{a^2}{2b^2}}.   $$
\end{lemma}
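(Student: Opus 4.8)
The plan is to reduce the claim to the two lemmas immediately preceding it rather than recompute a Gaussian integral from scratch. The key observation is the pointwise identity $((a+bG)^+)^2 = (a+bG)\,(a+bG)^+$, which holds because the positive part vanishes exactly where $a+bG<0$ and equals $a+bG$ elsewhere. Splitting the first factor gives $((a+bG)^+)^2 = a\,(a+bG)^+ + bG\,(a+bG)^+$, so by linearity of the expectation $\E[((a+bG)^+)^2] = a\,\E[(a+bG)^+] + \E[bG\,(a+bG)^+]$.

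First I would insert Lemma~\ref{lemma:partie_positive}, which gives $\E[(a+bG)^+] = a\Phi(a/b) + \frac{b}{\sqrt{2\pi}}\,e^{-a^2/(2b^2)}$, so the first term contributes $a^2\Phi(a/b) + \frac{ab}{\sqrt{2\pi}}\,e^{-a^2/(2b^2)}$. Then I would apply Lemma~\ref{lemma:gauss_comb0}, which yields $\E[bG\,(a+bG)^+] = b^2\Phi(a/|b|)$; since $b>0$ we have $\Phi(a/|b|)=\Phi(a/b)$, so this contributes $b^2\Phi(a/b)$. Adding the two contributions and collecting the $\Phi(a/b)$ terms gives exactly $(a^2+b^2)\Phi(a/b) + \frac{ab}{\sqrt{2\pi}}\,e^{-a^2/(2b^2)}$, which is the claim.

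There is essentially no obstacle here: the whole argument is a one-line algebraic identity followed by two substitutions, and the only point requiring a moment's care is the sign bookkeeping that ensures $\Phi(a/|b|)=\Phi(a/b)$ under the hypothesis $b>0$. As an alternative route I could prove the lemma directly by writing $\E[((a+bG)^+)^2] = \int_{-a/b}^\infty (a+bx)^2 \frac{e^{-x^2/2}}{\sqrt{2\pi}}\,dx$, expanding the square into the three monomials $a^2$, $2abx$, and $b^2x^2$, and evaluating each piece (the middle one from $\int x\,e^{-x^2/2}\,dx = -e^{-x^2/2}$ and the last by a single integration by parts). This reproduces the same expression but involves more calculation, so I would favour the route through the earlier lemmas.
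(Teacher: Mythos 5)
Your proof is correct, but it takes a genuinely different route from the paper's. The paper proves the lemma by direct computation: it sets $d=a/b$, writes $\E[((d+G)^+)^2]=\int_{-d}^\infty (d^2+2dx+x^2)\frac{e^{-x^2/2}}{\sqrt{2\pi}}\,dx$, and evaluates the three Gaussian integrals (the $x^2$ piece by an integration by parts) --- which is exactly the ``alternative route'' you sketch at the end of your proposal. Your main argument instead rests on the pointwise identity $((a+bG)^+)^2=(a+bG)\,(a+bG)^+$, which reduces the claim to the two preceding lemmas: Lemma~\ref{lemma:partie_positive} gives $a\,\E[(a+bG)^+]=a^2\Phi(a/b)+\frac{ab}{\sqrt{2\pi}}e^{-a^2/(2b^2)}$, and Lemma~\ref{lemma:gauss_comb0} gives $\E[bG(a+bG)^+]=b^2\Phi(a/|b|)=b^2\Phi(a/b)$ since $b>0$. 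Both substitutions are legitimate under the stated hypotheses, and the sum assembles to the claimed formula, so there is no gap. What your route buys is economy and structure: no new integral is computed, and the logical dependence of Lemma~\ref{lemma:gauss_comb} on its two neighbours is made explicit. What the paper's direct computation buys is self-containedness --- it does not care about the order in which the lemmas appear --- at the mild cost of redoing, inside this proof, an integration by parts that is in spirit already contained in Lemma~\ref{lemma:gauss_comb0}.
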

\begin{proof}
Let $d=a/b$. We have $\E[((a+bG)^+)^2]=b^2\E[((d+G)^+)^2]$ and
\begin{align*}
 \E[((d+G)^+)^2]&=\int_{-d}^\infty (d^2+2dx +x^2)\frac{e^{-\frac{x^2}2}}{\sqrt{2\pi}}dx = d^2\Phi(d)+ \frac{2d}{\sqrt{2\pi}}e^{-\frac{d^2}2} -\frac{d}{\sqrt{2\pi}}e^{-\frac{d^2}2}+\Phi(d).\qedhere
\end{align*}
\end{proof}

\begin{lemma} \label{lemma:gauss_comb2}
Let $a\in \R$, $b\in \R$ and $G\sim \mathcal{N}(0,1)$.
We have
\begin{equation} 
    \E[bG((a+bG)^+)^2] = |b|^3 \Big( \sqrt{\frac{2}{\pi}} e^{-\frac{1}{2}(\frac{a}{|b|})^2} + 2 \frac{a}{|b|} \Phi(\frac{a}{|b|})\Big)
\end{equation}
\end{lemma}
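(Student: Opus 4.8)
The plan is to mirror the structure of the two preceding lemmas (Lemmas~\ref{lemma:gauss_comb0} and~\ref{lemma:gauss_comb}): first reduce to the case $b>0$ by a symmetry argument, then normalize via a change of variable and evaluate a truncated Gaussian integral by repeated integration by parts.

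First I would observe that the map $b \mapsto \E[bG((a+bG)^+)^2]$ is even. Indeed, replacing $b$ by $-b$ and then using that $G\overset{law}=-G$ gives $\E[-bG((a-bG)^+)^2] = \E[-b(-G)((a+bG)^+)^2] = \E[bG((a+bG)^+)^2]$, so it suffices to treat $b>0$ and to insert $|b|$ at the end. Setting $d=a/b$, I then factor $(a+bG)^+ = b(d+G)^+$, so that $bG((a+bG)^+)^2 = b^3\,G((d+G)^+)^2$, and the problem reduces to computing $\E[G((d+G)^+)^2] = \int_{-d}^{\infty} x(d+x)^2 \frac{e^{-x^2/2}}{\sqrt{2\pi}}\,dx$.

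Next I would expand $x(d+x)^2 = d^2 x + 2d x^2 + x^3$ and evaluate the three truncated moments separately, using $\frac{d}{dx}\!\left(\frac{e^{-x^2/2}}{\sqrt{2\pi}}\right) = -x\frac{e^{-x^2/2}}{\sqrt{2\pi}}$ together with integration by parts. Writing $\phi$ for the standard normal density, these give $\int_{-d}^{\infty} x\phi(x)\,dx = \phi(d)$, $\int_{-d}^{\infty} x^2\phi(x)\,dx = \Phi(d) - d\phi(d)$, and $\int_{-d}^{\infty} x^3\phi(x)\,dx = (d^2+2)\phi(d)$. Substituting and collecting terms, the coefficient of $\phi(d)$ is $d^2 - 2d^2 + (d^2+2) = 2$, while the $\Phi(d)$ contribution is $2d\Phi(d)$; hence $\E[G((d+G)^+)^2] = 2\phi(d) + 2d\Phi(d) = \sqrt{\tfrac{2}{\pi}}\,e^{-d^2/2} + 2d\Phi(d)$. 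Multiplying by $b^3=|b|^3$ and substituting $d=a/|b|$ then yields the claimed identity.

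The only delicate point is the bookkeeping in this last step: the three integrals each carry a $\phi(d)$ term with coefficients $d^2$, $-2d^2$, and $d^2+2$, and one must verify that the $d^2\phi(d)$ pieces cancel exactly so that only the constant $2\phi(d)$ survives, turning into the $\sqrt{2/\pi}\,e^{-d^2/2}$ term. There is no genuine analytical obstacle here — the result is elementary once the truncated Gaussian moments are available — so the main care is simply in tracking the signs and coefficients correctly.
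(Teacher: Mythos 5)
Your proof is correct and follows essentially the same route as the paper: reduce to $b>0$ by symmetry, set $d=a/b$, and evaluate the truncated Gaussian integral $\int_{-d}^\infty x(d+x)^2\frac{e^{-x^2/2}}{\sqrt{2\pi}}\,dx$. The only difference is in the final step: the paper performs a single integration by parts (using $x\,e^{-x^2/2}=-\frac{d}{dx}e^{-x^2/2}$), which collapses the integral directly to $2\,\E[(d+G)^+]$ and then invokes Lemma~\ref{lemma:partie_positive}, whereas you expand $x(d+x)^2$ into monomials and assemble three truncated moments — slightly longer bookkeeping, which you carry out correctly, but the same underlying computation.
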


\begin{proof}
Since $bG$ has the same law as $|b|G$, it is sufficient to consider the case $b> 0$. We have $\E[bG((a+bG)^+)^2]=b^3 \E[G((d+G)^+)^2]$ with $d=\frac{a}{b}$. We then get by integration by parts and then Lemma~\ref{lemma:partie_positive}
\begin{align*}
\E[G((d+G)^+)^2]&=\int_{-d}^\infty  (d + x)^2 x\frac{e^{-\frac{x^2}2}}{\sqrt{2\pi}}dx =2\int_{-d}^\infty  (d + x) \frac{e^{-\frac{x^2}2}}{\sqrt{2\pi}}dx\\
& = 2  \E[(d+G)^+] = \sqrt{\frac{2}{\pi}} e^{-\frac{d^2}{2}} + 2d  \Phi(d).\qedhere
\end{align*}
\end{proof}

\subsection{Computations with Fourier transform}

In the following section, we will develop some conditional expectations for different derivatives. However, some derivatives do not admit explicit formulas and were computed through inverse Fourier methods.

We first use Carr Madan formula~\cite[Equations (5) and (6)]{carr1999option} to compute $\E (( S_{t+\Delta}- \bar{S})^+\mid \mathcal{F}_{t})$ for $t\ge 0, \Delta>0$:
\begin{equation} \label{eq:carr_madan}
    \E (( S_{t+\Delta}- \bar{S})^+ \mid \mathcal{F}_{t}) = \frac{\exp(-\alpha \bar{X})}{\pi} \int_0^{\infty} e^{-i \Bar{X} v} \frac{\psi_{X_{t}}(v-(\alpha+1)i)}{\alpha^2 +\alpha - v^2 +i(2\alpha +1)v} dv,
\end{equation}
where $\psi$ corresponds to the characteristic function as in Equation~\eqref{eq:charac_nig_update}, $\alpha>0$ and $\Bar{X}:= \ln(\Bar{S})$. Note that from~\eqref{eq:charac_nig_update}, we have $\E[S_{t+\Delta}^{1+\alpha}|\cF_t]<\infty$ a.s. and is equal to~$\psi_{X_{t}}(-(\alpha+1)i; \Delta)$. In practice, we take $\alpha=0.5$ for~\eqref{eq:carr_madan} as well as for Proposition~\ref{prop_carrmadan_sq} below.\\

Second, we apply Gil-Pelaez~\cite{gil1951note} inversion formula to compute $\mathbb{P} \left(   S_{t+\Delta} \geq \bar{S}   \mid \mathcal{F}_{t} \right )$.

\begin{equation} \label{eq:gil-pelaez}
    \mathbb{P}_{\lambda=0} \left(   S_{t+\Delta} \geq \bar{S}   \mid \mathcal{F}_{t} \right ) = \frac{1}{2} + \frac{1}{\pi} \int_0^{\infty} \mathcal{R}\Big(\frac{e^{-iv \bar{X}}\psi_{X_{t}}(v)}{iv} \Big)dv
\end{equation}
where $\mathcal{R}$ denotes the real part, $\psi$ the characteristic function as in Equation~\eqref{eq:charac_nig_update} and $\Bar{X}:= \ln(\Bar{S})$. \\

Third, we leverage again Carr Madan~\cite{carr1999option} approach to compute $\E ((( S_{t+\Delta}- \bar{S})^+)^2\mid \mathcal{F}_{t})$.

\begin{prop}\label{prop_carrmadan_sq}
    Under  Model~\eqref{eq:comb}, we have
    \begin{equation} \label{eq:carr_madan_square}
    \begin{aligned}
          &  \E ((( S_{t+\Delta}- \bar{S})^+)^2 \mid \mathcal{F}_{t})\\ &= \frac{\exp(-\alpha \bar{X})}{\pi} \int_0^{\infty} e^{-i \Bar{X} v}  \psi(v-(\alpha+2)i; \Delta) \Big(\frac{1}{\alpha+ iv} - 2 \frac{1}{\alpha+1+iv} + \frac{1}{\alpha+2+iv}\Big)  dv
    \end{aligned}
    \end{equation} 
    where $\psi$ corresponds is the characteristic function as in Equation~\eqref{eq:charac_nig_update}, $\alpha>0$ and $\Bar{X}:= \ln(\Bar{S})$.
\end{prop}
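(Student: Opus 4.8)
The plan is to adapt the classical Carr--Madan damping argument \cite{carr1999option} to the squared call payoff. Write $\bar{X}=\ln\bar{S}$ and regard $C_2(\bar{X}):=\E(((S_{t+\Delta}-\bar{S})^+)^2\mid\mathcal{F}_t)=\E(((e^{X_{t+\Delta}}-e^{\bar{X}})^+)^2\mid\mathcal{F}_t)$ as a function of the log-strike $\bar{X}$. As in Carr--Madan, $C_2$ is not integrable in $\bar{X}$ (it tends to a positive constant as $\bar{X}\to-\infty$), so I would introduce the damping factor $e^{\alpha\bar{X}}$ with $\alpha>0$ and work with $c_2(\bar{X}):=e^{\alpha\bar{X}}C_2(\bar{X})$, which decays at both ends.

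First I would compute the Fourier transform $\int_{\R}e^{iv\bar{X}}c_2(\bar{X})\,d\bar{X}$. Writing $C_2(\bar{X})=\int_{\bar{X}}^\infty (e^x-e^{\bar{X}})^2\,q(x)\,dx$, where $q$ is the conditional density of $X_{t+\Delta}$ given $\mathcal{F}_t$, I would apply Fubini to exchange the $\bar{X}$ and $x$ integrations, expand $(e^x-e^{\bar{X}})^2=e^{2x}-2e^{x}e^{\bar{X}}+e^{2\bar{X}}$, and evaluate the three elementary integrals $\int_{-\infty}^x e^{c\bar{X}}\,d\bar{X}=e^{cx}/c$ for $c=\alpha+iv$, $\alpha+1+iv$, $\alpha+2+iv$; these all converge at $-\infty$ precisely because $\alpha>0$. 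Factoring out the common $e^{(\alpha+2+iv)x}$, the surviving integral $\int_{\R}q(x)e^{(\alpha+2+iv)x}\,dx$ equals $\E(e^{(\alpha+2+iv)X_{t+\Delta}}\mid\mathcal{F}_t)=\psi_{X_t}(v-(\alpha+2)i;\Delta)$, since $iu=\alpha+2+iv$ gives $u=v-(\alpha+2)i$. This yields the transform $\psi(v-(\alpha+2)i;\Delta)\bigl(\frac{1}{\alpha+iv}-\frac{2}{\alpha+1+iv}+\frac{1}{\alpha+2+iv}\bigr)$. Inverting and dividing by $e^{\alpha\bar{X}}$ gives $C_2(\bar{X})=\frac{e^{-\alpha\bar{X}}}{2\pi}\int_{\R}e^{-iv\bar{X}}\psi_2(v)\,dv$; since $C_2$ is real, the integrand has Hermitian symmetry and the integral over $\R$ reduces to twice the half-line integral, producing the stated formula with prefactor $1/\pi$.

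The only genuine obstacle is integrability: both the Fubini exchange and the Fourier inversion require $\E(S_{t+\Delta}^{2+\alpha}\mid\mathcal{F}_t)<\infty$, i.e.\ that $\psi_{X_t}(\cdot\,;\Delta)$ be finite (analytic) at the complex argument $v-(\alpha+2)i$. This is the analogue, one order higher, of the moment bound $\E(S_{t+\Delta}^{1+\alpha}\mid\mathcal{F}_t)<\infty$ already noted after \eqref{eq:carr_madan}; from the explicit cumulant in \eqref{eq:charac_nig} it amounts to the NIG analyticity condition that the arguments of the square roots remain in the cut plane, which holds for $\alpha$ small (in particular $\alpha=0.5$) under the estimated parameters. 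Once this bound is secured, the decay of $\psi_2(v)$ in $v$ justifies the inversion and the remaining manipulations are routine.
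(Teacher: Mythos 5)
Your proposal is correct and follows essentially the same route as the paper's proof: Carr--Madan damping of the squared call viewed as a function of the log-strike, Fubini plus expansion of the square into the three partial-fraction terms, identification of the surviving integral with $\psi_{X_t}(v-(\alpha+2)i;\Delta)$, and Fourier inversion justified by the moment bound $\E[S_{t+\Delta}^{2+\alpha}\mid\mathcal{F}_t]<\infty$ together with the $O(1/|v|^{3})$ decay of the damped transform. The only cosmetic differences are that the paper argues with a general law $\mu(dy)$ rather than a conditional density, and that you are somewhat more explicit than the paper about the moment condition being a genuine restriction on $\alpha$ and the NIG parameters (satisfied at $\alpha=0.5$ for the estimated values).
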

\begin{proof}
    We prove the result for a random variable $Y$ with distribution $\mu$ on~$\R$ such that $\E[e^{(2+\alpha)Y}]<\infty$. We define
    \begin{equation*}
        C(k) = \E\Big[\Big((e^Y-e^k)^+\Big)^2\Big]  =\int_k^{\infty} (e^y-e^k)^2 \mu(dy)
    \end{equation*}
 and $c(k) = e^{\alpha k } C(k)$. The function  $c$ is nonnegative and integrable on~$\R$ since
 $$\int_{\R} c(k)dk\le \int_{\R} e^{\alpha k} \E[e^{2Y}\mathbbm{1}_{Y>k}]dk=\frac{\E[e^{(2+\alpha)Y]}}{\alpha}<\infty ,$$ by Fubini's theorem.  Following Carr Madan~\cite{carr1999option}, as $c$ is integrable, we can define its inverse Fourier transform $\Tilde{\psi}: \mathbb{C} \longrightarrow \mathbb{C}$ such that:
 \begin{equation}
     \begin{aligned}
         \Tilde{\psi}(v) &= \int_{-\infty}^{\infty} e^{ivk} c(k) dk \\
         &= \int_{-\infty}^{\infty} e^{ivk} \int_k^{\infty} e^{-\alpha k } (e^k-e^y)^2 \mu(dy) dk \\
         &= \int_{-\infty}^{\infty}  \Big(\frac{1}{\alpha+ iv} - 2 \frac{1}{\alpha+1+iv} + \frac{1}{\alpha+2+iv}\Big)e^{(\alpha +2 +iv)y} \mu(dy)\\
         &= \psi(v-(\alpha+2)i) \Big(\frac{1}{\alpha+ iv} - 2 \frac{1}{\alpha+1+iv} + \frac{1}{\alpha+2+iv}\Big),
     \end{aligned}
 \end{equation}
 by using Fubini's theorem. 
We have $|\tilde{\psi}(v)|\le \E[e^{(2+\alpha)Y}] \frac{2}{|\alpha +iv||\alpha+1+iv||\alpha+2+iv|}$, and thus $\Tilde{\psi(v)}$ is integrable on~$\R$ and bounded. We get then the claim by Fourier inversion and using that $\E[S_{t+\Delta}^{2+\alpha}|\cF_t]=\psi_{X_{t}}(-(\alpha+2)i)<\infty$ a.s. 
\end{proof}
\red{\begin{remark}
 Let us mention that for the purpose of this paper, we only calculate options at a given strike $\bar{S}$ and therefore use a basic numerical integration to calculate the integrals of Equations~\eqref{eq:carr_madan},~\eqref{eq:gil-pelaez} and~\eqref{eq:carr_madan_square}. If we were interested in calculating values at different strikes, we could easily adapt the FFT pricing method of Carr and Madan~\cite{carr1999option} or alternative methods suggested by Fang and Oosterlee~\cite{FaOo} and Kirkby~\cite{Kirkby} to name a few. The Hilbert transform approach developed by Feng and Linetsky~\cite{FeLi} could also be used for more involved options such as Barrier options. 
\end{remark}
}

\subsection{Results to calculate the average payoffs of derivatives}
\subsubsection{Future}

\begin{prop} \label{prop:eXt} 
Under  Model~\eqref{eq:comb}, we have
\begin{align*}
    \E ( S_{t+\Delta} T_{t+\Delta} \mid \mathcal{F}_t) &= \exp\Bigg(\mu_X(t+\Delta) + e^{-\kappa_X \Delta }(X_t-\mu_X(t)) \Bigg) \varphi(-i ;\Delta) \times \\
    &  \Bigg( (\mu_T(t+\Delta) + e^{-\kappa_T \Delta }(T_t-\mu_T(t)))  e^{\frac{1}{2} k_X(\Delta)^2\lambda^2 \sigma_T^2  } + \lambda \sigma_T^2 k_{XT}^2(\Delta) e^{ \frac{1}{2}\lambda^2 \sigma_T^2 k_{X}(\Delta)^2}    \Bigg) 
\end{align*}
where $\varphi$ is the characteristic function defined in Equation~\eqref{eq:charac_nig} and  $k_T(\cdot)$ , $k_X(\cdot)$ and $k_{XT}(\cdot)$ are as in Equation~\eqref{eq:k}.
\end{prop}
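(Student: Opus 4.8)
The plan is to exploit the integrated form~\eqref{eq:comb_int} to separate, in both $X_{t+\Delta}$ and $T_{t+\Delta}$, the $\mathcal{F}_t$-measurable conditional mean from the stochastic increments over $[t,t+\Delta]$, which are independent of $\mathcal{F}_t$. Concretely, I would set $a_X=\mu_X(t+\Delta)+e^{-\kappa_X\Delta}(X_t-\mu_X(t))$ and $a_T=\mu_T(t+\Delta)+e^{-\kappa_T\Delta}(T_t-\mu_T(t))$, together with $I_X=\int_t^{t+\Delta}e^{-\kappa_X(t+\Delta-v)}dW^T_v$, $I_T=\int_t^{t+\Delta}e^{-\kappa_T(t+\Delta-v)}dW^T_v$ and $J=\int_t^{t+\Delta}e^{-\kappa_X(t+\Delta-v)}dL^X_v$. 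With this notation $X_{t+\Delta}=a_X+\lambda\sigma_T I_X+J$ and $T_{t+\Delta}=a_T+\sigma_T I_T$, so that $S_{t+\Delta}T_{t+\Delta}=e^{a_X}\,(a_T+\sigma_T I_T)\,e^{\lambda\sigma_T I_X}\,e^{J}$.

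Next, since $W^T$ and $L^X$ are independent processes with independent increments, the jump term $J$ is independent of the Gaussian pair $(I_X,I_T)$, and all three increments over $[t,t+\Delta]$ are independent of $\mathcal{F}_t$, while $a_X,a_T$ are $\mathcal{F}_t$-measurable. Factoring the conditional expectation accordingly gives
\[
\E(S_{t+\Delta}T_{t+\Delta}\mid\mathcal{F}_t)=e^{a_X}\,\E[e^{J}]\,\E\big[(a_T+\sigma_T I_T)\,e^{\lambda\sigma_T I_X}\big].
\]
By the definition of $\varphi$ in~\eqref{eq_def_phi} one has $\E[e^{J}]=\varphi(-i;\Delta)$, which produces the characteristic-function factor in the statement.

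It remains to evaluate the Gaussian expectation, which is where the one genuine computation lies. I would split it as $a_T\,\E[e^{\lambda\sigma_T I_X}]+\sigma_T\,\E[I_T e^{\lambda\sigma_T I_X}]$. The first expectation is the Laplace transform of the centered Gaussian $I_X$ of variance $k_X^2(\Delta)$ (Proposition~\ref{prop:vector}), hence equals $e^{\frac12\lambda^2\sigma_T^2 k_X^2(\Delta)}$. For the cross term I would use the explicit representation of $(I_X,I_T)$ from Proposition~\ref{prop:vector}, writing $I_X=k_X(\Delta)(\varrho G+\sqrt{1-\varrho^2}\,G^{\perp})$ and $I_T=k_T(\Delta)G$ with $G,G^{\perp}$ independent standard normals and $\varrho=k_{XT}^2(\Delta)/(k_X(\Delta)k_T(\Delta))$; then independence of $G,G^{\perp}$ together with the elementary identities $\E[Ge^{cG}]=c\,e^{c^2/2}$ and $\E[e^{cG}]=e^{c^2/2}$ reduces the cross term to $\lambda\sigma_T\,k_X(\Delta)k_T(\Delta)\varrho\,e^{\frac12\lambda^2\sigma_T^2k_X^2(\Delta)}=\lambda\sigma_T\,k_{XT}^2(\Delta)\,e^{\frac12\lambda^2\sigma_T^2k_X^2(\Delta)}$, using $k_X(\Delta)k_T(\Delta)\varrho=k_{XT}^2(\Delta)$. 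Equivalently, one may invoke the covariance identity $\E[Ve^{U}]=\mathrm{Cov}(U,V)\,e^{\frac12\mathrm{Var}(U)}$ for a centered Gaussian $(U,V)$ with $U=\lambda\sigma_T I_X$, $V=I_T$. Collecting $e^{a_X}\varphi(-i;\Delta)$ times the bracket $a_T e^{\frac12\lambda^2\sigma_T^2 k_X^2(\Delta)}+\lambda\sigma_T^2 k_{XT}^2(\Delta)e^{\frac12\lambda^2\sigma_T^2 k_X^2(\Delta)}$ yields exactly the asserted formula. No step is a real obstacle; the only point requiring care is correctly reading off $\mathrm{Cov}(I_X,I_T)=k_{XT}^2(\Delta)$ from $\bfK(\Delta)$ and retaining the correlation in the cross term rather than treating $I_X$ and $I_T$ as independent.
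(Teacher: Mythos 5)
Your proposal is correct and follows essentially the same route as the paper's own proof: the same decomposition via the integrated dynamics~\eqref{eq:comb_int}, factoring out the NIG term as $\varphi(-i;\Delta)$ by independence, and evaluating the Gaussian bracket through the representation of $(I_X,I_T)$ in Proposition~\ref{prop:vector} together with the identities $\E[e^{xG}]=e^{x^2/2}$ and $\E[Ge^{xG}]=xe^{x^2/2}$. The covariance identity $\E[Ve^{U}]=\mathrm{Cov}(U,V)\,e^{\frac12\mathrm{Var}(U)}$ you mention as an alternative is just a repackaging of the same computation, so there is no substantive difference from the paper's argument.
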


\begin{proof}
From~\eqref{eq:comb_int}, we get
\begin{align*}
    \E ( e^{X_{t+\Delta}} T_{t+\Delta} \mid \mathcal{F}_t) &= \E\Bigg(\exp\Bigg(\mu_X(t+\Delta) + e^{-\kappa_X \Delta}(X_{t+\Delta}-\mu_X(t+\Delta))+ \lambda \sigma_T \int_t^{t+\Delta} e^{-\kappa_X(t+\Delta-u)}  dW^T_u \\
    &+ \int_t^{t+\Delta} e^{-\kappa_X(t+\Delta-u)}dL^X_u \Bigg) T_{t+\Delta} \Bigg| \mathcal{F}_t\Bigg) \\
    &= \exp\Bigg(\mu_X(t+\Delta) + e^{-\kappa_X \Delta }(X_t-\mu_X(t)) \Bigg) \E(e^{  \int_t^{t+\Delta} e^{-\kappa_X(t+\Delta-u)}dL^X_u }) \times \\
    & \E\Bigg(e^{\lambda \sigma_T \int_t^{t+\Delta} e^{-\kappa_X(t+\Delta-u)}  dW^T_u} T_{t+\Delta} \Bigg| T_t\Bigg), \\
\end{align*}
since $\int_t^{t+\Delta} e^{-\kappa_X(t+\Delta-u)}dL^X_u$ is independent of $\mathcal{F}_t$ and $T_t$.
The first term is deterministic, the second term is equal to $\varphi(-i ;\Delta)$ by~\eqref{eq_def_phi}. We use~\eqref{eq:comb_int} to write as follows the third term:
\begin{align*}
    \E\Bigg(e^{\lambda \sigma_T \int_t^{t+\Delta} e^{-\kappa_X(t+\Delta-u)}  dW^T_u} T_{t+\Delta} \Bigg| T_t\Bigg) &= (\mu_T(t+\Delta) + e^{-\kappa_T \Delta}(T_t-\mu_T(t))) \E\Bigg(e^{\lambda \sigma_T \int_t^{t+\Delta} e^{-\kappa_X(t+\Delta-u)}  dW^T_u}  \Bigg) \\
    &+ \sigma_T \E\Bigg(e^{\lambda \sigma_T \int_t^{t+\Delta} e^{-\kappa_X(t+\Delta-u)}  dW^T_u}  \int_t^{t+\Delta} e^{-\kappa_T(t+\Delta-u)}  dW^T_u \Bigg).
\end{align*}
From Proposition~\ref{prop:vector}, we get $\E\left(e^{\lambda \sigma_T \int_t^{t+\Delta} e^{-\kappa_X(t+\Delta-u)}  dW^T_u}  \right) = e^{\frac{1}{2} k_{X}(\Delta)^2 \lambda^2 \sigma_T^2  }$ and
\begin{align*}
    \E\Bigg(e^{\lambda \sigma_T \int_t^{t+\Delta} e^{-\kappa_X(t+\Delta-u)}  dW^T_u} & \int_t^{t+\Delta} e^{-\kappa_T(t+\Delta-u)}  dW^T_u \Bigg) = \E\left(e^{\lambda \sigma_T(\frac{k_{XT}^2(\Delta)}{k_{T}(\Delta)}  G +  \frac{\sqrt{k_{X}(\Delta)^2 k_{T}(\Delta)^2 -k_{XT}^4(\Delta)}}{k_{T}(\Delta)}  G^{\perp})} k_{T}(\Delta) G\right) \\
    &=  \E\left(e^{ \lambda \sigma_T \frac{\sqrt{k_{X}(\Delta)^2 k_{T}(\Delta)^2 -k_{XT}^4(\Delta)}}{k_{T}(\Delta)} G^{\perp}} \right) k_{T}(\Delta) \E\left( G e^{\lambda \sigma_T \frac{k_{XT}^2(\Delta)}{k_{T}(\Delta)} G} \right).
\end{align*}
Since $\E[e^{xG}]=e^{x^2/2}$ and $\E[Ge^{xG}]=xe^{x^2/2}$, we get the claim.
\end{proof}

\subsubsection{Swap}

\begin{prop} \label{prop:swap}
Under  Model~\eqref{eq:comb} and for $\Delta>0$ , we have
\begin{align*}
    \E ((S_{t+\Delta} - \bar{S}) (\bar{T}-T_{t+\Delta}) \mid \mathcal{F}_{t}) &=  \bar{T}  \psi_{X_{t}}(-i ;\Delta) -\mathcal{F}(t,t+ \Delta) -\bar{S} \bar{T} + \bar{S}  (\mu_T(t) + e^{-\kappa_T \Delta} (T_{t}- \mu_T(t)))
\end{align*}
where $\psi$ is the characteristic function defined in Equation~\eqref{eq:charac_nig_update}, $\mathcal{F}(t,t +\Delta)$ as in Equation~\eqref{eq:forward} and  $k_X(\cdot)$ is as in Equation~\eqref{eq:k}.
\end{prop}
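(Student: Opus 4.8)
The plan is to avoid any new integral computation and instead reduce the expectation to quantities already established, by expanding the product and invoking linearity of the conditional expectation. The first step is the algebraic identity
\[
(S_{t+\Delta}-\bar S)(\bar T - T_{t+\Delta}) = \bar T\, S_{t+\Delta} - S_{t+\Delta}T_{t+\Delta} - \bar S\,\bar T + \bar S\, T_{t+\Delta},
\]
so that, taking $\E[\,\cdot\mid\mathcal F_t]$,
\[
\E[(S_{t+\Delta}-\bar S)(\bar T-T_{t+\Delta})\mid\mathcal F_t] = \bar T\,\E[S_{t+\Delta}\mid\mathcal F_t] - \E[S_{t+\Delta}T_{t+\Delta}\mid\mathcal F_t] - \bar S\,\bar T + \bar S\,\E[T_{t+\Delta}\mid\mathcal F_t].
\]
I would then identify the three conditional expectations on the right one at a time.

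For the first term, I would use that by definition~\eqref{eq:charac_nig_update} the conditional characteristic function is $\psi_{X_t}(u;\Delta)=\E[e^{iuX_{t+\Delta}}\mid X_t]$, and that the increment of $X$ over $[t,t+\Delta]$ (the Brownian and Lévy integrals in~\eqref{eq:comb_int}) is independent of $\mathcal F_t$ given $X_t$; evaluating at $u=-i$ then gives $\E[S_{t+\Delta}\mid\mathcal F_t]=\E[e^{X_{t+\Delta}}\mid X_t]=\psi_{X_t}(-i;\Delta)$. The second term is exactly $\mathcal F(t,t+\Delta)=\E[S_{t+\Delta}T_{t+\Delta}\mid\mathcal F_t]$, which is supplied verbatim by Proposition~\ref{prop:eXt} and enters the statement through~\eqref{eq:forward}; no further work is needed there. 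For the third term, the integrated temperature dynamics in~\eqref{eq:comb_int} give $T_{t+\Delta}-\mu_T(t+\Delta)=e^{-\kappa_T\Delta}(T_t-\mu_T(t))+\sigma_T\int_t^{t+\Delta}e^{-\kappa_T(t+\Delta-u)}dW^T_u$, and since the stochastic integral is a centered $\mathcal F_t$-conditional increment, $\E[T_{t+\Delta}\mid\mathcal F_t]=\mu_T(t+\Delta)+e^{-\kappa_T\Delta}(T_t-\mu_T(t))$.

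Substituting these three identifications into the linear decomposition and collecting the constant and $\bar S$-terms yields the claimed formula. The proof is therefore essentially bookkeeping, because the only genuinely nontrivial ingredient, the mixed moment $\E[S_{t+\Delta}T_{t+\Delta}\mid\mathcal F_t]$, has already been handled in Proposition~\ref{prop:eXt}. Consequently the main point to watch is not analytic but a matter of integrability: one must check that $S_{t+\Delta}$ has a finite conditional first moment so that the characteristic function may legitimately be evaluated at the complex argument $-i$. This is guaranteed by the exponential-moment condition $\alpha^X\ge|\beta^X+1|$ on the NIG parameters, under which the cumulant $\psi$ of~\eqref{eq:charac_nig} extends to the relevant real arguments and $\varphi(-i;\Delta)<\infty$; given this, there is no real obstacle and the identity follows.
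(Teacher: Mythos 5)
Your proof is correct and takes essentially the same route as the paper's: expand the product, use linearity of conditional expectation, and identify the three terms via $\E[S_{t+\Delta}\mid\mathcal F_t]=\psi_{X_t}(-i;\Delta)$, the mixed moment $\mathcal F(t,t+\Delta)$ from Proposition~\ref{prop:eXt}, and the centered stochastic integral in the temperature dynamics. Your additional integrability remark ($\alpha^X\ge|\beta^X+1|$ so that the characteristic function may be evaluated at $-i$) is a point the paper leaves implicit, and note that your derivation produces $\mu_T(t+\Delta)$ in the last term, exactly as the paper's own proof does (the $\mu_T(t)$ appearing in the proposition statement is evidently a typo).
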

    
\begin{proof}
    Let first develop the formula:
\begin{align*}
    \E ( (S_{t+\Delta}- \bar{S}) (\bar{T}- T_{t+\Delta}) \mid \mathcal{F}_{t}) &= \bar{T} \E ( S_{t+\Delta}\mid \mathcal{F}_{t}) - \E ( S_{t+\Delta} T_{t+\Delta}\mid \mathcal{F}_{t}) -\bar{S} \bar{T} + \bar{S} E ( T_{t+\Delta} \mid \mathcal{F}_{t})
\end{align*}
We have $\E ( S_{t+\Delta} T_{t+\Delta}\mid \mathcal{F}_{t}) $ from Equation~\eqref{eq:forward} and Proposition~\ref{prop:eXt}.
The first term is equal to:
\begin{align*}
 \Bar{T}\E(S_{t+\Delta} \mid \mathcal{F}_{t}) &= \Bar{T} \psi_{X_t}(-i ;\Delta)
\end{align*}
and, by using the independence between $L^X$ and $W^T$, we get the forth term:
\begin{align*}
    \bar{S} E ( T_{t+\Delta} \mid \mathcal{F}_{t}) = \bar{S} (\mu_T(t+\Delta) + e^{-\kappa_T \Delta} (T_{t}- \mu_T(t))).
\qedhere\end{align*}
\end{proof}

\subsubsection{Quanto}
\begin{prop} \label{prop:E_T_positive}
Under  Model~\eqref{eq:comb} and for $\Delta>0$ , we have
    \begin{equation*}
        \begin{aligned}
            \E[(\Bar{T} -T_{t+\Delta})^+ \mid \cF_t ] 
            &= \Big(\Bar{T} - \mu_T(t+\Delta) - e^{-\kappa_T\Delta} (T_t - \mu_T(t) \Big) \Phi \left(\frac{\Bar{T} - \mu_T(t+\Delta) - e^{-\kappa_T\Delta} (T_t - \mu_T(t) )}{\sigma_T k_T(\Delta)} \right) \\
            &+\frac{\sigma_T k_T(\Delta)}{\sqrt{2 \pi}} \exp\Big(-\frac{1}{2} \Big( \frac{\Bar{T} - \mu_T(t+\Delta) - e^{-\kappa_T\Delta} (T_t - \mu_T(t) )}{\sigma_T k_T(\Delta)}\Big)^2\Big)
        \end{aligned}
    \end{equation*}
\end{prop}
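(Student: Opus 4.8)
The plan is to reduce the statement to a direct application of Lemma~\ref{lemma:partie_positive}. The starting point is the integrated temperature dynamics in~\eqref{eq:comb_int}, which gives
$$T_{t+\Delta} - \mu_T(t+\Delta) = e^{-\kappa_T \Delta}(T_t - \mu_T(t)) + \sigma_T \int_t^{t+\Delta} e^{-\kappa_T(t+\Delta-u)} dW^T_u.$$
Conditionally on $\cF_t$, the first two terms are deterministic, and the stochastic integral is a centered Gaussian random variable. First I would identify its law: by the Gaussian property of the Brownian integral together with the Itô isometry (this is exactly the second diagonal entry of the covariance matrix $\bfK(\Delta)$ in Proposition~\ref{prop:vector}), its variance is $k_T^2(\Delta) = \frac{1-e^{-2\kappa_T\Delta}}{2\kappa_T}$ as defined in~\eqref{eq:k}. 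Hence, conditionally on $\cF_t$, $T_{t+\Delta}$ is Gaussian with mean $m := \mu_T(t+\Delta) + e^{-\kappa_T\Delta}(T_t - \mu_T(t))$ and variance $\sigma_T^2 k_T^2(\Delta)$.

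Next I would write $T_{t+\Delta} = m + \sigma_T k_T(\Delta)\, G$ with $G \sim \mathcal{N}(0,1)$ conditionally on $\cF_t$, so that
$$\Bar{T} - T_{t+\Delta} = (\Bar{T} - m) - \sigma_T k_T(\Delta)\, G.$$
To cast this in the form $a + bG$ with $b>0$ required by Lemma~\ref{lemma:partie_positive}, I would use the symmetry $-G \overset{law}{=} G$ and replace $G$ by $-G$, giving $a = \Bar{T} - m$ and $b = \sigma_T k_T(\Delta) > 0$ (recall $\sigma_T>0$ and $k_T(\Delta)>0$).

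Finally, applying Lemma~\ref{lemma:partie_positive} with these values of $a$ and $b$ yields
$$\E[(\Bar{T}-T_{t+\Delta})^+ \mid \cF_t] = a\,\Phi(a/b) + \frac{b}{\sqrt{2\pi}} e^{-a^2/(2b^2)},$$
which, after substituting $a = \Bar{T} - \mu_T(t+\Delta) - e^{-\kappa_T\Delta}(T_t - \mu_T(t))$ and $b = \sigma_T k_T(\Delta)$, is precisely the claimed identity. There is no substantial obstacle here: the only points requiring a little care are reading off the conditional variance correctly from Proposition~\ref{prop:vector} and handling the sign in front of $G$ so as to match the $b>0$ hypothesis of the lemma; everything else is a mechanical substitution.
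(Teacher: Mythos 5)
Your proof is correct and follows essentially the same route as the paper: identify the conditional law of $\Bar{T}-T_{t+\Delta}$ given $\cF_t$ as Gaussian with mean $\Bar{T}-\mu_T(t+\Delta)-e^{-\kappa_T\Delta}(T_t-\mu_T(t))$ and variance $\sigma_T^2 k_T(\Delta)^2$, then apply Lemma~\ref{lemma:partie_positive}. The only difference is that you spell out the Itô-isometry computation of the variance and the sign-flip $-G\overset{law}{=}G$, details the paper leaves implicit.
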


\begin{proof}
    Under  Model~\eqref{eq:comb}, the distribution of $\Bar{T} -T_{t+\Delta}$ given $\cF_t$ is $\mathcal{N}(\Bar{T} - \mu_T(t+\Delta) - e^{-\kappa_T\Delta} (T_t - \mu_T(t) ), \sigma_T^2 k_T(\Delta)^2)$. We then apply Lemma~\ref{lemma:partie_positive} to obtain the result.
\end{proof}

\begin{prop}\label{prop:Taylor_expansion}
    Under  Model~\eqref{eq:comb} and for $\Delta>0$,  we can write the following Taylor expansion on $\lambda$:
    \begin{equation*}
        \begin{aligned}
             \E ( (S_{t+\Delta}- \bar{S})^+ (\bar{T}-T_{t+\Delta})^+ \mid \mathcal{F}_{t}) &=  \E_{\lambda=0}( (S_{t+\Delta}- \bar{S})^+\mid \mathcal{F}_{t}) \times\\
             & \Big(\Big(\Bar{T} - \mu_T(t+\Delta) - e^{-\kappa_T\Delta} (T_t - \mu_T(t)) \Big) \times \\
             &\Phi\Big(\frac{\Bar{T} - \mu_T(t+\Delta) - e^{-\kappa_T\Delta} (T_t - \mu_T(t) )}{\sigma_T k_T(\Delta)}\Big) \\
            &+\frac{\sigma_T k_T(\Delta)}{\sqrt{2 \pi}} \exp\Big(-\frac{1}{2} \Big( \frac{\Bar{T} - \mu_T(t+\Delta) - e^{-\kappa_T\Delta} (T_t - \mu_T(t) )}{\sigma_T k_T(\Delta)}\Big)^2\Big) \Big) \\
            &- \Big( \E_{\lambda=0} ( (S_{t+\Delta}- \bar{S})^+\mid \mathcal{F}_{t}) + \bar{S} \mathbb{P}_{\lambda=0} \left(   S_{t+\Delta} \geq \bar{S}   \mid \mathcal{F}_{t} \right ) \Big) \times \\
            &\sigma_T^2 k_{XT}(\Delta)^2 \Phi\Big(\frac{\bar{T}-\mu_T(t+\Delta) - e^{-\kappa_T \Delta}(T_t -\mu_T(t))}{\sigma_T k_T(\Delta)}\Big) \Big)\lambda + o(\lambda)
        \end{aligned}
    \end{equation*}
    where  $k_T(\cdot)$ , $k_X(\cdot)$ and $k_{XT}(\cdot)$ are as in Equation~\eqref{eq:k}.
\end{prop}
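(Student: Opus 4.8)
The plan is to isolate the $\lambda$-dependence inside the price and then differentiate once under the expectation. From the integrated dynamics~\eqref{eq:comb_int}, set $A=\int_t^{t+\Delta}e^{-\kappa_X(t+\Delta-u)}dW^T_u$ and $B=\int_t^{t+\Delta}e^{-\kappa_T(t+\Delta-u)}dW^T_u$, and let $S^0_{t+\Delta}=\exp(\mu_X(t+\Delta)+e^{-\kappa_X\Delta}(X_t-\mu_X(t))+\int_t^{t+\Delta}e^{-\kappa_X(t+\Delta-u)}dL^X_u)$ be the price obtained when $\lambda=0$. Then $S_{t+\Delta}=S^0_{t+\Delta}\,e^{\lambda\sigma_T A}$ and $T_{t+\Delta}=\mu_T(t+\Delta)+e^{-\kappa_T\Delta}(T_t-\mu_T(t))+\sigma_T B$. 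Three independent sources appear: the Lévy integral (which determines $S^0_{t+\Delta}$ and is independent of $W^T$) and the Gaussian pair $(A,B)$. By Corollary~\ref{corollary:projection} I write $A=\tfrac{k_{XT}^2(\Delta)}{k_T^2(\Delta)}B+k_X(\Delta)\sqrt{1-\varrho^2}\,G^{\perp}$, with $G^{\perp}\sim\mathcal N(0,1)$ independent of $B$, of $\mathcal F_t$, and of the Lévy integral.

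Write $F(\lambda)=\E((S_{t+\Delta}-\bar S)^+(\bar T-T_{t+\Delta})^+\mid\mathcal F_t)$. First I compute $F(0)$: at $\lambda=0$ the factor $(S^0_{t+\Delta}-\bar S)^+$ is a function of the Lévy integral only, whereas $(\bar T-T_{t+\Delta})^+$ is a function of $B$ only, so by independence $F(0)=\E_{\lambda=0}((S_{t+\Delta}-\bar S)^+\mid\mathcal F_t)\,\E((\bar T-T_{t+\Delta})^+\mid\mathcal F_t)$, the second factor being given explicitly by Proposition~\ref{prop:E_T_positive}. This reproduces the $\lambda$-free part of the claim.

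Next I differentiate. For fixed $\omega$ the map $\lambda\mapsto(S^0_{t+\Delta}e^{\lambda\sigma_T A}-\bar S)^+$ is convex (a nondecreasing convex function of the convex $e^{\lambda\sigma_T A}$), hence differentiable for a.e. $\lambda$ with derivative $\sigma_T A\,S^0_{t+\Delta}e^{\lambda\sigma_T A}\mathbbm 1_{\{S_{t+\Delta}>\bar S\}}$; multiplying by the nonnegative factor $(\bar T-T_{t+\Delta})^+$ preserves convexity of $F$. Differentiating under the expectation at $\lambda=0$ gives $F'(0)=\sigma_T\E(A\,S^0_{t+\Delta}\mathbbm 1_{\{S^0_{t+\Delta}>\bar S\}}(\bar T-T_{t+\Delta})^+\mid\mathcal F_t)$. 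Substituting the decomposition of $A$, the $G^{\perp}$-term vanishes since $G^{\perp}$ is centered and independent of the remaining factors, leaving $F'(0)=\sigma_T\tfrac{k_{XT}^2(\Delta)}{k_T^2(\Delta)}\E(S^0_{t+\Delta}\mathbbm 1_{\{S^0_{t+\Delta}>\bar S\}})\,\E(B(\bar T-T_{t+\Delta})^+)$, the factorization again using that $S^0_{t+\Delta}$ is independent of $(B,T_{t+\Delta})$. The first factor equals $\E_{\lambda=0}((S_{t+\Delta}-\bar S)^+\mid\mathcal F_t)+\bar S\,\mathbb P_{\lambda=0}(S_{t+\Delta}\ge\bar S\mid\mathcal F_t)$ by writing $S^0\mathbbm 1_{\{S^0>\bar S\}}=(S^0-\bar S)^++\bar S\mathbbm 1_{\{S^0>\bar S\}}$ and evaluating the two pieces via~\eqref{eq:carr_madan} and~\eqref{eq:gil-pelaez}; for the second factor I apply Lemma~\ref{lemma:gauss_comb0} with $a=\bar T-\mu_T(t+\Delta)-e^{-\kappa_T\Delta}(T_t-\mu_T(t))$ and $b=-\sigma_T k_T(\Delta)$, using $B=-\sigma_T^{-1}bG$, to get $\E(B(\bar T-T_{t+\Delta})^+)=-\sigma_T k_T^2(\Delta)\Phi(a/(\sigma_T k_T(\Delta)))$. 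Collecting the factors yields exactly the coefficient of $\lambda$ in the statement, and $F(\lambda)=F(0)+F'(0)\lambda+o(\lambda)$.

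I expect the main obstacle to be the rigorous justification of differentiating under the expectation across the kink of the positive part. I would handle it through the convexity of $F$ (which guarantees one-sided derivatives everywhere) together with a domination bound: for $|\lambda|\le\lambda_0$ the difference quotient is controlled by $\sigma_T|A|\,S^0_{t+\Delta}e^{\lambda_0\sigma_T|A|}(\bar T-T_{t+\Delta})^+$, whose integrability follows from the exponential moments of the NIG law (giving $\E(S^0_{t+\Delta}e^{\lambda_0\sigma_T|A|})<\infty$ after splitting the independent Lévy and Gaussian parts) and the Gaussian moments of $A$ and $(\bar T-T_{t+\Delta})^+$. The left and right derivatives then coincide because $\{S^0_{t+\Delta}=\bar S\}$ is null, the Lévy integral being absolutely continuous, so the integrand is differentiable at $\lambda=0$ almost surely.
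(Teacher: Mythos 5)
Your proposal is correct and follows essentially the same route as the paper's proof: factorize at $\lambda=0$ using the independence of the L\'evy and Gaussian parts, differentiate under the expectation at $\lambda=0$, reduce $\E\bigl(A\,(\bar{T}-T_{t+\Delta})^+\mid\mathcal{F}_t\bigr)$ to $\tfrac{k_{XT}^2(\Delta)}{k_T^2(\Delta)}\E\bigl(B\,(\bar{T}-T_{t+\Delta})^+\mid\mathcal{F}_t\bigr)$ via Proposition~\ref{prop:vector}, and conclude with Lemma~\ref{lemma:gauss_comb0}, Equations~\eqref{eq:carr_madan} and~\eqref{eq:gil-pelaez}. The only difference is that you rigorously justify the interchange of differentiation and expectation (convexity, domination of the difference quotient, and nullity of $\{S^0_{t+\Delta}=\bar{S}\}$), a point the paper's proof passes over silently, so your write-up is if anything more complete.
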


\begin{proof}
    Under  Model~\eqref{eq:comb} and for $\Delta>0$ , we want to compute the first order Taylor expansion of $ \E ( (S_{t+\Delta}- \bar{S})^+ (\bar{T}-T_{t+\Delta})^+ \mid \mathcal{F}_{t})$.
    For $\lambda=0$,
    \begin{equation*}
        \begin{aligned}
        \E ( (S_{t+\Delta}- \bar{S})^+ (\bar{T}-T_{t+\Delta})^+ \mid \mathcal{F}_{t}) &= \E (( S_{t+\Delta}- \bar{S})^+ \mid \mathcal{F}_{t}) \E( ( \bar{T}-T_{t+\Delta})^+ \mid \mathcal{F}_{t})
         \end{aligned}
    \end{equation*}
    We have $\E (( \bar{T}-T_{t+\Delta})^+ \mid \mathcal{F}_{t})$ from Proposition~\ref{prop:E_T_positive} and $\E (( S_{t+\Delta}- \bar{S})^+ \mid \mathcal{F}_{t})$ from Equation~\eqref{eq:carr_madan}.
    
    Let now consider the derivative of $ \E ( (S_{t+\Delta}- \bar{S})^+ (\bar{T}-T_{t+\Delta})^+ \mid \mathcal{F}_{t})$ in $\lambda=0$:
    \begin{equation*}
        \begin{aligned}
            \left. {\frac{d }{d \lambda}} \right|_{\lambda=0} \E ( (S_{t+\Delta}- \bar{S})^+ (\bar{T}-T_{t+\Delta})^+ \mid \mathcal{F}_{t}) &= \E_{\lambda=0} \left(  \mathbbm{1}_{ S_{t+\Delta} \geq \bar{S} } \sigma_T \int_t^{t+\Delta} e^{-\kappa_X(t+\Delta-s)}dW_s e^{X_{t+\Delta}} (\bar{T}-T_{t+\Delta})^+ \mid \mathcal{F}_{t} \right ) \\
            &= \E_{\lambda=0} \left(  \mathbbm{1}_{ S_{t+\Delta} \geq \bar{S} }  e^{X_{t+\Delta}} \mid \mathcal{F}_{t} \right ) \times \\
            & \E_{\lambda=0} \left( \sigma_T  \int_t^{t+\Delta} e^{-\kappa_X(t+\Delta-s)}dW_s (\bar{T}-T_{t+\Delta})^+ \mid \mathcal{F}_{t} \right )
        \end{aligned}
    \end{equation*}
    Let consider the first term,
    $$ \E_{\lambda=0} \left(  \mathbbm{1}_{ S_{t+\Delta} \geq \bar{S} }  S_{t+\Delta}  \mid \mathcal{F}_{t} \right )= \E_{\lambda=0} ( (S_{t+\Delta}- \bar{S})^+\mid \mathcal{F}_{t}) + \bar{S} \mathbb{P}_{\lambda=0} \left(   S_{t+\Delta} \geq \bar{S}   \mid \mathcal{F}_{t} \right )$$
    The first element is computed with Equation~\eqref{eq:carr_madan} and the second with Equation~\eqref{eq:gil-pelaez}.
    
    We can now develop the second term by using Proposition~\ref{prop:vector} and~\eqref{eq:comb_int}:
    \begin{equation} \label{calc_LemD2a}
        \begin{aligned}
            &\E\left( \sigma_T  \int_t^{t+\Delta} e^{-\kappa_X(t+\Delta-s)}dW^T_s (\bar{T}-T_{t+\Delta})^+ \mid \mathcal{F}_{t} \right ) \\
            &=  \frac{k_{XT}(\Delta)^2}{k_T(\Delta)^2} \E\left( \sigma_T   \int_t^{t+\Delta} e^{-\kappa_T(t+\Delta-s)}dW^T_s (\bar{T}-T_{t+\Delta})^+ \mid \mathcal{F}_{t} \right ) \\
            &= \frac{k_{XT}(\Delta)^2}{k_T(\Delta)^2}\E \Bigg( \sigma_T  \int_t^{t+\Delta} e^{-\kappa_T(t+\Delta-s)}dW^T_s  \\
            &\phantom{\frac{k_{XT}(\Delta)^2}{k_T(\Delta)^2} -----}\times
            \left(\bar{T}-\mu_T(t+\Delta) - e^{-\kappa_T \Delta}(T_t -\mu_T(t)) 
            - \sigma_T \int_t^{t+\Delta} e^{-\kappa_T(t+\Delta-s)}dW^T_s  \right)^+ \Bigg| \mathcal{F}_{t}  \Bigg). 
        \end{aligned}
    \end{equation}
    Now we apply Lemma~\ref{lemma:gauss_comb0} with $a=\bar{T}-\mu_T(t+\Delta) - e^{-\kappa_T \Delta}(T_t -\mu_T(t))$ and  $b=-\sigma_T k_T(\Delta)$ to get
    \begin{equation}\label{calc_LemD2b}
        \begin{aligned}
            \E\left( \sigma_T  \int_t^{t+\Delta} e^{-\kappa_T(t+\Delta-s)}dW^T_s (\bar{T}-T_{t+\Delta})^+ \bigg| \mathcal{F}_{t} \right ) 
            &=  -\sigma_T^2 k_T(\Delta)^2 \Phi\left(\frac{\bar{T}-\mu_T(t+\Delta) - e^{-\kappa_T \Delta}(T_t -\mu_T(t))}{\sigma_T k_T(\Delta)} \right)
        \end{aligned}
    \end{equation}
\end{proof}

\subsection{Results for static hedging portfolios}

\subsubsection{Results for $\mathcal{E}$-HDD}

\begin{prop}\label{prop:E_T_2}
Under  Model~\eqref{eq:comb}, we have
\begin{equation*}
    \begin{aligned}
        \E[((\bar{T}-T_{t+\Delta})^+)^2\mid \cF_t] &=\Big((\bar{T}- \mu_T(t+\Delta) - e^{-\kappa_T\Delta} (T_t - \mu_T(t) ))^2+ \sigma_T^2 k_T(\Delta)^2\Big)\times \\
        & \Phi \Big( \frac{\Bar{T} - \mu_T(t+\Delta) - e^{-\kappa_T\Delta} (T_t - \mu_T(t) )}{\sigma_T k_T(\Delta)} \Big) \\
        & + \frac{(\Bar{T} - \mu_T(t+\Delta) - e^{-\kappa_T\Delta} (T_t - \mu_T(t) ))\sigma_T k_T(\Delta)}{\sqrt{2 \pi}}\times \\
        &\exp\Big(-\frac{1}{2} \Big(\frac{\Bar{T} - \mu_T(t+\Delta) - e^{-\kappa_T\Delta} (T_t - \mu_T(t) )}{\sigma_T k_T(\Delta)}\Big)^2 \Big). 
    \end{aligned}
\end{equation*}

\end{prop}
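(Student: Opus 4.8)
The plan is to observe that, exactly as in Proposition~\ref{prop:E_T_positive}, the quantity $\bar{T}-T_{t+\Delta}$ is conditionally Gaussian given $\cF_t$, and then to invoke the second-moment identity of Lemma~\ref{lemma:gauss_comb} in place of the first-moment identity of Lemma~\ref{lemma:partie_positive}.

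First I would recall from the integrated dynamics~\eqref{eq:comb_int} that
\[
T_{t+\Delta}-\mu_T(t+\Delta)=e^{-\kappa_T\Delta}(T_t-\mu_T(t))+\sigma_T\int_t^{t+\Delta}e^{-\kappa_T(t+\Delta-u)}\,dW^T_u.
\]
By the It\^o isometry (as already used in Proposition~\ref{prop:vector}), the stochastic integral is, conditionally on $\cF_t$, centered Gaussian with variance $k_T(\Delta)^2$, where $k_T$ is defined in~\eqref{eq:k}. Consequently $\bar{T}-T_{t+\Delta}$ is, given $\cF_t$, distributed as $\mathcal{N}(a,b^2)$ with
\[
a=\bar{T}-\mu_T(t+\Delta)-e^{-\kappa_T\Delta}(T_t-\mu_T(t)),\qquad b=\sigma_T k_T(\Delta)>0,
\]
which is precisely the conditional law identified in the proof of Proposition~\ref{prop:E_T_positive}.

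Second, writing $\bar{T}-T_{t+\Delta}=a+bG$ with $G\sim\mathcal{N}(0,1)$, I would apply Lemma~\ref{lemma:gauss_comb} directly to obtain
\[
\E\big[((\bar{T}-T_{t+\Delta})^+)^2\mid\cF_t\big]=\E\big[((a+bG)^+)^2\big]=(a^2+b^2)\Phi(a/b)+\frac{ab}{\sqrt{2\pi}}\,e^{-a^2/(2b^2)}.
\]
Substituting the above expressions for $a$ and $b$, and noting that $a/b$ equals the argument of $\Phi$ appearing in the statement, yields the claimed formula after no further manipulation.

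There is no genuine obstacle here: all of the analytic work is already packaged into Lemma~\ref{lemma:gauss_comb}, whose own proof is a single integration by parts against the Gaussian density. The only point requiring care is the correct identification of the conditional mean $a$ and variance $b^2$, which is immediate from~\eqref{eq:comb_int} and mirrors Proposition~\ref{prop:E_T_positive} verbatim; the present result is simply its second-moment analogue.
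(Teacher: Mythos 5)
Your proof is correct and follows essentially the same route as the paper's: the paper likewise observes that $\bar{T}-T_{t+\Delta}$ is conditionally Gaussian given $\cF_t$ and then invokes Lemma~\ref{lemma:gauss_comb}. The only difference is that you spell out the identification of the conditional mean and variance from~\eqref{eq:comb_int}, which the paper leaves implicit.
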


\begin{proof}
Since $\bar{T}-T_{t+\Delta}$ follows a Gaussian distribution given $\cF_t$, we have an explicit formula by using Lemma~\ref{lemma:gauss_comb}.
\end{proof}

\begin{prop} \label{prop:eX1}
Under  Model~\eqref{eq:comb}, we have
\begin{align*}
    \E ( e^{X_{t+\Delta}} \mathbbm{1}_{T_{t+\Delta} \leq u}  \mid \mathcal{F}_t) &= \psi_{X_t}(-i;\Delta)  \Phi\Big( \frac{\tilde{u}(T_t)}{k_T(\Delta) } - \lambda \sigma_T \frac{k_{XT}^2(\Delta)}{k_{T}(\Delta)} \Big).
\end{align*}
\begin{align*}
    \E ( e^{X_{t+\Delta}} \mathbbm{1}_{ u \leq T_{t+\Delta} }  \mid \mathcal{F}_t) =  \psi_{X_t}(-i;\Delta)  \Phi\Big( \lambda \sigma_T \frac{k_{XT}^2(\Delta)}{k_{T}(\Delta)} - \frac{\tilde{u}(T_t)}{k_T(\Delta) } \Big).
\end{align*}
where $\psi$ is the characteristic function defined in Equation~\eqref{eq:charac_nig_update},  $k_T(\cdot)$ , $k_X(\cdot)$ and $k_{XT}(\cdot)$ are as in Equation~\eqref{eq:k} and $\tilde{u}(T_t) = \frac{u - (\mu_T(t+\Delta) + e^{-\kappa_T \Delta}(T_t-\mu_T(t))) }{\sigma_T}$.
\end{prop}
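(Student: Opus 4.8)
The plan is to write both $X_{t+\Delta}$ and $T_{t+\Delta}$ in terms of the Gaussian stochastic integrals $I_X := \int_t^{t+\Delta} e^{-\kappa_X(t+\Delta-v)} dW^T_v$ and $I_T := \int_t^{t+\Delta} e^{-\kappa_T(t+\Delta-v)} dW^T_v$ together with the L\'evy integral $J := \int_t^{t+\Delta} e^{-\kappa_X(t+\Delta-v)} dL^X_v$, using the integrated dynamics~\eqref{eq:comb_int}. Since $W^T$ and $L^X$ are independent, $J$ is independent of $\mathcal{F}_t$ and of $(I_X, I_T)$, while $(I_X,I_T)$ is the centered Gaussian vector of Proposition~\ref{prop:vector}. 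The first move is to observe that, conditionally on $\mathcal{F}_t$, the event $\{T_{t+\Delta}\le u\}$ rewrites as $\{I_T\le \tilde{u}(T_t)\}$, directly from the expression for $T_{t+\Delta}$ and the definition of $\tilde{u}(T_t)$.

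Next I would factor $e^{X_{t+\Delta}} = \exp\big(\mu_X(t+\Delta)+e^{-\kappa_X\Delta}(X_t-\mu_X(t))\big)\, e^{\lambda\sigma_T I_X}\, e^{J}$ and pull the deterministic ($\mathcal{F}_t$-measurable) prefactor out of the expectation. Independence of $J$ then splits the expectation into $\E[e^{J}] = \varphi(-i;\Delta)$ (by~\eqref{eq_def_phi}) times $\E[e^{\lambda\sigma_T I_X}\mathbbm{1}_{I_T\le \tilde{u}(T_t)}]$, the latter being the genuine computational core.

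To evaluate this last term I would use the explicit representation of Proposition~\ref{prop:vector}: writing $I_T = k_T(\Delta)\, G$ and $I_X = k_X(\Delta)\varrho\,G + k_X(\Delta)\sqrt{1-\varrho^2}\,G^\perp$ with $G,G^\perp$ independent standard normals and $\varrho=k_{XT}^2(\Delta)/(k_X(\Delta)k_T(\Delta))$, the constraint becomes $G\le \tilde{u}(T_t)/k_T(\Delta)$. The $G^\perp$-factor integrates to $\exp\big(\tfrac12\lambda^2\sigma_T^2 k_X^2(\Delta)(1-\varrho^2)\big)$ by the Gaussian moment generating function, and for the $G$-factor I would complete the square in $\E[e^{bG}\mathbbm{1}_{G\le d}] = e^{b^2/2}\Phi(d-b)$ with $b=\lambda\sigma_T k_X(\Delta)\varrho$ and $d=\tilde{u}(T_t)/k_T(\Delta)$. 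Using $k_X(\Delta)\varrho = k_{XT}^2(\Delta)/k_T(\Delta)$, the two exponential pieces combine into $\exp\big(\tfrac12\lambda^2\sigma_T^2 k_X^2(\Delta)\big)$ and leave the factor $\Phi\big(\tilde{u}(T_t)/k_T(\Delta)-\lambda\sigma_T k_{XT}^2(\Delta)/k_T(\Delta)\big)$.

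Finally, I would recognize that the deterministic prefactor times $\varphi(-i;\Delta)$ times $\exp\big(\tfrac12\lambda^2\sigma_T^2 k_X^2(\Delta)\big)$ is exactly $\psi_{X_t}(-i;\Delta)$ as defined in~\eqref{eq:charac_nig_update} (recalling $k_X^2(\Delta)=\frac{1-e^{-2\kappa_X\Delta}}{2\kappa_X}$ and that the $u^2$-term yields a $+\tfrac12\lambda^2\sigma_T^2 k_X^2(\Delta)$ at $u=-i$), which yields the first identity. The second identity follows by the identical argument with $\mathbbm{1}_{I_T\ge \tilde{u}(T_t)}$ in place of $\mathbbm{1}_{I_T\le \tilde{u}(T_t)}$, using $\E[e^{bG}\mathbbm{1}_{G\ge d}]=e^{b^2/2}\Phi(b-d)$, equivalently $1-\Phi(d-b)=\Phi(b-d)$. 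The only real obstacle is the truncated Gaussian moment computation of the third step; everything else is bookkeeping, and the key conceptual point is that the correlation between $I_X$ and $I_T$ shifts the argument of $\Phi$ by exactly $\lambda\sigma_T k_{XT}^2(\Delta)/k_T(\Delta)$.
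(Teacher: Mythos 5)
Your proposal is correct and follows essentially the same route as the paper's proof: factor out the $\mathcal{F}_t$-measurable prefactor and the independent NIG integral (giving $\varphi(-i;\Delta)$), rewrite $\{T_{t+\Delta}\le u\}$ as $\{I_T\le \tilde u(T_t)\}$, apply the Gaussian decomposition of Proposition~\ref{prop:vector} with the identity $\E[e^{xG}\mathbbm{1}_{G\le a}]=e^{x^2/2}\Phi(a-x)$, and recombine the exponential factors into $\psi_{X_t}(-i;\Delta)$. If anything, your write-up is slightly more explicit than the paper's at the final recombination step (checking that the prefactor times $e^{\frac12\lambda^2\sigma_T^2k_X^2(\Delta)}\varphi(-i;\Delta)$ equals $\psi_{X_t}(-i;\Delta)$ at $u=-i$) and in treating the second identity via $1-\Phi(d-b)=\Phi(b-d)$.
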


\begin{proof}

From~\eqref{eq:comb_int}, we can write

\begin{align*}
    \E ( e^{X_{t+\Delta}} \mathbbm{1}_{T_{t+\Delta} \leq u}  \mid \mathcal{F}_t) &= \exp\Bigg(\mu_X(t) + e^{-\kappa_X \Delta}(X_t-\mu_X(t)) \Bigg) \E(e^{  \int_t^{t+\Delta} e^{-\kappa_X(t+\Delta-u)}dL^X_u }) \times \\
    &\E\Bigg(e^{\lambda \sigma_T \int_t^{t+\Delta} e^{-\kappa_X(t+\Delta-u)}  dW^T_u} \mathbbm{1}_{T_{t+\Delta} \leq u} \Bigg| \mathcal{F}_t\Bigg) 
\end{align*}
The second term is $\varphi(-i;\Delta)$, see~\eqref{eq_def_phi}. We now consider the last term:
\begin{align*}
    \E\Bigg(e^{\lambda \sigma_T \int_t^{t+\Delta} e^{-\kappa_X(t+\Delta-s)}  dW^T_s} \mathbbm{1}_{T_{t+\Delta} \leq u} \Bigg| \mathcal{F}_t\Bigg) 	&= \E\Bigg(e^{\lambda \sigma_T \int_t^{t+\Delta} e^{-\kappa_X(t+\Delta-s)}  dW^T_s} \mathbbm{1}_{ \int_t^{t+\Delta} e^{-\kappa_X(t+\Delta-s)}  dW^T_s  \leq \tilde{u}(T_t) } \Bigg| \mathcal{F}_t\Bigg) 
\end{align*}
where $\tilde{u}(T_t) = \frac{u - (\mu_T(t+\Delta) + e^{-\kappa_T \Delta}(T_t-\mu_T(t))) }{\sigma_T}  $. For $\tilde{u} \in \R$ and $G,G^{\perp}\sim \mathcal{N}(0,1)$ independent, we have:
\begin{align*}
    \E\Bigg(e^{\lambda \sigma_T(\frac{k_{XT}^2(\Delta)}{k_{T}(\Delta)}  G +  \frac{\sqrt{k_{X}(\Delta)^2 k_{T}(\Delta)^2 -k_{XT}^2(\Delta)}}{k_{T}(\Delta)}  G^{\perp})} \mathbbm{1}_{  k_T(\Delta) G \leq \tilde{u}  }  \Bigg) 
    	&= e^{ \frac{\lambda^2 \sigma_T^2}{2} \frac{k_{X}(\Delta)^2 k_{T}(\Delta)^2 -k_{XT}^4(\Delta)}{k_{T}^2(\Delta)} }  \times\\
     &e^{ \frac{(\lambda \sigma_T )^2}{2} (\frac{k_{XT}^2(\Delta)}{k_{T}(\Delta)})^2} \Phi\Big( \frac{\tilde{u}}{k_T(\Delta) } - \lambda \sigma_T \frac{k_{XT}^2(\Delta)}{k_{T}(\Delta)} \Big),
\end{align*}
because $\E[e^{xG}\mathbbm{1}_{G\le a}]=e^{\frac{x^2}{2}} \Phi(a-x)$ for $x,a\in \R$. Since $\tilde{u}(T_t)$ is $\mathcal{F}_t$-measurable and the variables $\int_t^{t+\Delta} e^{-\kappa_X(t+\Delta-u)}  dW^T_u$ and $\int_t^{t+\Delta} e^{-\kappa_T(t+\Delta-u)}  dW^T_u$ are independent of $\mathcal{F}_t$, we get the claim by applying Proposition~\ref{prop:vector}. 
\end{proof}

\begin{prop} \label{prop:eS2}
Under  Model~\eqref{eq:comb}, we have
\begin{align*}
    \E ( e^{2 X_{t+\Delta}} \mathbbm{1}_{T_{t+\Delta} \leq u}  \mid \mathcal{F}_t) &= \exp\Bigg(2 \mu_X(t+\Delta) + 2e^{-\kappa_X \Delta}(X_t-\mu_X(t)) \Bigg) \times \\
    & \exp \Bigg( 2 m^X \frac{1- e^{-\kappa_X\Delta}}{\kappa_X} + \delta^X \gamma^X \Delta - \delta^X \int_{0}^{\Delta} \sqrt{(\alpha^X)^2- (\beta^X+ 2 e^{-\kappa_X(\Delta- v)})^2} dv \Bigg) \times \\
    & e^{ 4 \frac{\lambda^2 \sigma_T^2}{2} k_{X}(\Delta)^2 }  \Phi \Big( \frac{\tilde{u}(T_t)}{k_T(\Delta) } - 2 \lambda \sigma_T \frac{k_{XT}^2(\Delta)}{k_{T}(\Delta)} \Big).
\end{align*}
where $k_T(\cdot)$ , $k_X(\cdot)$ and $k_{XT}(\cdot)$ are as in Equation~\eqref{eq:k} and $\tilde{u}(T_t) = \frac{u - (\mu_T(t+\Delta) + e^{-\kappa_T \Delta}(T_t-\mu_T(t))) }{\sigma_T}$.
\end{prop}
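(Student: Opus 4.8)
The plan is to follow the same route as the proof of Proposition~\ref{prop:eX1}, replacing the single exponential $e^{X_{t+\Delta}}$ by $e^{2X_{t+\Delta}}$ and tracking how the factor~$2$ propagates through the computation. First I would use the integrated dynamics~\eqref{eq:comb_int} to write
\begin{align*}
e^{2X_{t+\Delta}} &= \exp\Big(2\mu_X(t+\Delta) + 2e^{-\kappa_X\Delta}(X_t - \mu_X(t))\Big) \\
&\quad \times \exp\Big(2\int_t^{t+\Delta} e^{-\kappa_X(t+\Delta-u)} dL^X_u\Big) \exp\Big(2\lambda\sigma_T\int_t^{t+\Delta} e^{-\kappa_X(t+\Delta-u)} dW^T_u\Big),
\end{align*}
and pull the deterministic prefactor out of the conditional expectation. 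Since $L^X$ is independent of both $\mathcal{F}_t$ and $W^T$, and $T_{t+\Delta}$ is a function of $\mathcal{F}_t$ and $\int_t^{t+\Delta} e^{-\kappa_T(t+\Delta-u)} dW^T_u$, the conditional expectation factorizes into the product of the Laplace transform of the Lévy integral evaluated at the real argument $2$ and a Gaussian expectation carrying the indicator $\mathbbm{1}_{T_{t+\Delta}\le u}$.

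For the Lévy term I would apply~\eqref{eq_def_phi}--\eqref{eq:charac_nig} with $iu$ replaced by $2$, i.e. evaluate $\exp\big(\int_t^{t+\Delta}\psi(2e^{-\kappa_X(t+\Delta-v)}) dv\big)$ with the cumulant $\psi$; after the change of variable $v\mapsto v-t$ this produces exactly the second line of the statement. The only genuine caveat here is integrability: the square root $\sqrt{(\alpha^X)^2 - (\beta^X + 2e^{-\kappa_X(\Delta-v)})^2}$ is well defined and $\E[e^{2X_{t+\Delta}}]<\infty$ precisely when $\alpha^X > \beta^X + 2$, so this moment condition is what has to be checked, and is the main point to flag.

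For the Gaussian term I would invoke Proposition~\ref{prop:vector} to represent $(\int_t^{t+\Delta} e^{-\kappa_X(t+\Delta-u)} dW^T_u, \int_t^{t+\Delta} e^{-\kappa_T(t+\Delta-u)} dW^T_u)$ through the independent normals $G, G^\perp$, rewrite $\{T_{t+\Delta}\le u\}$ as $\{k_T(\Delta) G \le \tilde{u}(T_t)\}$, and split the expectation into the $G^\perp$-factor, evaluated by $\E[e^{xG^\perp}]=e^{x^2/2}$, and the $G$-factor, evaluated by $\E[e^{xG}\mathbbm{1}_{G\le a}]=e^{x^2/2}\Phi(a-x)$, now with $x$ proportional to $2\lambda\sigma_T$ rather than $\lambda\sigma_T$. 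The last arithmetic step, which is the only place demanding care, is combining the two resulting exponentials: the $G^\perp$-contribution $2\lambda^2\sigma_T^2 (k_X^2 k_T^2 - k_{XT}^4)/k_T^2$ and the $G$-contribution $2\lambda^2\sigma_T^2 k_{XT}^4/k_T^2$ add up, the $k_{XT}^4$ cross-terms cancel, and one is left with $\exp(2\lambda^2\sigma_T^2 k_X(\Delta)^2)$, matching the factor $e^{4\frac{\lambda^2\sigma_T^2}{2} k_X(\Delta)^2}$ in the statement, while the shift inside $\Phi$ becomes $\tilde{u}(T_t)/k_T(\Delta) - 2\lambda\sigma_T k_{XT}^2(\Delta)/k_T(\Delta)$. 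Assembling the three factors then yields the claim.
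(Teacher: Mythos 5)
Your proposal is correct and follows essentially the same route as the paper's proof: the same factorization of $e^{2X_{t+\Delta}}$ via the integrated dynamics~\eqref{eq:comb_int}, the same evaluation of the L\'evy factor through the cumulant formula~\eqref{eq_def_phi}, and the same Gaussian computation via Proposition~\ref{prop:vector} together with the identity $\E[e^{xG}\mathbbm{1}_{G\le a}]=e^{x^2/2}\Phi(a-x)$, including the final cancellation of the $k_{XT}^4$ terms. If anything, you are slightly more careful than the paper: you correctly identify the L\'evy factor as the cumulant integral evaluated at $2$ (the paper's proof sloppily writes $\varphi(-i;\Delta)$) and you flag the exponential-moment condition $|\beta^X+2|\le\alpha^X$ needed for $\E[e^{2X_{t+\Delta}}]<\infty$, which the paper leaves implicit.
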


\begin{proof}

From~\eqref{eq:comb_int}, we can write
\begin{align*}
    \E ( e^{2 X_{t+\Delta}} \mathbbm{1}_{T_{t+\Delta} \leq u}  \mid \mathcal{F}_t) &= \exp\Bigg(2\mu_X(t+\Delta) + 2e^{-\kappa_X \Delta}(X_t-\mu_X(t)) \Bigg) \E(e^{  2\int_t^{t+\Delta} e^{-\kappa_X(t+\Delta-u)}dL^X_u }) \times \\
    &\E\Bigg(e^{2 \lambda \sigma_T \int_t^{t+\Delta} e^{-\kappa_X(t+\Delta-u)}  dW^T_u} \mathbbm{1}_{T_{t+\Delta} \leq u} \Bigg| \mathcal{F}_t\Bigg) 
\end{align*}
The second term is $\varphi(-i;\Delta)$ by~\eqref{eq_def_phi}, and the last one is
\begin{align*}
    \E\Bigg(e^{2 \lambda \sigma_T \int_t^{t+\Delta} e^{-\kappa_X(t+\Delta-s)}  dW^T_s} \mathbbm{1}_{T_{t+\Delta} \leq u} \Bigg| \mathcal{F}_t\Bigg) 	&= \E\Bigg(e^{2 \lambda \sigma_T \int_t^{t+\Delta} e^{-\kappa_X(t+\Delta-s)}  dW^T_s} \mathbbm{1}_{  \int_t^{t+\Delta} e^{-\kappa_X(t+\Delta-s)}  dW^T_s \leq \tilde{u}(T_t) } \Bigg| \mathcal{F}_t\Bigg) 
\end{align*}
where $\tilde{u}(T_t) = \frac{u - (\mu_T(t+\Delta) + e^{-2 \kappa_T \Delta}(T_t-\mu_T(t))) }{\sigma_T}  $. We now calculate for $\tilde{u} \in \R$, and $G,G^{\perp}\sim \mathcal{N}(0,1)$ independent: 
\begin{align*}
    \E\Bigg(e^{2\lambda \sigma_T(\frac{k_{XT}^2(\Delta)}{k_{T}(\Delta)}  G +  \frac{\sqrt{k_{X}(\Delta)^2 k_{T}(\Delta)^2 -k_{XT}^2(\Delta)}}{k_{T}(\Delta)}  G^{\perp})} \mathbbm{1}_{  k_T(\Delta) G \leq \tilde{u}  }  \Bigg) 
    	&= e^{ \frac{4 \lambda^2 \sigma_T^2}{2} \frac{k_{X}(\Delta)^2 k_{T}(\Delta)^2 -k_{XT}^4(\Delta)}{k_{T}^2(\Delta)} }  \times\\
     &e^{ \frac{(2 \lambda \sigma_T )^2}{2} (\frac{k_{XT}^2(\Delta)}{k_{T}(\Delta)})^2} \Phi \left( \frac{\tilde{u}}{k_T(\Delta) } - 2 \lambda \sigma_T \frac{k_{XT}^2(\Delta)}{k_{T}(\Delta)} \right).
\end{align*}
Since $\tilde{u}(T_t)$ is $\mathcal{F}_t$-measurable and the variables $\int_t^{t+\Delta} e^{-\kappa_X(t+\Delta-u)}  dW^T_u$ and $\int_t^{t+\Delta} e^{-\kappa_T(t+\Delta-u)}  dW^T_u$ are independent of $\mathcal{F}_t$, we get the claim by applying Proposition~\ref{prop:vector}. 
\end{proof}

\begin{prop}\label{prop:S_T^2}
Under  Model~\eqref{eq:comb}, we have
$$\E[S_{t+\Delta}((\bar{T}-T_{t+\Delta})^+)^2|\mathcal{F}_t] = 2 \int_{T^0}^{\bar{T}} (\bar{T}-u) \E[ S_{t+\Delta}\mathbbm{1}_{u\le T_{t+\Delta}} |\mathcal{F}_t]du,$$ 
with $T_0=-\infty$.
\end{prop}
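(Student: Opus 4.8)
The plan is to reduce the quantity $\E[S_{t+\Delta}((\bar{T}-T_{t+\Delta})^+)^2\mid\cF_t]$ to the single-indicator expectation $\E[S_{t+\Delta}\mathbbm{1}_{\{T_{t+\Delta}\le u\}}\mid\cF_t]$, which has already been computed in closed form in Proposition~\ref{prop:eX1}. The bridge is a deterministic layer-cake identity for the square of a positive part. First I would record that, for $\bar{T}$ fixed and any $y\in\R$,
\begin{equation*}
((\bar{T}-y)^+)^2 = 2\int_{T^0}^{\bar{T}} (\bar{T}-u)\,\mathbbm{1}_{\{y\le u\}}\,du, \qquad T^0=-\infty,
\end{equation*}
which is verified by splitting on the sign of $\bar{T}-y$: if $y\ge\bar{T}$ both sides vanish (the set $\{y\le u\le\bar{T}\}$ is negligible), while if $y<\bar{T}$ the right-hand side equals $2\int_y^{\bar{T}}(\bar{T}-u)\,du=(\bar{T}-y)^2$. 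Note that the operative event here is $\{y\le u\}$, i.e.\ $\{T_{t+\Delta}\le u\}$ once $y=T_{t+\Delta}$, which is precisely the one appearing in the first formula of Proposition~\ref{prop:eX1}.

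Substituting $y=T_{t+\Delta}$ and multiplying through by $S_{t+\Delta}=e^{X_{t+\Delta}}>0$ gives the pathwise identity
\begin{equation*}
S_{t+\Delta}\big((\bar{T}-T_{t+\Delta})^+\big)^2 = 2\int_{T^0}^{\bar{T}} (\bar{T}-u)\,S_{t+\Delta}\,\mathbbm{1}_{\{T_{t+\Delta}\le u\}}\,du.
\end{equation*}
Taking $\E[\cdot\mid\cF_t]$ on both sides and interchanging the conditional expectation with the $du$-integral yields exactly the stated formula, the inner conditional expectation being furnished by Proposition~\ref{prop:eX1}.

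The only non-routine point, and hence the main obstacle, is to justify this interchange over the unbounded domain $(-\infty,\bar{T}]$. I would do this by the (conditional) Tonelli theorem: the integrand $(\bar{T}-u)\,S_{t+\Delta}\,\mathbbm{1}_{\{T_{t+\Delta}\le u\}}$ is nonnegative on $\{u\le\bar{T}\}$ since $S_{t+\Delta}>0$ and $\bar{T}-u\ge0$ there, so the order of integration may be swapped unconditionally, the common value being the left-hand side, a priori possibly $+\infty$. Finiteness then follows from a Cauchy–Schwarz bound $\E[S_{t+\Delta}((\bar{T}-T_{t+\Delta})^+)^2\mid\cF_t]\le \E[S_{t+\Delta}^2\mid\cF_t]^{1/2}\,\E[((\bar{T}-T_{t+\Delta})^+)^4\mid\cF_t]^{1/2}$, whose first factor is finite because $X_{t+\Delta}$ has finite exponential moments (indeed $\E[S_{t+\Delta}^2\mid\cF_t]=\psi_{X_t}(-2i;\Delta)<\infty$, cf.\ Equation~\eqref{eq:charac_nig_update}) and whose second factor is finite because $T_{t+\Delta}$ is conditionally Gaussian with all polynomial moments finite. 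This guarantees the common value is finite and the identity holds as an equality of finite quantities.
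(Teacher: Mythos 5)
Your proof is correct and takes essentially the same route as the paper's: both rest on the pathwise identity $((\bar{T}-T_{t+\Delta})^+)^2 = 2\int_{T^0}^{\bar{T}}(\bar{T}-u)\,\mathbbm{1}_{\{T_{t+\Delta}\le u\}}\,du$ (the paper obtains it by writing the square as a symmetrized double integral, you by direct case analysis and antidifferentiation) followed by interchanging conditional expectation and the $du$-integral, with your version supplying the Tonelli/finiteness justification that the paper leaves implicit. One useful by-product of your care with the direction of the inequality: both your conclusion and the paper's own proof end with the indicator $\mathbbm{1}_{\{T_{t+\Delta}\le u\}}$, which confirms that the $\mathbbm{1}_{\{u\le T_{t+\Delta}\}}$ printed in the proposition's displayed statement is a typo.
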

\noindent We write this result with $T^0$, because for numerical purposes we use $T^0=-273.15$ or $T^0=-100$. Note that  $\E[ S_{t+\Delta}\mathbbm{1}_{u\le T_{t+\Delta}} |\mathcal{F}_t]$ can be calculated by using Proposition~\ref{prop:eX1}.
\begin{proof}
    We have 
    \begin{align*}
        ((\bar{T}-T_{t+\Delta})^+)^2 &=\int_{T^0}^{\bar{T}}\int_{T^0}^{\bar{T}} \mathbbm{1}_{ T_{t+\Delta} \le u}\mathbbm{1}_{ T_{t+\Delta} \le v} du dv\\
        &=2\int_{T^0}^{\bar{T}}\int_{T^0}^{\bar{T}} \mathbbm{1}_{ T_{t+\Delta} \le u}\mathbbm{1}_{ T_{t+\Delta} \le v} \mathbbm{1}_{u\le v} du dv =2\int_{T^0}^{\bar{T}} (\bar{T}-u)\mathbbm{1}_{ T_{t+\Delta} \le u} du,
    \end{align*} 
    and therefore
    $$\E[S_{t+\Delta}((\bar{T}-T_{t+\Delta})^+)^2|\mathcal{F}_t]= 2 \int_{T^0}^{\bar{T}} (\bar{T}-u) \E[ S_{t+\Delta}\mathbbm{1}_{ T_{t+\Delta} \le u} |\mathcal{F}_t]du. \qedhere$$
\end{proof}

\subsubsection{Results for two-sided quantos}
\begin{prop}\label{prop:CallS_T^2}
Under  Model~\eqref{eq:comb}, we have the following Taylor decomposition:
\begin{equation*}
    \begin{aligned}
        \E[(S_{t+\Delta}-\Bar{S})^+((\bar{T}-T_{t+\Delta})^+)^2|\mathcal{F}_t] &= \E_{\lambda=0}[(S_{t+\Delta}-\Bar{S})^+|\mathcal{F}_t]\E[((\bar{T}-T_{t+\Delta})^+)^2|\mathcal{F}_t] \\ &-  \lambda \Big( \E_{\lambda=0}[(S_{t+\Delta}-\Bar{S})^+|\mathcal{F}_t]+\Bar{S}\mathbb{P}_{\lambda=0}[(S_{t+\Delta}-\Bar{S})^+|\mathcal{F}_t] \Big) \times\\
        &\sigma_T^3 k_T(\Delta) k_{XT}^2(\Delta) \Big( \sqrt{\frac{2}{\pi}} e^{-\frac{1}{2}(\frac{\Bar{T} - \mu(t+\Delta) -e^{-\kappa_T }\Delta (T_t - \mu(t))}{\sigma_T k_T(\Delta)} )^2} \\
        &+2\frac{\Bar{T} - \mu(t+\Delta) -e^{-\kappa_T }\Delta (T_t - \mu(t))}{\sigma_T k_T(\Delta)} \times \\
        &  \Phi \Big(\frac{\Bar{T} - \mu(t+\Delta) -e^{-\kappa_T }\Delta (T_t - \mu(t))}{\sigma_T k_T(\Delta)}\Big) \Big) + o(\lambda)
    \end{aligned} 
\end{equation*}
where   $k_T(\cdot)$, $k_{XT}(\cdot)$  are as in Equation~\eqref{eq:k}.
\end{prop}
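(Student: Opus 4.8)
The plan is to follow exactly the scheme of the proof of Proposition~\ref{prop:Taylor_expansion}: first evaluate the expression at $\lambda=0$ using independence, then differentiate once in $\lambda$ at $\lambda=0$, and finally invoke Taylor's formula. The only structural difference with Proposition~\ref{prop:Taylor_expansion} is that the temperature payoff $(\bar{T}-T_{t+\Delta})^+$ is here replaced by its square, so the relevant one-dimensional Gaussian identity becomes Lemma~\ref{lemma:gauss_comb2} in place of Lemma~\ref{lemma:gauss_comb0}, and the zeroth-order temperature factor becomes the quantity computed in Proposition~\ref{prop:E_T_2}.

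For the zeroth-order term, I would use that in Model~\eqref{eq:comb} the variable $T_{t+\Delta}$ carries no $\lambda$-dependence, while from~\eqref{eq:comb_int} the only $\lambda$-dependence of $X_{t+\Delta}$ enters through the term $\lambda\sigma_T\int_t^{t+\Delta}e^{-\kappa_X(t+\Delta-u)}dW^T_u$. Hence at $\lambda=0$ the variable $S_{t+\Delta}=e^{X_{t+\Delta}}$ is a function of $\int_t^{t+\Delta}e^{-\kappa_X(t+\Delta-u)}dL^X_u$ only, which is independent of $W^T$ and therefore of $T_{t+\Delta}$ given $\mathcal{F}_t$. This yields the factorization
$$\E_{\lambda=0}[(S_{t+\Delta}-\bar{S})^+((\bar{T}-T_{t+\Delta})^+)^2\mid\mathcal{F}_t]=\E_{\lambda=0}[(S_{t+\Delta}-\bar{S})^+\mid\mathcal{F}_t]\,\E[((\bar{T}-T_{t+\Delta})^+)^2\mid\mathcal{F}_t],$$
whose second factor is given by Proposition~\ref{prop:E_T_2}.

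For the first-order term, I would differentiate under the conditional expectation. Since $\partial_\lambda X_{t+\Delta}=\sigma_T\int_t^{t+\Delta}e^{-\kappa_X(t+\Delta-u)}dW^T_u$, $\tfrac{d}{dx}(x-\bar{S})^+=\mathbbm{1}_{x\ge\bar{S}}$, and $T_{t+\Delta}$ is $\lambda$-free, this gives
$$\left.\frac{d}{d\lambda}\right|_{\lambda=0}\E[\cdots\mid\mathcal{F}_t]=\E_{\lambda=0}\!\left[\mathbbm{1}_{S_{t+\Delta}\ge\bar{S}}\,S_{t+\Delta}\,\sigma_T\!\int_t^{t+\Delta}\!e^{-\kappa_X(t+\Delta-u)}dW^T_u\,((\bar{T}-T_{t+\Delta})^+)^2\,\Big|\,\mathcal{F}_t\right].$$
At $\lambda=0$ the factor $\mathbbm{1}_{S_{t+\Delta}\ge\bar{S}}S_{t+\Delta}$ depends only on $L^X$, hence is independent of the remaining factor which depends only on $W^T$; the first factor then splits as $\E_{\lambda=0}[(S_{t+\Delta}-\bar{S})^+\mid\mathcal{F}_t]+\bar{S}\,\mathbb{P}_{\lambda=0}(S_{t+\Delta}\ge\bar{S}\mid\mathcal{F}_t)$ exactly as in Proposition~\ref{prop:Taylor_expansion}. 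For the Gaussian factor I would proceed as in~\eqref{calc_LemD2a}: by Proposition~\ref{prop:vector} (equivalently Corollary~\ref{corollary:projection}) the component of $\int_t^{t+\Delta}e^{-\kappa_X(t+\Delta-u)}dW^T_u$ orthogonal to $\int_t^{t+\Delta}e^{-\kappa_T(t+\Delta-u)}dW^T_u$ has mean zero and is independent of $T_{t+\Delta}$, so only the regression coefficient $k_{XT}^2(\Delta)/k_T^2(\Delta)$ survives. This reduces the computation to $\E[-bG((a+bG)^+)^2]$ with $a=\bar{T}-\mu_T(t+\Delta)-e^{-\kappa_T\Delta}(T_t-\mu_T(t))$ and $b=-\sigma_Tk_T(\Delta)$, where $-bG=\sigma_T\int_t^{t+\Delta}e^{-\kappa_T(t+\Delta-u)}dW^T_u$. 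Applying Lemma~\ref{lemma:gauss_comb2} and collecting the constant $\sigma_T^3k_T(\Delta)k_{XT}^2(\Delta)$ then produces the stated derivative (with the correct sign), and Taylor's formula gives the claim.

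The main obstacle is the rigorous justification of differentiating under the conditional expectation, since $(\cdot-\bar{S})^+$ is only Lipschitz and not $C^1$. I would handle it by writing the difference quotient and dominating it uniformly for $\lambda$ in a neighborhood of $0$, using that $S_{t+\Delta}=e^{X_{t+\Delta}}$ has finite conditional exponential moments (as already exploited for~\eqref{eq:charac_nig}) together with the Gaussian integrability of the Brownian integrals, so that dominated convergence applies; this is the same point implicitly used in the proof of Proposition~\ref{prop:Taylor_expansion}. The remaining steps are the routine Gaussian computation of Lemma~\ref{lemma:gauss_comb2} and the bookkeeping of the constants.
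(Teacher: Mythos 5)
Your proposal is correct and follows essentially the same route as the paper's own proof: the same factorization at $\lambda=0$ via independence of $L^X$ and $W^T$ (using Proposition~\ref{prop:E_T_2}), the same differentiation at $\lambda=0$ with the split $\E_{\lambda=0}[S_{t+\Delta}\mathbbm{1}_{S_{t+\Delta}\ge\bar{S}}\mid\mathcal{F}_t]=\E_{\lambda=0}[(S_{t+\Delta}-\bar{S})^+\mid\mathcal{F}_t]+\bar{S}\,\mathbb{P}_{\lambda=0}(S_{t+\Delta}\ge\bar{S}\mid\mathcal{F}_t)$, and the same projection via Proposition~\ref{prop:vector} followed by Lemma~\ref{lemma:gauss_comb2} with $a=\bar{T}-\mu_T(t+\Delta)-e^{-\kappa_T\Delta}(T_t-\mu_T(t))$ and $b=-\sigma_T k_T(\Delta)$. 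Your additional remark on justifying the interchange of differentiation and conditional expectation by domination (exponential moments of $S_{t+\Delta}$ and Gaussian integrability) is a point the paper leaves implicit, but it does not change the argument.
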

\noindent Note that $\E_{\lambda=0}[(S_{t+\Delta}-\Bar{S})^+|\mathcal{F}_t]$ and $\mathbb{P}_{\lambda=0} \left(   S_{t+\Delta} \geq \bar{S}   \mid \mathcal{F}_{t} \right )$ are computed through Equation~\eqref{eq:carr_madan} and \eqref{eq:gil-pelaez} respectively, and that $\E_{\lambda=0}[((\bar{T}-T_{t+\Delta})^+)^2|\mathcal{F}_t]$ can be calculated by using Proposition~\ref{prop:E_T_2}.
\begin{proof}
As for $\E[(S_{t+\Delta}-\Bar{S})^+]$ we will perform a Taylor decomposition of $\E[(S_{t+\Delta}-\Bar{S})^+((\bar{T}-T_{t+\Delta})^+)^2|\mathcal{F}_t]$.
Let consider the 0 order term,
$$\E_{\lambda=0}[(S_{t+\Delta}-\Bar{S})^+((\bar{T}-T_{t+\Delta})^+)^2|\mathcal{F}_t] = \E_{\lambda=0}[(S_{t+\Delta}-\Bar{S})^+|\mathcal{F}_t]\E[((\bar{T}-T_{t+\Delta})^+)^2|\mathcal{F}_t].$$
Now, let us compute the derivative at $\lambda= 0$:
\begin{equation*}
        \begin{aligned}
            \left. {\frac{d }{d \lambda}} \right|_{\lambda=0}  \E\left[(S_{t+\Delta}-\Bar{S})^+((\bar{T}-T_{t+\Delta})^+)^2\bigg|\mathcal{F}_t\right]= \E\left[ \sigma_T \int_t^{t+\Delta} e^{-\kappa_X(t+\Delta-v)} dW^T_vS_{t+\Delta} \mathbbm{1}_{S_{t+\Delta} \geq \bar{S}} ((\bar{T}-T_{t+\Delta})^+)^2\bigg|\mathcal{F}_t\right] \\
            = \mathbb{E}_{\lambda=0} \left(   S_{t+\Delta} \mathbbm{1}_{S_{t+\Delta} \geq \bar{S} }  \mid \mathcal{F}_{t} \right ) \E\left[ \sigma_T \int_t^{t+\Delta} e^{-\kappa_X(t+\Delta-v)} dW^T_v ((\bar{T}-T_{t+\Delta})^+)^2 \bigg|\mathcal{F}_t \right],
        \end{aligned}
    \end{equation*}
and $\mathbb{E}_{\lambda=0} [  S_{t+\Delta} \mathbbm{1}_{S_{t+\Delta} \geq \bar{S} } |\mathcal{F}_t]=\E_{\lambda=0}[(S_{t+\Delta}-\Bar{S})^+|\mathcal{F}_t]+\Bar{S}\mathbb{P}_{\lambda=0}[(S_{t+\Delta}-\Bar{S})^+|\mathcal{F}_t]$.\\
By Proposition~\ref{prop:vector}, we have
$$\E[ \sigma_T \int_t^{t+\Delta} e^{-\kappa_X(t+\Delta-v)} dW^T_v ((\bar{T}-T_{t+\Delta})^+)^2|\mathcal{F}_t]=\frac{k_{XT}^2(\Delta)}{k_{T}^2(\Delta)}\E\left[ \sigma_T \int_t^{t+\Delta} e^{-\kappa_X(t+\Delta-v)} dW^T_v ((\bar{T}-T_{t+\Delta})^+)^2\big|\mathcal{F}_t\right].$$ We can now use Lemma~\ref{lemma:gauss_comb2} with $a=\Bar{T} - \mu(t+\Delta) -e^{-\kappa_T \Delta } (T_t - \mu(t))$ and $b=-\sigma_T k_T(\Delta)$ to get
\begin{equation*}
    \begin{aligned}
        \E \left[ \sigma_T \int_t^{t+\Delta} e^{-\kappa_T(t+\Delta-v)} dW^T_v ((\bar{T}-T_{t+\Delta})^+)^2 \bigg|\mathcal{F}_t\right] &= - \sigma_T^3  k_T(\Delta)^3 \Big( \sqrt{\frac{2}{\pi}} e^{-\frac{1}{2}(\frac{\Bar{T} - \mu(t+\Delta) -e^{-\kappa_T \Delta} (T_t - \mu(t))}{\sigma_T k_T(\Delta)})^2} \\
        &+2\frac{\Bar{T} - \mu(t+\Delta) -e^{-\kappa_T \Delta} (T_t - \mu(t))}{\sigma_T k_T(\Delta)} \times\\
        &  \Phi \Big(\frac{\Bar{T} - \mu(t+\Delta) -e^{-\kappa_T \Delta} (T_t - \mu(t))}{\sigma_T k_T(\Delta)}\Big) \Big)  \qedhere
    \end{aligned}
\end{equation*}
\end{proof}

\begin{prop}\label{prop:S^2_T}
Under  Model~\eqref{eq:comb}, we have the following Taylor expansion:
\begin{equation*}
    \begin{aligned}
        \E[((S_{t+\Delta}-\Bar{S})^+)^2(\bar{T}-T_{t+\Delta})^+|\mathcal{F}_t] &=
        \E_{\lambda=0}[((S_{t+\Delta}-\Bar{S})^+)^2|\mathcal{F}_t]\E[(\bar{T}-T_{t+\Delta})^+|\mathcal{F}_t] \\
        & - 2   \Big(\E_{\lambda=0}[((S_{t+\Delta}-\Bar{S})^+)^2|\mathcal{F}_t]+\Bar{S}\E_{\lambda=0}[(S_{t+\Delta}-\Bar{S})^+|\mathcal{F}_t] \Big) \times\\
        &\sigma_T^2 k_{XT}(\Delta)^2 \Phi\Big(\frac{\bar{T}-\mu_T(t+\Delta) - e^{-\kappa_T \Delta}(T_t -\mu_T(t))}{\sigma_T k_T(\Delta)}\Big) \lambda + o(\lambda)
    \end{aligned}
\end{equation*}
where  $k_T(\cdot)$ and $k_{XT}(\cdot)$ are  as in Equation~\eqref{eq:k}. 
\end{prop}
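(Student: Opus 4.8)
The plan is to mirror exactly the pattern used in the proofs of Propositions~\ref{prop:Taylor_expansion} and~\ref{prop:CallS_T^2}: evaluate the quantity at $\lambda=0$, then differentiate once in $\lambda$ at $\lambda=0$, and read off the first-order expansion. At $\lambda=0$ the two dynamics in~\eqref{eq:comb_int} decouple, since the term $\lambda\sigma_T\int_t^{t+\Delta}e^{-\kappa_X(t+\Delta-v)}dW^T_v$ disappears and $L^X$ is independent of $W^T$; hence $S_{t+\Delta}$ and $T_{t+\Delta}$ are conditionally independent given $\cF_t$, and the zero-order term factorizes as $\E_{\lambda=0}[((S_{t+\Delta}-\bar{S})^+)^2|\cF_t]\,\E[(\bar{T}-T_{t+\Delta})^+|\cF_t]$, the two factors being available from Proposition~\ref{prop_carrmadan_sq} and Proposition~\ref{prop:E_T_positive} respectively.

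Next I would differentiate in $\lambda$ under the conditional expectation. The only $\lambda$-dependence sits in $X_{t+\Delta}$, through $\partial_\lambda X_{t+\Delta}=\sigma_T\int_t^{t+\Delta}e^{-\kappa_X(t+\Delta-v)}dW^T_v$, and since $x\mapsto((e^x-\bar{S})^+)^2$ is $C^1$ with derivative $2e^x(e^x-\bar{S})^+$, the chain rule gives
$$\left.\frac{d}{d\lambda}\right|_{\lambda=0}\E[((S_{t+\Delta}-\bar{S})^+)^2(\bar{T}-T_{t+\Delta})^+|\cF_t]=\E_{\lambda=0}\!\left[2S_{t+\Delta}(S_{t+\Delta}-\bar{S})^+\Big(\sigma_T\!\int_t^{t+\Delta}\!e^{-\kappa_X(t+\Delta-v)}dW^T_v\Big)(\bar{T}-T_{t+\Delta})^+\,\Big|\,\cF_t\right].$$
At $\lambda=0$ the factor $S_{t+\Delta}(S_{t+\Delta}-\bar{S})^+$ depends only on $L^X$ (and $\cF_t$), while the remaining factor depends only on $W^T$ (and $\cF_t$), so the expectation splits into a product. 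I would then use the algebraic identity $S_{t+\Delta}(S_{t+\Delta}-\bar{S})^+=((S_{t+\Delta}-\bar{S})^+)^2+\bar{S}(S_{t+\Delta}-\bar{S})^+$ to rewrite the electricity factor as $\E_{\lambda=0}[((S_{t+\Delta}-\bar{S})^+)^2|\cF_t]+\bar{S}\,\E_{\lambda=0}[(S_{t+\Delta}-\bar{S})^+|\cF_t]$, matching the combination in the statement.

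The Brownian factor $\E[\sigma_T\int_t^{t+\Delta}e^{-\kappa_X(t+\Delta-v)}dW^T_v\,(\bar{T}-T_{t+\Delta})^+|\cF_t]$ is precisely the quantity already evaluated in~\eqref{calc_LemD2a}--\eqref{calc_LemD2b}: projecting the $\kappa_X$-integral onto the $\kappa_T$-integral via Proposition~\ref{prop:vector} produces the factor $k_{XT}^2(\Delta)/k_T^2(\Delta)$, and Lemma~\ref{lemma:gauss_comb0} applied with $a=\bar{T}-\mu_T(t+\Delta)-e^{-\kappa_T\Delta}(T_t-\mu_T(t))$ and $b=-\sigma_T k_T(\Delta)$ yields $-\sigma_T^2k_{XT}^2(\Delta)\,\Phi\big(a/(\sigma_T k_T(\Delta))\big)$. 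Substituting this together with the rewritten electricity factor, and keeping the factor $2$ from the chain rule, gives exactly the coefficient of $\lambda$ in the statement. The only delicate point is the interchange of $d/d\lambda$ with $\E[\cdot\,|\cF_t]$; this is where a dominated-convergence argument is required, but it is made comfortable here by the $C^1$ smoothness of $((e^x-\bar{S})^+)^2$, the integrability $\E[S_{t+\Delta}^{2+\alpha}|\cF_t]<\infty$ noted after~\eqref{eq:carr_madan}, and the Gaussian integrability of the Brownian integral.
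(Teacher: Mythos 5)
Your proposal is correct and follows essentially the same route as the paper's proof: factorization of the zero-order term by conditional independence at $\lambda=0$, differentiation of $((e^x-\bar{S})^+)^2$ under the conditional expectation to produce $2S_{t+\Delta}(S_{t+\Delta}-\bar{S})^+\,\sigma_T\int_t^{t+\Delta}e^{-\kappa_X(t+\Delta-v)}dW^T_v\,(\bar{T}-T_{t+\Delta})^+$, splitting into the $L^X$-factor and the Brownian factor, and evaluating the latter exactly as in \eqref{calc_LemD2a}--\eqref{calc_LemD2b} via Proposition~\ref{prop:vector} and Lemma~\ref{lemma:gauss_comb0}. Your only additions—making explicit the identity $S(S-\bar{S})^+=((S-\bar{S})^+)^2+\bar{S}(S-\bar{S})^+$ and flagging the dominated-convergence justification for differentiating under the expectation—are refinements of steps the paper leaves implicit.
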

\noindent Note that $\E_{\lambda=0}[(S_{t+\Delta}-\Bar{S})^+|\mathcal{F}_t]$ and $\E_{\lambda=0}[((S_{t+\Delta}-\Bar{S})^+)^2|\mathcal{F}_t]$ can be computed through Equation~\eqref{eq:carr_madan}  and~\eqref{eq:carr_madan_square} respectively, $\E_{\lambda=0}[(\bar{T}-T_{t+\Delta})^+|\mathcal{F}_t]$ is given by Proposition~\ref{prop:E_T_positive} and   $\E_{\lambda=0}[((\bar{T}-T_{t+\Delta})^+)^2|\mathcal{F}_t]$ can be calculated by using  Proposition~\ref{prop:E_T_2}.
\begin{proof}
For $\lambda=0$, we have
$$\E_{\lambda=0}[((S_{t+\Delta}-\Bar{S})^+)^2(\bar{T}-T_{t+\Delta})^+|\mathcal{F}_t] = \E_{\lambda=0}[((S_{t+\Delta}-\Bar{S})^+)^2|\mathcal{F}_t]\E_{\lambda=0}[(\bar{T}-T_{t+\Delta})^+|\mathcal{F}_t].$$
Now, let us compute the derivative in $\lambda=0$:
\begin{equation*}
        \begin{aligned}
            \left. {\frac{d }{d \lambda}} \right|_{\lambda=0}  \E[((S_{t+\Delta}-\Bar{S})^+)^2(\bar{T}-T_{t+\Delta})^+|\mathcal{F}_t] = \E_{\lambda=0}\bigg[ \sigma_T \int_t^{t+\Delta} e^{-\kappa_X(t+\Delta-v)} dW^T_v\times\\
            2 S_{t+\Delta}(S_{t+\Delta}-\Bar{S})^+ (\Bar{T}-T_{t+\Delta})^+\bigg|\mathcal{F}_t\bigg] \\
         =2 \E_{\lambda=0}[S_{t+\Delta}(S_{t+\Delta}-\Bar{S})^+|\mathcal{F}_t] \times \\
         \E\left[ \sigma_T \int_t^{t+\Delta} e^{-\kappa_X(t+\Delta-v)} dW^T_v (\Bar{T}-T_{t+\Delta})^+\big| \mathcal{F}_t\right].
        \end{aligned}
\end{equation*}
The calculation of $\E[ \sigma_T \int_t^{t+\Delta} e^{-\kappa_X(t+\Delta-v)} dW^T_v (\Bar{T}-T_{t+\Delta})^+|\mathcal{F}_t]$ has already been done in Equations~\eqref{calc_LemD2a} and~\eqref{calc_LemD2b}.
\end{proof}

\end{document}